\documentclass{SciPost}

\hypersetup{
 colorlinks,
 linkcolor={red!50!black},
 citecolor={blue!50!black},
 urlcolor={blue!80!black}
}

\usepackage[bitstream-charter]{mathdesign}
\usepackage{amsthm}
\usepackage{longtable}
\usepackage{tabularx}
\usepackage{booktabs}
\usepackage{tikz}
\usetikzlibrary{positioning, fit, calc, backgrounds, arrows.meta,
 decorations.pathmorphing, decorations.markings,
 decorations.pathreplacing}

\DeclareSymbolFont{usualmathcal}{OMS}{cmsy}{m}{n}
\DeclareSymbolFontAlphabet{\mathcal}{usualmathcal}

\newcommand{\Cl}{\mathrm{Cl}}
\newcommand{\Spin}{\mathrm{Spin}}
\newcommand{\SOgroup}{\mathrm{SO}}
\newcommand{\SLgroup}{\mathrm{SL}}
\newcommand{\Vflav}{V_{\mathrm{flav}}}
\newcommand{\evMV}{\mathcal{E}}
\newcommand{\grade}[2]{\langle #1\rangle_{#2}}
\newcommand{\reverse}[1]{\widetilde{#1}}
\newcommand{\Hlhc}{H_{\mathrm{LHC}}}
\newcommand{\MET}{E_T^{\mathrm{miss}}}
\newcommand{\netRef}{\textsc{Ref}}
\newcommand{\netGa}{\textsc{Ga}}

\fancypagestyle{SPstyle}{
\fancyhf{}
\lhead{\colorbox{scipostblue}{\bf \color{white} ~SciPost Physics }}
\rhead{{\bf \color{scipostdeepblue} ~Submission }}

\fancyfoot[C]{\textbf{\thepage}}
}

\newtheorem{theorem}{Theorem}

\begin{document}

\pagestyle{SPstyle}

\begin{center}{\Large \textbf{\color{scipostdeepblue}{
Geometric algebra as the input language\\
of collider foundation models
}}}\end{center}

\begin{center}\textbf{
E.~Abasov\textsuperscript{1},
L.~Dudko\textsuperscript{1$\star$},
F.~Grigoryev\textsuperscript{1},
P.~Volkov\textsuperscript{1} and
A.~Zaborenko\textsuperscript{1}
}\end{center}

\begin{center}
{\bf 1} Skobeltsyn Institute of Nuclear Physics,
Lomonosov Moscow State University (SINP MSU),
1(2) Leninskie gory, GSP-1, Moscow 119991, Russian Federation
\\[\baselineskip]
$\star$ \href{mailto:dudko@sinp.msu.ru}{\small dudko@sinp.msu.ru}
\end{center}

\section*{\color{scipostdeepblue}{Abstract}}
\textbf{\boldmath{%
A hard hadron-collider event is represented here as a single geometric
object: the kinematics and object-type labels of all reconstructed
final-state particles in one multivector $\evMV$, one
$\Cl(1,3)\otimes\Vflav$ slot per object, whose higher grades the algebra
generates from the measured four-momenta --- rather than as a list of
four-momenta with label fields attached. The setting is geometric
algebra~\cite{Doran:2003,Hestenes:1984,Lounesto:2001}, whose grade
decomposition organises essentially every observable in current use:
invariants at grade zero, four-momenta at grade one, decay-plane
bivectors at grade two, oriented three-volumes at grade three, the
CP-odd pseudoscalar at grade four. The high-level
invariants of~\cite{Boos:2008sdz}, the low-level recipe
of~\cite{Dudko:2020qas} and the equivariant-network inputs
of~\cite{Bogatskiy:2020tje, Bogatskiy:2022czk, brehmer2023geometric,
Spinner:2024hjm, Brehmer:2024yqw} are recovered as projections onto
specific grades. A per-grade dictionary of $32$ classical observables and an inventory
of the symmetries acting on $\evMV$ are provided.
Theorem~\ref{lemma:cm-collapse} settles which Lorentz invariants the
higher grades unlock: none beyond $\{p_i\!\cdot\!p_j,\,m_i^2\}$, the
CP-odd sign of the pseudoscalar being the one genuine channel. The representation is intended as a uniform input layer for foundation
models of collider physics, and the realisations demonstrated here
occupy two rungs of a ladder: the higher grades are formed inside the
network from the per-object slots, or selected grade-two and
grade-three objects are supplied at the input layer. It is demonstrated
on the resonance-topology separation of
$pp\!\to\!tWb$~\cite{Boos:2023kpp,Boos:2020rqy,Baskakov:2019rbq} with a
Lorentz-equivariant multivector transformer of
L-GATr type~\cite{Brehmer:2024yqw}. An ablation over eight input representations separates what supplying
algebraic content explicitly can and cannot buy: content reconstructible
from the measured momenta unlocks no new invariant, by the theorem,
whereas a fixed reference blade encoding the beam plane carries
structure the momenta do not contain. Which sectors a task reads is a
property of the process, so the representation carries them all.
}}

\vspace{\baselineskip}

\noindent\textcolor{white!90!black}{%
\fbox{\parbox{0.975\linewidth}{%
\textcolor{white!40!black}{\begin{tabular}{lr}%
 \begin{minipage}{0.6\textwidth}%
 {\small Copyright attribution to authors. \newline
 This work is a submission to SciPost Physics. \newline
 License information to appear upon publication. \newline
 Publication information to appear upon publication.}
 \end{minipage} & \begin{minipage}{0.4\textwidth}
 {\small Received Date \newline Accepted Date \newline Published Date}%
 \end{minipage}
\end{tabular}}
}}
}


\vspace{10pt}
\noindent\rule{\textwidth}{1pt}
\tableofcontents
\noindent\rule{\textwidth}{1pt}
\vspace{10pt}

\section{Introduction}
\label{sec:intro}

The performance and interpretability of a neural network trained
on collider data depend on what it is allowed to \emph{see} --- on
the input representation --- at least as much as on the
architecture itself.

\paragraph{The event as a single geometric object.}
The first message of this paper is that a hard hadron-collider event
should be regarded as a single geometric object --- one graded
multivector per event --- rather than as a list of four-momenta
accompanied by discrete labels for charge, $b$-tag, lepton flavour and
missing transverse momentum. This is a statement about the representation, realised operationally in
Sec.~\ref{sec:evmv-def}; which grades a given network is handed at its
input layer, rather than building them itself, is an implementation
choice measured in Sec.~\ref{sec:ablation}. The
existing high-level invariant
recipes~\cite{Boos:2008sdz}, low-level
four-momentum recipes~\cite{Dudko:2020qas}, and the equivariant
networks of refs.~\cite{Bogatskiy:2020tje, Bogatskiy:2022czk,
brehmer2023geometric, de2023euclidean, Spinner:2024hjm,
Brehmer:2024yqw} all act on disjoint subsets of this object: the
first two answer \emph{which scalar features to feed the network},
the third answers \emph{how to make the network respect Lorentz
symmetry without dictating the features}, and in all three the
discrete reconstructed information sits outside the kinematic
vector space.

\paragraph{Geometric algebra is the natural framework.}
The second message is that geometric
algebra~\cite{Doran:2003,Hestenes:1984,Lounesto:2001} is the
natural mathematical framework for this view, and that essentially
all observables routinely used in collider analyses are already
organised by its grade decomposition. A reconstructed event is represented as a single multivector $\evMV$,
defined in Eq.~\eqref{eq:evMV} with one $\Cl(1,3)\otimes\Vflav$ slot
per reconstructed object. The Lorentz-invariant
scalar inputs of refs.~\cite{Boos:2008sdz, Dudko:2020qas} are
recovered as the grade-zero sector of $\evMV$; lab-frame and
helicity-frame angular variables sit in the beam-tensored grade-one
sector together with the explicit reference vectors that define each
frame; the raw covariant coefficients of the higher-grade
components --- already adopted as the input channels of L-GATr
\cite{Brehmer:2024yqw} --- are the natural input layer for
Lorentz-equivariant networks. The construction is a common
notational framework, not new physical content.

\paragraph{Towards foundation models for collider events.}
The third message is that this representation is the substrate on
which a foundation model for collider physics can be built: a
single multivector type $\evMV$, the same per-grade decomposition
across processes, and a uniform fine-tuning interface to
classification, regression and CP-asymmetry tasks
(Sec.~\ref{sec:outlook}).

\paragraph{From conceptual framework to practical architecture.}
The conceptual content of this paper --- the unified algebraic
representation, the result that the higher grades unlock no new
Lorentz-invariant scalar (Theorem~\ref{lemma:cm-collapse}, which follows
because the squared magnitude of a $k$-blade is a Gram determinant of
inner products) and the grade-resolved fine-tuning interface --- is
independent of any particular network. Throughout, a \emph{token} is one input slot of the network, holding a
stack of channels; each channel is a full $16$-component multivector of
$\Cl(1,3)$, whose $1{+}4{+}6{+}4{+}1$ grade cells are \emph{storage}
rather than content. The particle tokens are the slots of
eq.~\eqref{eq:evMV} themselves; event-level and reference tokens extend
the sequence in the same format. Which cells arrive populated is
precisely the input choice under test
(App.~\ref{app:tokens} gives the layouts number by number). The architectural content
sits in Sec.~\ref{sec:tWb}, where the $pp\to tWb$ demonstration
combines two ingredients. The first is per-particle multivector
tokens --- one multivector $T_i$ per reconstructed object $i$, in the
manner of L-GATr~\cite{Brehmer:2024yqw} --- encoding each object as
$T_i = \grade{T_i}{0}\oplus\grade{T_i}{1}$, writing $\grade{X}{k}$
for the grade-$k$ part of a multivector $X$. The second ingredient is event-level pairing tokens that
append grade-two and grade-three candidate-pairing content of the
multi-resonance topology at the input layer, realising the
event-as-multivector view at the input rather than only through
internal self-attention. The ablation of Sec.~\ref{sec:ablation} measures what each of these
input choices is worth on this benchmark, and the outcome is not the
one we anticipated: the topology tokens do not help, while a fixed
grade-two reference bivector encoding the beam plane does. The
benchmark was chosen deliberately minimal --- the entire class
difference is a single resonant three-particle combination, one
grade-three object --- so pruning everything else can only help this
particular task, and the measurement must not be read as a verdict on
explicit higher-grade inputs in general. What it does establish is a
distinction between algebraic content derivable from the measured
momenta and content that is not; which derived components matter is a
property of the process, and a representation intended to serve many
processes must carry the full graded inventory precisely because each
task reads a different sector of it.

\paragraph{Why geometric algebra.}
Three properties of $\Cl(1,3)$ explain why a single algebraic
framework, rather than a feature concatenation across recipes, is
the right input for a foundation model of collider events.
\emph{(i)~Closure under
$\Spin^+(1,3)$ --- the group of even, unit-norm elements of the algebra,
the double cover of the proper orthochronous Lorentz group, acting on
multivectors by the two-sided rotor product --- at the input level:} every
equivariant bilinear map on multivector inputs is a grade
projection of the geometric product (Sec.~\ref{sec:ops}), so the
algebra carries the inductive bias of Lorentz-covariant networks
without auxiliary machinery.
\emph{(ii)~No new invariants above grade one:} the higher-grade
magnitudes add no Lorentz-invariant scalar beyond the inner products and
masses $\{p_i\!\cdot\!p_j,\,m_i^2\}$ (Theorem~\ref{lemma:cm-collapse}); the
only genuinely new Lorentz-invariant content is the
one-bit CP-odd sign of the pseudoscalar.
\emph{(iii)~Per-grade pre-training surface:} the grade
decomposition (eq.~\eqref{eq:grade-decomp}) supplies a fixed grade
interface for every downstream task (Sec.~\ref{sec:outlook}).

The collapse theorem underlying point~(ii) extends the classical
Euclidean Cayley--Menger / Gram-determinant
identity~\cite{Cayley:1841, Menger:1928} to the Lorentzian signature,
in the spirit of the simplex-realizability analysis of
ref.~\cite{Tate:2011rm} (Theorem~\ref{lemma:cm-collapse}); the relation of the resulting CP-odd
sign to earlier CP analyses~\cite{Atwood:2000tu} is discussed in
Sec.~\ref{sec:grades}.

The remaining content of the paper is organised around $\evMV$ as
the input specification of a foundation-model program for collider
events~\cite{Mikuni:2024qsr}: an explicit per-grade dictionary of
classical observables, the discrete-symmetry catalogue of the
object-type factor $\Vflav$, and a grade-resolved pre-training
interface to downstream tasks (Sec.~\ref{sec:outlook}).

Sec.~\ref{sec:primer} fixes conventions; Sec.~\ref{sec:grades}
works through the grade-by-grade dictionary of the kinematic factor;
Sec.~\ref{sec:ops} catalogues the operations on $\evMV$
with their physical interpretation;
Sec.~\ref{sec:symm} treats the symmetry inventory;
Sec.~\ref{sec:flav} catalogues the discrete object-type factor
$\Vflav$ of the event multivector and the handling of missing
transverse momentum;
Sec.~\ref{sec:tWb} presents the $pp\to tWb$ demonstration;
Sec.~\ref{sec:outlook} outlines the
foundation-model program;
Sec.~\ref{sec:limits} discusses limitations and future directions;
Sec.~\ref{sec:summary} summarises the contributions.

\section{Geometric algebra in $\Cl(1,3)$}
\label{sec:primer}

The remainder of the paper builds on a small subset of the
geometric algebra $\Cl(1,3)$ of Minkowski
spacetime~\cite{Doran:2003,Hestenes:1984,Lounesto:2001}. We restate
here exactly the facts that we use later, in the form most directly
tied to the multivector representation of a collider event. A reader
familiar with $\Cl(1,3)$ may skim this section and consult
Appendix~\ref{app:notation} for the full notation card; algebraic
identities used in the body are derived in
Appendix~\ref{app:derivations}.

\paragraph{Convention block.}
The Clifford algebra $\Cl(1,3)$ of Minkowski spacetime is used
throughout, with metric signature $(+,-,-,-)$ and an orthonormal
grade-one basis $\{\gamma_0,\gamma_1,\gamma_2,\gamma_3\}$ satisfying
\begin{equation}
 \gamma_\mu \gamma_\nu + \gamma_\nu \gamma_\mu
 \;=\; 2\, \eta_{\mu\nu},
 \qquad \eta = \mathrm{diag}(+,-,-,-).
 \label{eq:cliff-anticomm}
\end{equation}
The geometric product of two grade-one elements splits into a
symmetric inner part reproducing the Minkowski metric,
$p\!\cdot\!q = \tfrac12(pq+qp)$, and an antisymmetric outer part,
$p\wedge q = \tfrac12(pq-qp)$. The explicit decomposition $pq = p\!\cdot\!q + p\wedge q$ is
eq.~\eqref{eq:gp-split}, with component form eq.~\eqref{eq:gp-components}. The notation card collects, once and for all, the basis
conventions, the Hodge dual $\star$, the bivector sign rules and the
grade-projection notation used throughout;  The proper
orthochronous Lorentz group acts on multivectors through the
rotor sandwich
\begin{equation}
 X \;\mapsto\; R\, X\, \reverse{R},
 \qquad
 R \;=\; \exp(B/2),
 \quad B \in \grade{\Cl(1,3)}{2},
 \quad R\,\reverse{R} = 1,
 \label{eq:rotor-sandwich}
\end{equation}
with $R$ in the spin cover $\Spin^+(1,3) \simeq \SLgroup(2,\mathbb C)$.
The pseudoscalar $I = \gamma_0\gamma_1\gamma_2\gamma_3$ ($I^2 = -1$),
the reverse $\reverse{X}$, the grade projection $\grade{X}{k}$, the
Hodge-type dual $X^\star \equiv X\, I^{-1}$ and the corresponding
inverse-dual $X^{-\star} \equiv X\, I$, together with the bivector
sign rules and the canonical decomposition of a generic bivector
into commuting rotation and boost planes, are collected in the notation
card.

\paragraph{The algebra $\Cl(1,3)$.}
The algebra splits into homogeneous components by grade,
\begin{equation}
 \Cl(1,3) \;=\; \bigoplus_{k=0}^{4}\, \grade{\Cl(1,3)}{k},
 \qquad
 \dim \grade{\Cl(1,3)}{k} \;=\; \binom{4}{k},
 \qquad
 1 + 4 + 6 + 4 + 1 \;=\; 16.
 \label{eq:grade-decomp}
\end{equation}
Each grade carries a direct physical reading
(Fig.~\ref{fig:grade-ladder}): grade zero is a Lorentz scalar,
e.g.\ a Mandelstam invariant $p_i\!\cdot\!p_j$; grade one is a
four-momentum; grade two is an oriented decay plane $p_i\wedge p_j$;
grade three is an oriented 3-volume
$p_i\wedge p_j\wedge p_k$, equivalently a covariant four-vector via
the dual $\star$; grade four
is the pseudoscalar coefficient
$\epsilon_{\mu\nu\rho\sigma}\, p_1^\mu p_2^\nu p_3^\rho p_4^\sigma$, with
$\epsilon_{\mu\nu\rho\sigma}$ the totally antisymmetric Levi-Civita
symbol; it is a CP-odd one-bit observable through its sign, since an
odd number of index permutations --- equivalently a spatial inversion
--- reverses it. The squared magnitude
of any $r$-blade equals the $r\times r$ Gram determinant of the
inner products of the participating four-momenta
(Theorem~\ref{lemma:cm-collapse}); the
meet $\cap$ and the join $\cup$, derived from the dual and the outer
product, are treated together with the other operations on $\evMV$
in Sec.~\ref{sec:ops}. We adopt the high-energy convention
$(+,-,-,-)$ throughout; the alternative signature $(-,+,+,+)$
realises the algebra $\Cl(3,1) \simeq M_4(\mathbb R)$, which is
physically equivalent to $\Cl(1,3) \simeq M_2(\mathbb H)$ at the
level of observable quantities --- the two algebras share the same
even subalgebra $\Cl^+(1,3) \simeq \Cl^+(3,1)$ and the same spin
cover $\Spin^+(1,3) \simeq \SLgroup(2,\mathbb C)$ --- but they are
not isomorphic as real associative algebras~\cite[Tab.~16.4]{Lounesto:2001},
and we keep $\Cl(1,3)$ for consistency with the high-energy
literature. The conformal extension $\Cl(2,4)$ is a distinct algebra
that lifts the grade bound at the cost of two extra basis vectors,
deferred as an algebraic extension in Sec.~\ref{sec:limits}.

\begin{figure}[t]
\centering
\begin{tikzpicture}[
 every node/.style={font=\footnotesize, align=center, draw,
 rounded corners=2pt, minimum height=0.55cm,
 line width=0.4pt, inner sep=3pt},
 g0/.style={fill=blue!8, minimum width=1.5cm},
 g1/.style={fill=cyan!10, minimum width=6.0cm},
 g2/.style={fill=green!10, minimum width=9.0cm},
 g3/.style={fill=orange!10, minimum width=6.0cm},
 g4/.style={fill=red!10, minimum width=1.5cm}
]
 \node[g0] (g0) at (0, 4.4) {$\grade{\Cl(1,3)}{0}$, $\dim = 1$ \\ Mandelstam $p_i\!\cdot\!p_j$};
 \node[g1] (g1) at (0, 3.2) {$\grade{\Cl(1,3)}{1}$, $\dim = 4$ \\ four-momentum $p_i^\mu$};
 \node[g2] (g2) at (0, 2.0) {$\grade{\Cl(1,3)}{2}$, $\dim = 6$ \\ decay plane $p_i \wedge p_j$};
 \node[g3] (g3) at (0, 0.8) {$\grade{\Cl(1,3)}{3}$, $\dim = 4$ \\ oriented 3-volume $p_i \wedge p_j \wedge p_k$};
 \node[g4] (g4) at (0,-0.4) {$\grade{\Cl(1,3)}{4}$, $\dim = 1$ \\ pseudoscalar coefficient (CP-odd sign)};
 \draw[<->, dashed, black!60]
 (g0.east) to[out=0, in=0, looseness=2.0]
 node[draw=none, fill=none, midway, right=1pt, text=black!70] {$\star$}
 (g4.east);
 \draw[<->, dashed, black!60]
 (g1.east) to[out=0, in=0, looseness=1.5]
 node[draw=none, fill=none, midway, right=1pt, text=black!70] {$\star$}
 (g3.east);
\end{tikzpicture}
\caption{Grade ladder of $\Cl(1,3)$. The five homogeneous components
have dimensions $(1, 4, 6, 4, 1)$, the binomial coefficients summing
to $2^4 = 16$; each grade carries a definite physical reading on
collider four-momenta. The Hodge dual $X^\star \equiv X\,I^{-1}$
pairs grade $k$ with grade $4 - k$ (dashed arcs on the right).}
\label{fig:grade-ladder}
\end{figure}

\subsection{Relation to symmetry-aware architectures}
\label{sec:baselines}

Where a network places the symmetry is a design choice with three
established answers, and the present work is a statement about the
first of them, so it is worth separating them before the construction
begins.

\emph{Invariance in the aggregation.} The PELICAN
family~\cite{Bogatskiy:2022czk, Bogatskiy:2023nnw}
feeds a network the Lorentz invariants of the input --- the array of
inner products --- and aggregates permutation-equivariantly.
LorentzNet~\cite{Gong:2022lye} builds its messages from the same
invariants but carries equivariant four-vector channels through its
layers, placing it between this family and the next.
The result is exactly invariant by construction and needs no
equivariant layers. Its ceiling is set by
Theorem~\ref{lemma:cm-collapse}: what such a network cannot see is the
one channel not reducible to inner products, the orientation sign of
the grade-four pseudoscalar. This line has a direct internal analogue
in our study, the grade-zero-only rung of
Tab.~\ref{tab:tWb-auc}, which loses $0.00422\pm0.00025$ in area under
the ROC curve to grade-$0\oplus1$ four-momenta in the same architecture
($151\,489$ against $149\,409$ parameters --- the invariants-only network
is the marginally larger one, so capacity cannot explain the loss) --- a
measurement of what the covariant input buys over invariants alone,
with everything else held fixed.

\emph{Equivariance in the layers.} The Lorentz Group Network
(LGN)~\cite{Bogatskiy:2020tje}, Clifford-group equivariant
networks~\cite{Ruhe:2023rqc}, $\mathrm E(n)$-equivariant graph
networks~\cite{Satorras:2021aa} and the Lorentz-equivariant
geometric-algebra transformer line~\cite{brehmer2023geometric,
Spinner:2024hjm, Brehmer:2024yqw} build the group action into the
weights, so covariant quantities may be carried through the network.
These are the closest neighbours of the present work and we differ
from them in emphasis, not in mechanism: they specify layers, we
specify what the input tensor should contain, and
Sec.~\ref{sec:ablation} is a measurement of that specification with the
layers held fixed. L-GATr~\cite{Brehmer:2024yqw} already carries fixed
reference multivectors alongside the per-particle ones, which is
precisely the ingredient our ablation finds to be the useful one.

\emph{Symmetry left to the data.} Deep-sets architectures on
infrared-safe observables~\cite{Komiske:2018cqr}, point-cloud
convolutions with augmentation~\cite{Qu:2019gqs}, and attention with a
pairwise-invariant bias~\cite{Qu:2022mxj, Wu:2024thh} impose no exact
group action and are competitive on jet-level benchmarks. They are not alternatives to a representation claim but a reminder of
its scope. An inductive bias has to earn its place against a model that
simply learns the symmetry, which is why Sec.~\ref{sec:ablation}
reports the outcome for our own higher-grade inputs even though it is
unfavourable.

A fourth line addresses the combinatorial problem our demonstration
exposes rather than the symmetry: symmetry-preserving attention
networks assign jets to parents directly~\cite{Fenton:2020woz,
Shmakov:2021qdz}, without any algebraic structure. We do not benchmark against them. The reason is not cost --- the
ablation of Sec.~\ref{sec:ablation} is forty trainings of this network.
It is that the two tasks do not share a figure of merit: those networks are scored
on assignment accuracy against a truth permutation, whereas the
demonstration here is a topology classification in which no assignment
label appears. Retargeting their architecture to our objective would
measure our tuning of someone else's model. The comparison that can be
made cleanly is the one we make, varying the input representation with
architecture, objective and data fixed.

\subsection{The event multivector}
\label{sec:evmv-def}

For a reconstructed event with $N$ final-state detector objects
indexed by $i = 1,\dots,N$, with four-momenta $p_i \in \Cl(1,3)$
and discrete object-type labels $f_i \in \mathbb F$, the \emph{event multivector} is the direct sum
\begin{equation}
 \evMV \;\equiv\;
 \bigoplus_{i=1}^{N}\;
 p_i \;\otimes\; |f_i\rangle
 \quad \in\quad
 \bigl[\Cl(1,3)\;\otimes\;\Vflav\bigr]^{\oplus N},
 \label{eq:evMV}
\end{equation}
where $\Vflav = \mathrm{span}_{\mathbb R}\{\,|f\rangle :
f\in\mathbb F\,\}$ is a finite-dimensional real vector space,
trivial under the Lorentz group, that records the discrete
object-type label of each detector object. The direct sum runs over object slots $i$, one copy of
$\Cl(1,3)\otimes\Vflav$ per reconstructed object, so two objects
sharing a type label keep their individual momenta. Implementations
realise the slots as the $N$ input tokens of the network.

Equation~\eqref{eq:evMV} is the grade-one ingredient only: a typed list
of four-vectors, and as a direct sum it carries no product between
different slots. What supplies one is that every slot's kinematic factor
lives in one and the same copy of $\Cl(1,3)$: writing
$\pi_i:\evMV\mapsto p_i$ for the projection onto slot $i$, products
across slots are formed in that common algebra, and a monomial in
$k$ distinct slots is labelled by the corresponding element of
$\Vflav^{\otimes k}$. The \emph{single geometric object} of the
introduction is accordingly the subspace
\begin{equation*}
 \overline{\evMV} \;=\; \bigoplus_{k\ge 1}\;
 \mathrm{span}\bigl\{\,\pi_{i_1}\evMV \cdots \pi_{i_k}\evMV \;\otimes\;
 |f_{i_1}\rangle\!\otimes\!\cdots\!\otimes\!|f_{i_k}\rangle \;:\;
 i_1<\cdots<i_k \,\bigr\},
\end{equation*}
the closure of the slots under the Clifford operations of
Sec.~\ref{sec:ops} applied on the $\Cl(1,3)$ factor, with the $\Vflav$
tags carried along as passive labels. It is a subspace of $\Cl(1,3)\otimes\Vflav^{\otimes\bullet}$, indexed by
subsets of the slots. Wedging the $p_i$ pairwise generates the
$\binom{N}{2}$ decay-plane bivectors at grade two (the
leptonic-$W$ plane $p_\ell\!\wedge\!p_\nu$ of the $tWb$
example below is one such blade), triple wedges yield the
$\binom{N}{3}$ oriented three-volumes at grade three, and
quadruple wedges yield the $\binom{N}{4}$ pseudoscalar
coefficients --- CP-odd one-bit observables --- at grade four;
inner products of the same $p_i$ supply the grade-zero
Lorentz scalars (resonance masses, Mandelstam variables,
helicity-frame projections), and bivector exponentials
$\exp(B/2)$ generate the rotors of Sec.~\ref{sec:symm} that
implement boosts and rotations on the object. The
combinatorial filling of all five grades from the single
grade-one input is drawn for $N=6$ in
Fig.~\ref{fig:per-particle-decomp}; the resulting
$\Cl(1,3)\!\otimes\!\Vflav$ factorisation that the network
sees is summarised in Fig.~\ref{fig:event-as-multivector}.
The event is a single geometric object in this operational sense: not a
list of four-momenta with side labels, but the graded multivector formed
by the algebraic closure of the typed grade-one input under the
operations of Sec.~\ref{sec:ops}. Which grades a network
receives \emph{explicitly} is a separate, implementation-level choice,
and the realisations form a ladder. At one end the network sees only the
grade-one summands and builds the closure internally through its
equivariant layers; this is the L-GATr-type configuration of Sec.~\ref{sec:tWb}.
Theorem~\ref{lemma:cm-collapse} guarantees that no Lorentz-invariant
information is lost on this rung, and the ablation of
Sec.~\ref{sec:ablation} finds that on its deliberately minimal
benchmark explicit materialisation of the derived components does not
pay. Which components a task needs is process-dependent, so the
representation itself carries all five grades; how many of them to
materialise explicitly is a per-task, per-architecture choice that the
ladder is built to measure. At the other end the whole inventory is materialised as its own tokens
before the network sees it, which is the form the pre-training interface
of Sec.~\ref{sec:outlook} assumes and which no implementation, ours
included, yet supplies in full. The distinction between the rungs is
\emph{where} the higher grades are formed, not whether the event is one
object: they are joint content of several particles at every rung ---
$p\wedge p=0$, so a single particle contributes nothing above grade one
--- and the graded object they populate is the event throughout. The
present paper is the specification of that object together with a
measurement of the first rungs of the ladder. Throughout the paper $\grade{\evMV}{k}$ denotes the grade-$k$
sector of this algebraic closure (the $\Vflav$ tags carried along as
passive labels); eq.~\eqref{eq:evMV} itself populates only grade one.

\section{Physical meaning of multivector grades}
\label{sec:grades}

In this section we develop the physical meaning of each grade of the
kinematic factor $\Cl(1,3)$ of the event multivector $\evMV$ defined
in Sec.~\ref{sec:evmv-def} (eq.~\eqref{eq:evMV}); the discrete
object-type factor $\Vflav$ (eq.~\eqref{eq:Vflav}) and the
missing-transverse-momentum pseudo-object (eq.~\eqref{eq:MET}) are
catalogued in Sec.~\ref{sec:flav}.
The grade decomposition of the Clifford algebra
$\Cl(1,3) = \bigoplus_{k=0}^{4}\grade{\Cl(1,3)}{k}$ admits a
direct physical reading on the four-momenta of an event, and that reading carries through to the multivector $\evMV$ of
eq.~\eqref{eq:evMV} slot by slot. The
five subsections that follow give the algebraic object, its physical
interpretation, the classical observables that it correlates with
through the prior input recipes of refs.~\cite{Boos:2008sdz,
Dudko:2020qas}, and the one-line statement of what is genuinely
unique to the multivector representation versus what already
follows from the grade-zero sector through
Theorem~\ref{lemma:cm-collapse}. Tab.~\ref{tab:mapping} summarises
the per-variable mapping in $18$ rows; the full $32$-row dictionary,
with the operation column expanded to a Cayley--Menger status, is
collected in App.~\ref{app:tables-c1}. Fig.~\ref{fig:per-particle-decomp} shows
how, for an $N=6$ semi-leptonic $t\bar t$-like final state, the
per-particle four-momenta combinatorially populate every grade of
$\evMV$ through their wedge products.

\begin{figure}[t]
\centering
\begin{tikzpicture}[
 every node/.style={font=\footnotesize, align=center},
 obj/.style={draw, rounded corners=2pt, fill=gray!10,
 minimum width=1.1cm, minimum height=0.55cm,
 line width=0.4pt, inner sep=2pt},
 g0/.style={draw, rounded corners=2pt, fill=blue!8,
 minimum width=4.2cm, minimum height=0.55cm,
 line width=0.4pt, inner sep=3pt},
 g1/.style={draw, rounded corners=2pt, fill=cyan!10,
 minimum width=4.2cm, minimum height=0.55cm,
 line width=0.4pt, inner sep=3pt},
 g2/.style={draw, rounded corners=2pt, fill=green!10,
 minimum width=4.2cm, minimum height=0.55cm,
 line width=0.4pt, inner sep=3pt},
 g3/.style={draw, rounded corners=2pt, fill=orange!10,
 minimum width=4.2cm, minimum height=0.55cm,
 line width=0.4pt, inner sep=3pt},
 g4/.style={draw, rounded corners=2pt, fill=red!10,
 minimum width=4.2cm, minimum height=0.55cm,
 line width=0.4pt, inner sep=3pt},
 cnt/.style={draw=none, font=\scriptsize\itshape, text=black!70},
 arrow/.style={-{Latex[length=1.4mm]}, line width=0.3pt, black!55}
]
\node[obj, minimum width=0.95cm] (l) at (-3.5, 4.7) {$\ell$\\\scriptsize$|f_\ell\rangle$};
\node[obj, minimum width=0.95cm] (nu) at (-2.1, 4.7) {$\nu$\\\scriptsize$|\MET\rangle$};
\node[obj, minimum width=0.95cm] (b) at (-0.7, 4.7) {$b$\\\scriptsize$|f_b\rangle$};
\node[obj, minimum width=0.95cm] (bb) at ( 0.7, 4.7) {$\bar b$\\\scriptsize$|f_b\rangle$};
\node[obj, minimum width=0.95cm] (j1) at ( 2.1, 4.7) {$j_1$\\\scriptsize$|f_j\rangle$};
\node[obj, minimum width=0.95cm] (j2) at ( 3.5, 4.7) {$j_2$\\\scriptsize$|f_j\rangle$};
\node[cnt, anchor=east] at (-3.95, 4.7) {reco};

\node[g4, minimum width=5.4cm] (g4) at (0, 3.6) {grade 4 ($\binom{6}{4}=15$):
 e.g.\ $p_\ell\!\wedge\! p_\nu\!\wedge\! p_b\!\wedge\! p_{\bar b}$};
\node[g3, minimum width=5.4cm] (g3) at (0, 2.7) {grade 3 ($\binom{6}{3}=20$):
 e.g.\ $p_\ell\!\wedge\! p_\nu\!\wedge\! p_b$};
\node[g2, minimum width=5.4cm] (g2) at (0, 1.8) {grade 2 ($\binom{6}{2}=15$):
 e.g.\ $p_\ell\!\wedge\! p_\nu$ ($W$ decay plane)};
\node[g1, minimum width=5.4cm] (g1) at (0, 0.9) {grade 1 ($N=6$):
 $\{p_\ell, p_\nu, p_b, p_{\bar b}, p_{j_1}, p_{j_2}\}$};
\node[g0, minimum width=5.4cm] (g0) at (0, 0.0) {grade 0: Lorentz scalars
 $\{p_i\!\cdot\! p_j,\ m_i^2,\ \hat s\}$};
\node[cnt, anchor=west] at (3.2, 3.6) {pseudoscalars (CP-odd signs)};
\node[cnt, anchor=west] at (3.2, 2.7) {trivectors};
\node[cnt, anchor=west] at (3.2, 1.8) {bivectors};
\node[cnt, anchor=west] at (3.2, 0.9) {four-momenta};
\node[cnt, anchor=west] at (3.2, 0.0) {Lorentz scalars};
\draw[decorate, decoration={brace, amplitude=3.5pt, mirror},
 line width=0.3pt, black!55]
 ($(l.south west)+(-0.05,-0.10)$) -- ($(j2.south east)+(0.05,-0.10)$);

\node[draw=gray!50, dashed, fit={(l)(j2)}, inner sep=4pt,
 label={[font=\scriptsize\itshape, gray]above:%
 $|f_i\rangle\in\Vflav,\ \dim=10$}]{};
\end{tikzpicture}
\caption{Per-particle assembly of the event multivector for an
$N=6$ reconstructed final state $\{\ell, \nu, b, \bar b, j_1, j_2\}$
typical of semi-leptonic $t\bar t$ (charged lepton, neutrino,
two $b$-tagged jets, two light jets). Each object carries a
four-momentum $p_i$ (grade~1 in $\Cl(1,3)$) tensored with a flavour
tag $|f_i\rangle\in\Vflav$ from the discrete object-type space; the
basis of $\Vflav$ enumerates the ten reconstructed object types
of eq.~\eqref{eq:Vflav}, so $\dim\Vflav = 10$ is set by the object
vocabulary and is independent of the per-event count~$N$.
The five grades of $\evMV$ are populated combinatorially: $N=6$
grade-one four-vectors, $\binom{6}{2}=15$ grade-two bivectors
(decay planes such as $p_\ell\!\wedge\!p_\nu$ for the leptonic
$W$), $\binom{6}{3}=20$ grade-three trivectors,
$\binom{6}{4}=15$ grade-four pseudoscalars, and the grade-zero
sector containing all Lorentz scalars built from inner products
$\{p_i\!\cdot\!p_j,\,m_i^2,\,\hat s\}$. Every object enters every
grade, grade $k$ collecting all $\binom{N}{k}$ wedge products, which
is why the brace groups the objects rather than lines being drawn
object by object. One explicit example is shown per higher grade;
counts on the right. The parton-level neutrino is carried in the reconstruction vocabulary
as the $\MET$ pseudo-object of eq.~\eqref{eq:MET}.}
\label{fig:per-particle-decomp}
\end{figure}

\paragraph{Grade 0 --- Lorentz scalars.}
The grade-zero sector $\grade{\evMV}{0}$ collects the Lorentz-invariant
numbers built from the four-momenta of the event. All of them are
polynomials in the inner products $p_i\!\cdot\!p_j$: the on-shell masses
$m_i^2 = p_i\!\cdot\!p_i$ on the diagonal, and the invariant mass of any
subset, $m_S^2 = (\sum_{i\in S}p_i)^2$, off it. The Mandelstam variables are the same construction applied to particular
subsets --- $\hat s$ to the full final state, $\hat t$ and $\hat u$ to
the crossed combinations with the initial partons, which are carried as
two further slots of eq.~\eqref{eq:evMV} with their own $\Vflav$ labels
whenever the crossed invariants are used --- so the resonance masses
$m_{ij}$, $m_{ijk}$ and the partonic $\hat s$ are one family, not
several.

Physically the grade-zero sector is the resonance ladder of the
event: peaks of $m_{ij}$ at $W,Z,H$ masses, peaks of $m_{ijk}$ at
$t$ and $\bar t$, and the partonic $\sqrt{\hat s}$ that controls
threshold behaviour and parton-luminosity weighting. The spin-analysing angular projections $\cos\theta^*$ --- the lepton
angle in the top rest frame and its optimal-basis variants, introduced
for single-top and $t\bar t$ spin correlations by Mahlon and
Parke~\cite{Mahlon:1995zn, Mahlon:1996pn, Mahlon:1999gz} and used in
the input recipe of ref.~\cite{Boos:2008sdz} --- together with the
Collins--Soper angle $\cos\theta_{CS}$, are Lorentz scalars constructed
by a rotor sandwich $X\mapsto R\,X\,\reverse R$ followed by an inner
product, and therefore live in $\grade{\evMV}{0}$ as well.

By construction $\grade{\evMV}{0}$ contains all
Lorentz-invariant scalar inputs of the prior input recipes of
refs.~\cite{Boos:2008sdz, Dudko:2020qas}: the recipe of
ref.~\cite{Dudko:2020qas} keeps the pairwise inner products and
$\hat s$ explicitly; the high-level reconstructed masses of
ref.~\cite{Boos:2008sdz} are linear combinations of the same set.
Cf.\ Tab.~\ref{tab:mapping} rows~1--4 for the per-variable mapping.

\begin{theorem}[Cayley--Menger collapse, Lorentzian extension]
\label{lemma:cm-collapse}
Let $p_1,\dots,p_N\in\Cl(1,3)$ be grade-one elements and let
$X_1,\dots,X_n$ be homogeneous-grade multivectors, each an outer product
of the $\{p_i\}$. Write $\lambda_{ijkl}$ for the pseudoscalar coefficient of
$\grade{p_i p_j p_k p_l}{4} = \lambda_{ijkl}\,I$
(eq.~\eqref{eq:pseudoscalar}). Then
every Lorentz-invariant scalar $\grade{X_1 X_2\cdots X_n}{0}$ is a
polynomial in the inner products $\{p_i\!\cdot\!p_j\}_{i,j=1}^{N}$ and
the coefficients $\lambda_{ijkl}$, of degree at most one in the
$\lambda$'s jointly: it has the form
$P + \sum_{i<j<k<l}\lambda_{ijkl}\,Q^{ijkl}$ with $P$ and $Q^{ijkl}$
polynomials in the inner products alone, since every product
$\lambda_{ijkl}\lambda_{i'j'k'l'}$ is itself a polynomial in the inner
products. In particular the reversion contraction of any $r$-blade,
$r=2,3,4$, equals the $r\times r$ Gram determinant,
$\grade{P_r\reverse{P_r}}{0}=\det\bigl(p_{i_a}\!\cdot\!p_{i_b}\bigr)$
(eqs.~\eqref{eq:gram2}, \eqref{eq:trivnorm}, \eqref{eq:gramR}).
\end{theorem}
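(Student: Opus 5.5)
The plan is to expand everything in the basis of grade-one factors and reduce the scalar part of a product of vectors to contractions with $\eta$ and $\epsilon$. Each $X_m$ is an outer product $p_{i_1}\wedge\cdots\wedge p_{i_r}$. Such a wedge equals the antisymmetrised geometric product $\frac{1}{r!}\sum_\sigma \mathrm{sgn}(\sigma)\,p_{i_{\sigma(1)}}\cdots p_{i_{\sigma(r)}}$. Substituting this into $X_1\cdots X_n$ writes the product as a linear combination, with rational coefficients independent of the momenta, of geometric products $q_1q_2\cdots q_M$, each $q_a\in\{p_i\}$. Since $\grade{\cdot}{0}$ is linear, it suffices to show that for any string of vectors, $\grade{q_1\cdots q_M}{0}$ is a polynomial in inner products and $\lambda$'s of degree at most one in the $\lambda$'s.

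For the string I would use the standard recursive reduction. Anticommuting $q_1$ through the string with $q_1 q_a = 2\,q_1\!\cdot\!q_a - q_a q_1$ gives the usual identity. For $M$ even it reads $\grade{q_1\cdots q_M}{0}=\sum_{a\ge2}(-1)^{a}(q_1\!\cdot\!q_a)\grade{q_2\cdots\hat q_a\cdots q_M}{0}$; for $M$ odd the scalar part vanishes. This is the Clifford analogue of Wick's theorem.

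The catch is that the recursion really computes $\grade{\cdot}{0}$ only after one also tracks the grade-four part, because in dimension four a product of many vectors can have a pseudoscalar part. So the sharper route is to fix an orthonormal frame and work with the trace. One has $\grade{A}{0}=\frac14\,\mathrm{tr}\,\rho(A)$ in a faithful $4\times4$ (complexified Dirac) representation, so $\grade{q_1\cdots q_M}{0}$ equals $\frac14\mathrm{tr}(\slashed q_1\cdots\slashed q_M)$ up to the real structure. The classical trace theorems give a sum of products of $\eta$-contractions, which are inner products. No $\epsilon$ appears, since $\gamma_5$ is absent.

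The $\lambda$'s enter when one insists on reporting the result in terms of the \emph{fixed} multivectors rather than the strings. The fact to prove is then the product rule $\lambda_{ijkl}\lambda_{i'j'k'l'}=-\det\bigl(p_a\!\cdot\!p_b\bigr)_{a\in\{ijkl\},\,b\in\{i'j'k'l'\}}$. This follows from $\epsilon_{\mu\nu\rho\sigma}\epsilon^{\alpha\beta\gamma\delta}=-\det(\delta^\cdot_\cdot)$ in signature $(+,-,-,-)$, equivalently from $I^2=-1$ together with $\grade{P_4\reverse{Q_4}}{0}$ being a Gram determinant. By this identity any monomial with two or more $\lambda$ factors can be rewritten with at most one, which gives the stated form $P+\sum\lambda Q$.

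The Gram statement I would prove directly by induction on $r$. Write $P_r=p_{i_1}\wedge P_{r-1}$ and use $\grade{(a\wedge B)\reverse{(a\wedge C)}}{0}$ expanded via the contraction $a\cdot(a\wedge C)=(a\!\cdot\!a)C-a\wedge(a\cdot C)$. A cleaner alternative is the Laplace-type identity $\grade{(a_1\wedge\cdots\wedge a_r)\reverse{(b_1\wedge\cdots\wedge b_r)}}{0}=\det(a_i\!\cdot\!b_j)$. That identity follows from multilinearity and antisymmetry in each argument: it suffices to check it on orthonormal basis vectors, where both sides equal $\prod\eta_{\mu\mu}$ or $0$. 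I expect the main obstacle to be sign bookkeeping under reversion and the Lorentzian metric. Doing the basis check explicitly for $\gamma_0$ versus spatial $\gamma_k$, and confirming the $-\det$ sign from $I\reverse I=I^2\cdot(+1)$ (since $\reverse I=I$), is where I would spend care.
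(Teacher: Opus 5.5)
Your proof is correct, but it takes a different route from the paper's, and it in fact proves slightly more than the statement.

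\emph{The two routes.} The paper stays at the level of blades. A two-factor chain has a scalar part only for equal grades, and polarisation reduces such pairings to self-contractions. An induction on chain length, over a class enlarged by $I$-duals, then tracks the pseudoscalar coefficients produced by grade-four projections. Pairs of these collapse by the mixed Gram identity, which gives the form $P+\sum\lambda_{ijkl}Q^{ijkl}$. You instead dissolve every blade into antisymmetrised vector strings and apply the Clifford Wick recursion.

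\emph{Two corrections to your commentary.} First, the \emph{catch} you raise is not one. The recursion is exact: it uses only $q_1q_a=2\,q_1\!\cdot\!q_a-q_aq_1$ and the cyclicity $\grade{AB}{0}=\grade{BA}{0}$, which holds in any real Clifford algebra. So no grade-four bookkeeping is needed, and the $\gamma$-trace theorem is the same computation in matrix form.

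Second, the $\lambda$'s never enter. Your string reduction already shows that the $\lambda$-linear part vanishes, so one may take $Q^{ijkl}=0$. Symmetry forces this. Every $\Lambda\in\mathrm{O}(1,3)$, improper ones included, extends to a grade-preserving algebra automorphism of $\Cl(1,3)$ that fixes scalars, so $\grade{X_1\cdots X_n}{0}$ is invariant under improper $\Lambda$. Each $\lambda_{ijkl}$, by contrast, is odd under improper $\Lambda$. The orientation bit appears only once $I$ itself is inserted, e.g.\ $\lambda_{ijkl}=-\grade{P_{ijkl}\,I}{0}$. The rule $\lambda\lambda'=-\det(p_a\!\cdot\!p_b)$ is therefore needed only for the theorem's subsidiary clause, and your derivation of it is fine.

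\emph{The Gram clause.} Your argument via the mixed identity, checked on basis vectors using multilinearity and antisymmetry, is the paper's argument for eq.~\eqref{eq:gramMixed}. It is preferable to the orthonormal-frame derivation of eq.~\eqref{eq:gramR}, which presupposes linearly independent momenta with a non-degenerate span.

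\emph{What each route buys.} The paper's route gives grade-resolved control of intermediate covariant projections such as $\lambda_{1234}\,I(p_5\wedge p_6\wedge p_7)$, which matters for non-scalar outputs. For the scalar statement itself, yours is shorter and sharper.
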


\noindent\emph{Corollary.} The Gram matrix determines every
Lorentz-invariant scalar up to one bit. Under a spatial inversion all
$\lambda_{ijkl}$ change sign together while the inner products are
unchanged, and any ratio $\lambda_{T}/\lambda_{T'}$ is fixed by the Gram
matrix through
$\lambda_{T}\lambda_{T'}=-\det\bigl(p_{a}\!\cdot\!p_{b}\bigr)_{a\in T,\,b\in T'}$;
the single common orientation sign is therefore the only invariant
content not reducible to $\{p_i\!\cdot\!p_j\}$.

The proof is given in App.~\ref{app:derivations}. The polynomial part is
the First Fundamental Theorem of invariant theory for the orthogonal
group~\cite[Ch.~II]{Weyl:1939} specialised to $\mathrm{SO}(1,3)$; the
Gram-determinant identity is the Lorentzian-signature form of the
Euclidean Cayley--Menger identity~\cite{Cayley:1841, Menger:1928}, used
for $(n,1)$-simplex realizability in~\cite[App.~A]{Tate:2011rm}. The
Clifford-algebraic formulation adds one thing: it exhibits the unique
non-polynomial invariant --- the orientation bit --- as a single channel
of the event multivector~$\evMV$.

The consequence for the input layer is what Sec.~\ref{sec:ablation}
measures: a grade-$k$ object built by wedging the measured momenta
contributes no Lorentz-invariant information beyond their Gram
matrix, however explicitly it is presented to a network.

The non-trivial Lorentz-invariant content of the higher grades is
therefore the pseudoscalar sign alone, discussed in the grade-four
paragraph below.

\paragraph{Grade 1 --- four-vectors.}
The grade-one sector $\grade{\evMV}{1}$ is spanned by the per-particle
four-momenta $p_i$ and by their sums
(see Fig.~\ref{fig:per-particle-decomp}, second row from the
bottom),
in particular the total event four-momentum
$p_{\text{tot}}=\sum_i p_i$. The transverse momentum $p_T^f$, the
rapidity $y^f$ and the azimuth $\phi^f$ of object $f$ are not
Lorentz scalars: they require a fixed beam-axis reference vector
$\hat n_z$ and a fixed transverse plane. We make this dependence
visible by writing them as components of the grade-one element
$p_f$ in the beam-axis-adapted representation
\begin{equation}
 p_f \;=\; (E^f,\, p_x^f,\, p_y^f,\, p_z^f)
 \;=\; (M_T^f \cosh y^f,\; p_T^f \cos\phi^f,\; p_T^f \sin\phi^f,\;
 M_T^f \sinh y^f),
 \label{eq:lhcparam}
\end{equation}
with $M_T^f \equiv \sqrt{m_f^2 + (p_T^f)^2}$. The two
parameterisations $(E,\vec p)$ and $(p_T, \eta, \phi, M)$ are
related by eq.~\eqref{eq:lhcparam} and refer to the same grade-one
object in two different bases. In the massless limit
the rapidity $y$ reduces to the pseudorapidity
$\eta=-\ln\tan(\theta/2)$, which is the collider-standard input
variable when the per-object mass is small or unmeasured.

Physically the grade-one sector carries the explicit kinematic
content of every detector object. The lab-frame coordinates
$(p_T,\eta,\phi)$ separate the longitudinal-boost-invariant
content $(M_T, y)$ from the azimuthal-rotation-invariant content
$(p_T, \phi)$, and this separation is precisely the one realised
by the algebraic split
$\Cl(1,3)\cong\Cl(1,1)_z\,\hat\otimes\,\Cl(0,2)_\perp$ of
Sec.~\ref{sec:symm}, eq.~\eqref{eq:22split}, where the graded
tensor product $\hat\otimes$ is defined as the $\mathbb Z_2$-graded
tensor of Clifford algebras. Spin-correlation analyses access grade-one
via helicity-frame projections of leptons and $b$-jets through the
rotor sandwich of eq.~\eqref{eq:rotor-sandwich}, returning the
grade-zero scalar invariants discussed above in the grade-zero
paragraph.

The classical event-level constructions reduce to a small list of
grade-one outputs. Per-object: $E,\,p_x,\,p_y,\,p_z,\,p_T,\,\eta,\,
\phi,\,y,\,M$. Event-level: $H_T=\sum_{\text{jets}}p_T$,
$S_T=H_T+\sum_\ell p_T^\ell+\MET$. Pair-level: $\Delta\eta_{ij},
\Delta\phi_{ij}, \Delta R_{ij}=\sqrt{\Delta\eta_{ij}^2+\Delta\phi_{ij}^2}$,
$m_{ij}$ (the latter through grade-zero, cf.\ the grade-zero
paragraph above).
Frame-derived: $\theta^*_\ell$ in the helicity frame through a
rotor-sandwich projection. The transverse mass $m_T$ is a grade-one
scalar that is invariant only under the LHC residual subgroup
$\Hlhc$, not under full Lorentz. The missing transverse momentum
$\MET$ is treated as an incomplete grade-one element with masked
longitudinal component and an entry in $\Vflav$ (cf.\
eq.~\eqref{eq:MET}). For neural-network input we embed the azimuthal
angle as $(\cos\phi,\sin\phi)$ in place of the raw value.\footnote{%
This embedding is consistent with the algebraic role of $\phi$ as
a coordinate on $\SOgroup(2)\subset\Hlhc$ and with the smooth
behaviour of equivariant networks across the discontinuity at
$\phi=\pm\pi$.}

Nothing is uniquely new in $\grade{\evMV}{1}$: it is the
familiar four-vector input of refs.~\cite{Dudko:2020qas,
Brehmer:2024yqw}. What is new is its position inside $\evMV$ as
one homogeneous component of a graded multivector rather than as a
free-standing object: the geometric product mixes it with the higher grades, and the rotor sandwich preserves
the grade decomposition, so a
$\Spin^+(1,3)$-equivariant network can act on the full grade
spectrum without auxiliary combining of channels.

\paragraph{Grade 2 --- bivectors and decay planes.}
The grade-two sector of the kinematic factor,
$\grade{\Cl(1,3)}{2}$, is six-dimensional. In the orthonormal basis
$\{\gamma_\mu\}_{\mu=0,1,2,3}$ the basis
splits into three boost-bivectors $\gamma_{0i}$ ($i=1,2,3$) with
$\gamma_{0i}^2=+1$ and three rotation-bivectors $\gamma_{ij}$
($1\le i<j\le 3$) with $\gamma_{ij}^2=-1$. The simple two-blade
$B_{ij}=p_i\wedge p_j$ has square $B_{ij}^2=(p_i\!\cdot\!p_j)^2-p_i^2 p_j^2$ (eq.~\eqref{eq:gram2}); for two future-directed time-like momenta
the reverse Cauchy--Schwarz inequality $(p_i\!\cdot\!p_j)^2\ge p_i^2p_j^2$
forces $B_{ij}\reverse{B_{ij}}\le 0$, so
$|B_{ij}|^2\equiv-B_{ij}\reverse{B_{ij}}$
is a non-negative Lorentz scalar.

Physically a unit two-blade $\hat B = B/|B|$ is the oriented
two-plane in Minkowski spacetime that contains the rest frames of
its two participants. After normalisation it is the boost generator
that relates these two rest frames at the rapidity
$\alpha=\mathrm{arccosh}(p_i\!\cdot\!p_j/\sqrt{p_i^2 p_j^2})$ for two
on-shell time-like momenta (cf.~\cite[\S~5.107,
eq.~5.125]{Doran:2003}); the limiting cases of light-like or
mixed-signature participants are obtained from the same
parameterisation by taking the appropriate rapidity limit. The
rotor sandwich $B\mapsto R\,B\,\reverse R$ keeps the blade form,
so the bivector is a primitive carrier of decay-plane orientation
in any equivariant network. The combinations
$B_W=p_\ell\wedge p_\nu$ and $B_t=(p_\ell+p_\nu)\wedge p_b$ are the
oriented decay planes of the leptonic $W$ and the leptonic
$t$~\cite{Boos:2023kpp}.

The classical observables that correlate with grade two are the
spin-correlation tensors and the bivector inner products. The
Bernreuther basis $\{\hat k,\hat r,\hat n\}$~\cite{Bernreuther:2013aga,
Bernreuther:2015yna} decomposes the $t\bar t$ spin-density matrix
into nine coefficients
$C_{ij}=-9\langle(\hat n_i\!\cdot\!\hat\ell_+)(\hat n_j\!\cdot\!\hat\ell_-)\rangle$.
In the multivector representation these are the grade-zero
projections $\grade{B_a\reverse{B_b}}{0}$ of the unit
two-blades $\hat B_t,\hat B_{\bar t}$ paired with the lepton
four-momenta; the explicit identity
\begin{equation}
 \bigl\langle (p_a\wedge p_b)\,\reverse{(p_c\wedge p_d)}\bigr\rangle_0
 \;=\;
 (p_a\!\cdot\!p_c)(p_b\!\cdot\!p_d)
 \,-\,
 (p_a\!\cdot\!p_d)(p_b\!\cdot\!p_c)
 \label{eq:bivinner}
\end{equation}
shows that the invariant projection sits in the grade-zero sector
already enumerated by~\cite{Boos:2008sdz, Dudko:2020qas}; the six
raw covariant components of $B_a$, however, do \emph{not} reduce
to this set and form the natural input channel of the L-GATr-style
equivariant networks of refs.~\cite{Brehmer:2024yqw,
Spinner:2024hjm}. The pair-level azimuthal opening $\Delta\phi_{ij}$
extracts the $\gamma_{12}$ projection of $B_{ij}$; the rapidity
difference $\Delta y_{ij}$ extracts the $\gamma_{03}$ projection.
The kinematic catchment-cone distance
$\Delta R_{ij}=\sqrt{\Delta\eta_{ij}^2+\Delta\phi_{ij}^2}$ is a
function of the same projections and is invariant under $\Hlhc$ but
not under full Lorentz. We tabulate the per-grade reading of these spin observables among the
grade-2 entries of the $32$-row dictionary.

Spin correlations between production and decay, and between
different particles in the event, are an independently developed
information channel of long standing~\cite{Mahlon:1999gz,
Bernreuther:2013aga, Bernreuther:2015yna}; the LHC beams are
unpolarised, so polarisation enters only through these
correlations. The bivector grade carries them as covariant
six-component objects rather than as derived hand-crafted scalars,
which is the algebraic content of the channel.

\paragraph{Caveat: event-shape variables and the orthogonal complement.}
The standard event-shape variables (sphericity $S$, aplanarity $A$,
thrust $T$, Fox--Wolfram moments $H_\ell$, $n$-subjettiness
$\tau_N$~\cite{Thaler:2010tr}, soft-drop mass
$m_{\text{SD}}$~\cite{Larkoski:2014wba}) are built from a \emph{symmetric} rank-2 tensor, e.g.\
$S^{ij}=\sum_k p_k^i p_k^j$ for sphericity, and are
$\SOgroup(3)_{\rm lab}$-invariant rather than Lorentz-invariant --- they
are not antisymmetric blades; they sit in the orthogonal
complement of $\bigwedge^2(\mathbb R^{1,3})$ inside the space of
rank-2 tensors and are therefore not natural elements of the
multivector representation. They remain available as auxiliary
scalar inputs (entering at the grade-zero projection step) but are
not part of the algebraic closure on the kinematic factor; the caveat is reflected as the sphericity-tensor null row of
Tab.~\ref{tab:full-dict}.

\paragraph{What is uniquely new in grade two.}
Grade two combines the orientation of a decay plane with the
rapidity of the relative boost between its participants in a single
six-component covariant object (cf.~\cite[\S~5.107]{Doran:2003}). The
sandwich-closure $B\mapsto R\,B\,\reverse R$ keeps the blade
form, which is the algebraic primitive of equivariant networks
and is revisited in Sec.~\ref{sec:outlook}.

\paragraph{Grade 3 --- oriented three-volumes.}
The grade-three sector of the kinematic factor,
$\grade{\Cl(1,3)}{3}$, is four-dimensional, spanned by $\gamma_{012},\,\gamma_{013},\,
\gamma_{023},\,\gamma_{123}$. The Hodge dual pairs each basis trivector with the
corresponding basis four-vector,
$(\gamma_{abc})^\star=\mathrm{sgn}(\pi_{dabc})\,\gamma^d$, where
$\mathrm{sgn}(\pi_{dabc})$ is the sign of the permutation taking
$(0,1,2,3)$ to $(d,a,b,c)$ so a generic
trivector $T\in\grade{\evMV}{3}$ is, up to the dual, a
covariant four-component object. For three grade-one elements
\begin{equation}
 T_{ijk} \;=\; p_i\wedge p_j\wedge p_k,
 \qquad
 T_{ijk}\,\reverse{T_{ijk}}
 \;=\; \det G_3
 \;\equiv\; \det\bigl[\,p_a\!\cdot\!p_b\,\bigr]_{a,b\in\{i,j,k\}},
 \label{eq:trivnorm}
\end{equation}
in the convention of Theorem~\ref{lemma:cm-collapse}. The right-hand
side is the Minkowski Gram determinant of the three
participating four-momenta and carries its sign through the inner
products. A three-subspace of Minkowski space admits at most one
time-like direction, and the orthonormal-frame derivation of
eq.~\eqref{eq:gramR} gives
$T\reverse T=(\det M)^2\,\prod_{a=1}^{3}\varsigma_a$ with the signs
$\varsigma_a$ fixed by the induced signature, so three sub-cases
exhaust the possibilities:
(i)~exactly one time-like direction in the spanned three-plane,
induced signature $(1,2)$, gives $\prod_a\varsigma_a=+1$ and hence
$T\reverse T=\det G_3>0$ --- the kinematically dominant case for
collider triples;
(ii)~a fully space-like three-plane, induced signature $(0,3)$,
gives $\prod_a\varsigma_a=-1$ and hence $T\reverse T=\det G_3<0$
deterministically;
(iii)~a linearly-dependent (degenerate) triple gives
$T\reverse T=\det G_3=0$ on the boundary.

Physically the trivector is the oriented three-volume in spacetime
spanned by the three participating four-momenta, and its
boundary $\det G_3=0$ is the kinematic edge condition of three
linearly dependent momenta --- the boundary of a Dalitz-like
phase-space region. The Hodge dual $T^\star_{ijk}$ is the
four-vector orthogonal to the three-plane, equivalently the
$\epsilon$-contraction
$T^{\star\,\mu}=\epsilon^{\mu\nu\rho\sigma}\,p_{i\nu}p_{j\rho}p_{k\sigma}$,
i.e.\ the natural covariant counterpart of the
$\vec p_i\!\cdot\!(\vec p_j\!\times\!\vec p_k)$ scalar triple
product of the three-spatial reduction.

The classical observable that correlates with grade three is the
T-odd / CP-odd triple product of refs.~\cite{Atwood:2000tu,
Bernreuther:2015yna} (T-odd in the motion-reversal convention of
those references; equivalently P-odd under the covariant-T
convention adopted below in the Grade~4 paragraph and in
App.~\ref{app:tables-c2-disc}, where the two conventions are
contrasted). In the lab or partonic-CM frame
$\vec p_i\!\cdot\!(\vec p_j\!\times\!\vec p_k)=T^{123}_{ijk}$ is one
component of the trivector; the remaining three components
$T^{012},T^{013},T^{023}$ are sourced by the time-like sector and
do not appear in the three-spatial reduction. Under $\Spin^+(1,3)$
the four components mix as a covariant Lorentz vector through the
dual, so a frame-independent equivariant network sees the full
four-vector
$T^\star_{ijk}\in\grade{\evMV}{1}$ rather than the single
scalar projection $T^{123}_{ijk}$. The squared three-volume
$\det G_3$ reduces to the inner-product sector
$\grade{\evMV}{0}$ through Theorem~\ref{lemma:cm-collapse}.

What is uniquely new in grade three is the four-component
covariant input rather than the single rest-frame scalar. The squared norm
$\det G_3$ is not new --- it is captured already by
$\grade{\evMV}{0}$ via Theorem~\ref{lemma:cm-collapse}.

\paragraph{Grade 4 --- pseudoscalars and the CP-odd sign.}
The grade-four component $\grade{\evMV}{4}=\lambda\,I$ is
one dimensional, with $I=\gamma_0\gamma_1\gamma_2\gamma_3$ (see
the single $\binom{4}{4}=1$ pseudoscalar slot at the top of
Fig.~\ref{fig:per-particle-decomp}),
$I^2=-1$. For four grade-one elements
\begin{equation}
 P_{ijkl} \;=\; p_i\wedge p_j\wedge p_k\wedge p_l
 \;=\; \lambda_{ijkl}\,I,
 \qquad
 \lambda_{ijkl} \;=\;
 \epsilon_{\mu\nu\rho\sigma}\,p_i^\mu p_j^\nu p_k^\rho p_l^\sigma.
 \label{eq:pseudoscalar}
\end{equation}
The squared magnitude reduces to the four-point Gram determinant
through eq.~\eqref{eq:gramR} for $r=4$:
$\lambda^2_{ijkl}=-\det[\,p_a\!\cdot\!p_b\,]_{a,b\in\{i,j,k,l\}}$.
The pseudoscalar vanishes identically when fewer than four
linearly-independent four-momenta are available, in particular for
all final states with three or fewer reconstructed objects after
the initial-state sum is included; for $\ge 4$ linearly independent
participants $\lambda\ne 0$ generically.

Physically $\lambda$ is the signed four-volume of the parallelepiped
spanned by the four momenta in Minkowski spacetime, and its sign
is the unique CP-odd one-bit Lorentz-invariant observable
\emph{inaccessible} to any classifier that sees only inner
products (Theorem~\ref{lemma:cm-collapse}). Under
$\Spin^+(1,3)$ the pseudoscalar transforms with $\det\Lambda=+1$
and is therefore an invariant of the proper orthochronous group;
under the full $\mathrm{O}(1,3)$ the pseudoscalar picks up a sign,
$P\to(\det\Lambda)P$, so the parity transformation $P$ flips
$\lambda\to-\lambda$ (and similarly the covariant time-reversal
$\Lambda_T=\mathrm{diag}(-1,+1,+1,+1)$, i.e.\ $p^0\!\to\!-p^0$,
$\vec p\!\to\!+\vec p$, flips $\lambda\to-\lambda$ because each
non-zero $\epsilon$-contraction in $\lambda$ contains exactly one
$p^0$ factor, while $C$ acts trivially on $\Cl(1,3)$, so $\lambda$
is P-odd, T-odd, C-even, hence CP-odd).
In the Standard Model at tree level the
expectation $\langle\mathrm{sgn}\,\lambda\rangle=0$ for CP-symmetric
configurations; any deviation is a CP-violation signal.

The classical observables correlated with grade four are the
sign-asymmetries of refs.~\cite{Atwood:2000tu, Bernreuther:2015yna}.
The sign of the Bernreuther observable $T_5$ reduces, for a
chosen four momenta, to $\mathrm{sgn}\,\grade{p_a p_b p_c p_d}{4}$;
the asymmetries $B_3$ and $\mathcal A_{CP}$ integrate the same
one-bit channel over event populations. The signed
pseudoscalar has been used as a CP-odd observable since
refs.~\cite{Atwood:2000tu, Bernreuther:2013aga, Bernreuther:2015yna};
our contribution is its formulation as a first-class input feature
in the GA framework, not its physical existence.

The magnitude $|\lambda|=\sqrt{|\det G_4|}$ reduces to the
inner-product sector $\grade{\evMV}{0}$ by
Theorem~\ref{lemma:cm-collapse}; the sign $\mathrm{sgn}\,\lambda$
does not, as the square root of the magnitude loses the orientation
information. Together with the four covariant components of grade
three (grade-three paragraph above) it gives the full kinematic
Lorentz-covariant CP-channel content of the event input layer.

\paragraph{Recovery of the prior input recipes.}
The recipes of refs.~\cite{Boos:2008sdz, Dudko:2020qas} are
contained in $\grade{\evMV}{0}\oplus\grade{\evMV}{1}$:
Lorentz scalars and helicity-frame angular ratios sit in
$\grade{\evMV}{0}$ (after a rotor sandwich into the chosen frame
followed by an inner product), while the lab-frame variables
$(p_T,\eta,\phi)$ live in the beam-tensored grade-one sector
through eq.~\eqref{eq:lhcparam}. The equivariant networks of
refs.~\cite{Bogatskiy:2020tje, Brehmer:2024yqw, Spinner:2024hjm}
extend the input space to grades $\ge 2$ as raw covariant
coefficients. The per-variable mapping is given in Tab.~\ref{tab:mapping}; the $32$-row extension was announced
in Sec.~\ref{sec:grades}. The only non-trivial Lorentz-invariant
content unlocked by the higher grades is the one-bit sign of the
pseudoscalar discussed in the grade-four paragraph above.

\begin{table}[ht]
\centering
\small
\caption{Per-variable mapping of the prior input
recipes~\cite{Boos:2008sdz, Dudko:2020qas} onto the grade decomposition
of the event multivector $\evMV$ and onto the LHC residual subgroup
$\Hlhc=\SOgroup^+(1,1)_z\times\SOgroup(2)_\phi$. Rows~1--14 are standard
collider variables, listed here as the union of those recipes; a citation
appears only where an entry has a specific origin. ``Frame'' is the frame
in which the variable is defined: Lorentz-invariant (``inv.''), or fixed
in the beam, helicity or lab frame. ``Inv.\ subgroup'' is the largest
group leaving it unchanged: ``Y''~for full $\Spin^+(1,3)$,
``$\Hlhc$''~for the LHC residual subgroup only, ``N''~for content not
reducible to inner products and masses by
Theorem~\ref{lemma:cm-collapse}, ``n.a.''~for variables orthogonal to
the multivector representation.}
\label{tab:mapping}
\footnotesize
\setlength{\tabcolsep}{4pt}
\begin{tabularx}{\linewidth}{@{}r l c l X@{}}
\hline\hline
\# & Variable & Grade & Frame & Inv.\ subgroup \\
\hline
 1 & $m_i^2$ & 0 & inv. & Y \\
 2 & $\hat s$ & 0 & inv. & Y \\
 3 & $\hat t$ & 0 & inv. & Y \\
 4 & $m_{ij}$ & 0 & inv. & Y \\
 5 & $p_T^i$ & 1 & beam & $\Hlhc$ \\
 6 & $\eta_i,\,y_i$ & 1 & beam & $\SOgroup(2)_\phi$ \\
 7 & $\phi_i$\textsuperscript{a} & 1 & beam & $\SOgroup^+(1,1)_z$ \\
 8 & $H_T,\,S_T$ & 1 & beam & $\Hlhc$ \\
   9 & $\cos\theta^*$ spin projections~\cite{Mahlon:1995zn, Mahlon:1996pn, Mahlon:1999gz} & 0 & hel. & Y \\
10 & $\Delta\phi_{ij},\,\Delta R_{ij}$ & 1$\to$0 & beam & $\Hlhc$ \\
11 & $C_{ij}$ spin correlation~\cite{Bernreuther:2013aga, Bernreuther:2015yna} & 0/2 & inv. & raw: N; inv.\ proj.: Y \\
12 & $\MET$ & 1$_{\text{(masked }p_z\text{)}}$ & beam & $\Hlhc$-only \\
13 & $b$-tag, $\tau$-tag & $\Vflav$ & n.a. & trivial \\
14 & charge $Q_\ell$ & $\Vflav$ & n.a. & trivial \\
15 & $T^\star_{ijk}$ trivector dual (this work; cf.~\cite{Atwood:2000tu}) & 3$\to$1 & inv. & raw: $\SOgroup(2)_\phi$ for $T^{123}$; covariant: Y \\
16 & $\mathrm{sgn}\,\grade{p_1 p_2 p_3 p_4}{4}$ (this work; cf.~\cite{Atwood:2000tu, Bernreuther:2015yna}) & 4 & inv. & \textbf{N} (CP-odd 1-bit) \\
17 & sphericity $S$, aplanarity $A$ & n.a. & lab & n.a.\ (orth.\ complement) \\
18 & thrust $T$, Fox--Wolfram $H_\ell$~\cite{Thaler:2010tr, Larkoski:2014wba} & n.a. & lab & n.a.\ (orth.\ complement) \\
\hline\hline
\end{tabularx}\\[2pt]
{\footnotesize\textsuperscript{a} For neural-network input we
recommend the embedding $(\cos\phi^f,\sin\phi^f)$ in place of the
raw angle.}
\end{table}

\section{Operations on the event multivector}
\label{sec:ops}

The grade decomposition of Sec.~\ref{sec:grades} fixes the
\emph{static} content of the event multivector $\evMV$ of
eq.~\eqref{eq:evMV}; the \emph{dynamic} content is supplied by a small
set of algebraic operations on $\Cl(1,3)$, each of which carries a
direct physical reading on collider four-momenta and a definite role
in an equivariant network architecture. We collect here the seven
operations that are used in the rest of the paper -- the geometric,
inner, outer and tensor products, the Hodge dual, the rotor sandwich
and the bivector exponential, together with the grade-bounded meet
and join. The minimal subset used as basic notation throughout is the inner
product $p\!\cdot\!q$, the reverse $\reverse X$ and the grade
projection $\grade{X}{k}$.

\paragraph{Outer product $p\wedge q$.}
The outer product $p\wedge q = \tfrac12(pq - qp)$ of two grade-one
elements is the oriented Minkowski plane spanned by the two four-momenta; it is
closed in the grade-two sector $\grade{\Cl(1,3)}{2}$ of dimension
six and inherits its squared magnitude from the Cayley--Menger Gram-determinant
identity of Theorem~\ref{lemma:cm-collapse}, so that $|p\wedge q|^2 = (p\!\cdot\!q)^2 - p^2 q^2$ in the convention of
eq.~\eqref{eq:bivnorm}. Physically the two-blade $p_\ell\wedge p_\nu$
encodes the leptonic decay plane of a $W$ boson and
$(p_\ell+p_\nu)\wedge p_b$ the corresponding plane of the parent top;
their bivector products supply the Bernreuther
spin-correlation projector $C_{ij}^{\text{Bernreuther}}$ of
refs.~\cite{Bernreuther:2013aga, Bernreuther:2015yna}
(grade-2 entries of the dictionary). In the equivariant-network
setting the raw two-blade is supplied to the network as a
covariant grade-two input in the sense of
ref.~\cite{Brehmer:2024yqw}; its scalar magnitude is the polynomial
already accessible from $\grade{\evMV}{0}$ through
Theorem~\ref{lemma:cm-collapse}.

\paragraph{Geometric product $pq = p\!\cdot\!q + p\wedge q$.}
The geometric product reunifies the symmetric and antisymmetric parts of the inner and outer products
into the
single bilinear operation that closes the algebra at the
sixteen-dimensional level of eq.~\eqref{eq:grade-decomp}. It is the
universal Lorentz-equivariant primitive of the
geometric-algebra-transformer line of refs.~\cite{brehmer2023geometric,
Spinner:2024hjm, Brehmer:2024yqw}: every equivariant bilinear map on multivector inputs is generated by
grade projections of the geometric product together with its
pseudoscalar-dressed counterpart $\grade{x I y}{k}$ (the join), since
$I$ is itself $\Spin^+(1,3)$-invariant. The closure under the geometric product is what makes
$\Cl(1,3)$, rather than the underlying Minkowski vector space, the
natural carrier for an equivariant input layer.

\paragraph{Hodge dual $X^\star \equiv X\, I^{-1}$.}
The Hodge dual pairs grade $k$ with grade
$4-k$ (Fig.~\ref{fig:grade-ladder}, dashed arcs) and supplies the
covariant four-vector representation of the trivector dual
$T^\star_{ijk}$ (grade-3 entries of the dictionary), the natural
container for the Atwood--Soni-style triple-product CP-odd
features~\cite{Atwood:2000tu}. The dual appears here in two places:
in the construction of the meet (eq.~\eqref{eq:meetjoin},
Sec.~\ref{sec:ops}), and in the identification of the pseudoscalar
coefficient $\grade{p_1 p_2 p_3 p_4}{4}$ as the unique CP-odd
one-bit input feature of $\evMV$. The familiar electromagnetic
instance $\star F = F\, I^{-1}$ is the same pairing; sign-convention
details vs.\ the right-multiplication form $\star F = F\, I$ of
ref.~\cite{Doran:2003} are in the notation card. Self-duality
on grade-two, the Weyl-spinor parallel and fermionic extensions are
deferred to Sec.~\ref{sec:limits}.

\paragraph{Rotor sandwich $X \mapsto R\, X\, \reverse R$.}
The proper orthochronous Lorentz group acts on a multivector through
the sandwich product of eq.~\eqref{eq:rotor-sandwich} with $R$ in the
spin cover $\Spin^+(1,3) \simeq \SLgroup(2,\mathbb C)$, generated by
a bivector $B \in \grade{\Cl(1,3)}{2}$ via $R = \exp(B/2)$. The sandwich is grade-preserving and realises any Lorentz transformation as a single algebraic operation
without reference to a $4\times 4$ matrix on coordinates: the
laboratory-to-helicity transition $R_{\text{hel}}$, the
laboratory-to-Collins--Soper transition $R_{CS}$, and the
Mahlon--Parke optimal-spin-axis rotation~\cite{Mahlon:1999gz} are
each implemented by one rotor that acts uniformly on every grade of
$\evMV$. This is the operation around which all current
geometric-algebra equivariant networks~\cite{brehmer2023geometric,
Spinner:2024hjm, Brehmer:2024yqw} and Lorentz-equivariant
networks~\cite{Bogatskiy:2020tje, Bogatskiy:2022czk} are built; the
input recipe of the present paper is engineered to feed those
networks with the smallest set of features that contains everything
they need.

\paragraph{Tensor product $\Cl(1,3) \otimes \Vflav$.}
The construction of eq.~\eqref{eq:evMV} attaches the discrete
object-type label of each detector object as a Lorentz-trivial
tensor factor $\Vflav$ in the basis of eq.~\eqref{eq:Vflav}. We are
explicit that this is not an algebraic operation internal to
$\Cl(1,3)$: it is the formalisation of the auxiliary scalar channel
introduced as a per-token feature in the geometric-algebra
transformers of ref.~\cite{Brehmer:2024yqw}. The tensor product
factorises the input as a kinematic Clifford factor times an
object-type factor (Fig.~\ref{fig:event-as-multivector}), which is
the interface through which a single architecture handles all
detector object classes uniformly while keeping the Lorentz action of
eq.~\eqref{eq:spinact} confined to the $\Cl(1,3)$ side. Two
algebraically richer alternatives -- an extended Clifford algebra
$\Cl(1,3+n_f)$ (Variant~B) and a multi-particle spacetime
algebra~\cite{Doran:2003,Hestenes:1984} (Variant~C) -- are analysed
and rejected in Sec.~\ref{sec:flav}, on grounds of dimension and of
equivariance under a symmetry the data does not possess.

\paragraph{Meet and join.}
The join and meet of two blades are
\begin{equation}
 A \cup B \;\equiv\; A \wedge B,
 \qquad
 A \cap B \;\equiv\;
 \bigl( A^\star \wedge B^\star \bigr)^{-\star},
 \label{eq:meetjoin}
\end{equation}
the join raising the grade and the meet lowering it through the
double dual.
Geometrically, the join is the smallest blade whose subspace contains those of both operands: two linearly
independent four-momenta join into the bivector $p_i\wedge p_j$
spanning their decay plane, and a four-momentum joined with a
decay-plane bivector yields the trivector spanning the three-volume
of the corresponding top decay. The meet is its dual: the largest blade whose subspace lies in both,
i.e.\ their geometric intersection. In $\Cl(1,3)$ the $I$-dual meet of
two distinct decay-plane bivectors is a \emph{scalar} --- the grade
bound collapses the intersection information, as quantified below ---
while in the conformal extension of Sec.~\ref{sec:limits}
(eq.~\eqref{eq:conformal-embed}) two mass-shell blades meet in a genuine
lower-grade blade encoding the
kinematic intersection of the corresponding hyperboloids. The Hodge dual in the meet formula realises
intersection as a join of orthogonal complements: dualising $A$
and $B$ into their orthogonal complements, joining those, and
dualising back recovers the largest blade contained in both
originals. Both operations are \emph{grade-bounded} in $\Cl(1,3)$,
by which we mean that the sum of the grades of the operands is
constrained by the maximum grade $4$ of the algebra, so that
generic higher-grade combinations collapse: the meet of two distinct two-blades reduces to a
Lorentz-invariant scalar already covered by
Theorem~\ref{lemma:cm-collapse}, and the join of three generic
two-blades vanishes by dimension count. The richer
multi-resonance arena is the conformal extension $\Cl(2,4)$, in
which the grade bound is lifted: this construction is given as a
representational specification in Sec.~\ref{sec:limits}.
The outermorphism property $f(A\wedge B)=f(A)\wedge f(B)$ of any
linear map $f$~\cite{Hestenes:1984} ensures that the bivector
channel of multi-resonance attention layers preserves grade
structure under any per-token Lorentz transform, which is the
architectural prerequisite for stacking equivariant heads on
multi-blade inputs.
Eq.~\eqref{eq:meetjoin} is the operation the pairing tokens of
Sec.~\ref{sec:tWb} apply to the two candidate top decays: the meet
$T_{\rm had}^{(a)}\cap T_{\rm lep}^{(a)}$ of the two candidate
top-decay trivectors, a grade-two covariant whose invariant norm
indicates coplanarity of the two decay volumes in the multi-resonance
reconstruction problem.
Being a function of the Gram matrix, its invariant norm is redundant by
Theorem~\ref{lemma:cm-collapse}; Sec.~\ref{sec:ablation} measures what
supplying it explicitly is worth on the $tWb$ benchmark.
The conformal embedding of Sec.~\ref{sec:limits} provides genuine
grade-raising meets and direction-valued geometric primitives for
resonance-mass shells.

\paragraph{Bivector exponential $\exp(B/2)$.}
The map $B \mapsto R = \exp(B/2)$, with $B$ in the six-dimensional
bivector sector $\grade{\Cl(1,3)}{2}$ and $R$ in the spin cover
$\Spin^+(1,3)$, supplies the smooth manifold structure on which a
learnable rotor parameter lives in equivariant
networks~\cite{brehmer2023geometric, Brehmer:2024yqw}. The bivector
sign rule $\mathrm{sgn}(B^2)$ separates the rotation generators ($B^2 < 0$ in our signature) from
the boost generators ($B^2 > 0$): for instance,
$B = \alpha\,\gamma_{03}$ ($B^2>0$) exponentiates to a longitudinal
boost of rapidity $\alpha$, while $B = \phi\,\gamma_{12}$
($B^2<0$) exponentiates to an azimuthal rotation by $\phi$. One
bivector parameter spans the full set of Lorentz transformations
without reference to the rotation/boost split required by tensor
formulations.

\section{Symmetries and equivariance}
\label{sec:symm}

\paragraph{Continuous spacetime symmetries.}
The full symmetry of the kinematic part of $\evMV$ is the spin
cover $\Spin^+(1,3) \simeq \SLgroup(2,\mathbb C)$ of the proper
orthochronous Lorentz group, acting by the sandwich product
\begin{equation}
 X \;\longmapsto\; R\,X\,\reverse R, \qquad
 R \in \Spin^+(1,3).
 \label{eq:spinact}
\end{equation}
Equation~\eqref{eq:spinact} is grade-preserving and acts as the
fundamental representation on grade one, the adjoint on grade two,
and the corresponding higher representations on grades three and
four. The $\Spin^+(1,3)$-invariants of $\evMV$ are the grade-zero
coefficients enumerated in Sec.~\ref{sec:grades} together with the
grade-four coefficient, which is invariant because
$\det\Lambda=+1$ on the proper orthochronous group; grades one to three
transform equivariantly,
which is the form in which they enter the
networks of refs.~\cite{Bogatskiy:2020tje, Bogatskiy:2022czk,
brehmer2023geometric, Spinner:2024hjm, Brehmer:2024yqw} (the full
list of continuous spacetime symmetries acting on $\evMV$, with their generators and enforcement strategies, is collected in the
symmetry inventory of App.~\ref{app:tables-c2} (Tab.~\ref{tab:symm-spacetime-cont}); the per-bucket tables there are
cited by number below).

\paragraph{The LHC subgroup.}
Choosing a beam axis $\hat n_z$ and a transverse plane breaks
$\Spin^+(1,3)$ to the
longitudinal-boost~$\times$~azimuthal-rotation subgroup
\begin{equation}
 \Hlhc \;\equiv\;
 \SOgroup^+(1,1)_{z}\;\times\;\SOgroup(2)_\phi \;\subset\;
 \Spin^+(1,3)
 \quad\text{(via the rotor double cover of eq.~\eqref{eq:rotor-sandwich})},
 \label{eq:Hlhc}
\end{equation}
the same subgroup that acts diagonally on the LHC beam-axis
parameterisation $(p_T, y, \phi, M_T)$ of eq.~\eqref{eq:lhcparam}.
Missing transverse momentum from one or more neutrinos restricts the
equivariance further within $\Hlhc$, since the longitudinal momentum of
the neutrino system is unobserved. Both restrictions are common to all
collider applications of equivariant networks (see
e.g.~ref.~\cite{Brehmer:2024yqw}); neither is a property of the present
algebraic representation.

\paragraph{Algebraic 2+2 split.}
The algebraic split adapted to $\Hlhc$,
\begin{equation}
 \Cl(1,3) \;\cong\;
 \Cl(1,1)_z\;\hat\otimes\;\Cl(0,2)_\perp,
 \label{eq:22split}
\end{equation}
splits the algebra along the longitudinal/transverse axis. The
first factor carries the longitudinal boost subgroup
$\Spin^+(1,1)$, the second factor the azimuthal rotation
subgroup $\Spin^+(0,2)$, and the graded tensor symbol
$\hat\otimes$ tracks the $\mathbb Z_2$-graded anticommutation
between them. The $(y,M_T)$ content of an event lives in
$\Cl(1,1)_z$ and the $(p_T,\phi)$ content in $\Cl(0,2)_\perp$,
matching the standard decomposition of LHC kinematic variables.
Eq.~\eqref{eq:22split} is used below to read off the equivariance
content of each symmetry bucket.

\paragraph{Which fixed blades are $\Hlhc$-invariant.}
The split makes it immediate which constant algebraic objects a
network may carry as fixed reference tokens without breaking
$\Hlhc$. The two generators are the longitudinal boost
$\gamma_{03}\equiv\gamma_0\gamma_3$ and the azimuthal rotation
$\gamma_{12}$, and they act on a constant blade $B$ by the
commutator. Now $\gamma_{03}$ commutes with itself, and it commutes
with $\gamma_{12}$ because the two blades share no index, so
\begin{equation}
 [\gamma_{03},\,\gamma_{03}] = [\gamma_{12},\,\gamma_{03}] = 0,
 \label{eq:g03inv}
\end{equation}
and the beam bivector $\gamma_{03}$ --- equivalently the
$(t,z)$ plane, the first factor of eq.~\eqref{eq:22split} --- is
invariant under the whole of $\Hlhc$. The individual vectors
$\gamma_0$ and $\gamma_3$ are not: each is rotated into the other by
the longitudinal boost, $[\gamma_{03},\gamma_0]=-2\gamma_3$ and
$[\gamma_{03},\gamma_3]=-2\gamma_0$. A network supplied with
$\gamma_0$ and $\gamma_3$ separately therefore has the ingredients of
the $\Hlhc$ invariants --- $E_i=\grade{T_i\gamma_0}{0}$ and $p_{z,i}=-\grade{T_i\gamma_3}{0}$, from which $p_T$ follows ---
but is not itself $\Hlhc$-equivariant, whereas one supplied with
$\gamma_{03}$ is, by construction. This distinction is not
decorative: Sec.~\ref{sec:ablation} measures the two choices against
each other, and they differ by $0.00742\pm0.00047$ in AUC at
identical parameter count.

\paragraph{Discrete spacetime symmetries.}
Parity $P$, time reversal $T$, and the combined CP act as outer
automorphisms of $\Cl(1,3)$. Parity sends
$\gamma_0 \mapsto +\gamma_0$, $\gamma_i \mapsto -\gamma_i$ and is
implemented by the sandwich
$X \mapsto \gamma_0\, X\, \gamma_0^{-1}$. Time reversal flips
$\gamma_0$ and is implemented as an antilinear involution. Charge
conjugation $C$ does not act on $\Cl(1,3)$; it acts non-trivially
on the object-type space $\Vflav$ by exchanging particle/antiparticle
labels (see Sec.~\ref{sec:flav}).
The combined CP is therefore an action on the full
$\Cl(1,3)\otimes \Vflav$ in which the spacetime piece is the parity
sandwich and the object-type piece is the involution induced by
charge conjugation. The full list of
discrete spacetime symmetries (P, T, CP, CPT) and their action on
the grades of $\evMV$ is given in Tab.~\ref{tab:symm-spacetime-disc}.

\paragraph{Permutational symmetry.}
Reconstructed objects of the same type are physically
exchangeable: swapping two electrons, or two light jets, leaves
the event invariant. This is realised on $\evMV$ as the
permutational symmetry $S_{n_a}$ for each object-type label $a$ in
the basis of $\Vflav$, acting as a relabelling of the per-object
copies in the multi-particle algebra of Sec.~\ref{sec:primer}. In a
network architecture this symmetry is enforced by attention with
shared per-token weights rather than by a symmetric pooling
operation, in line with all current set-attention
architectures~\cite{Brehmer:2024yqw}.

\paragraph{Internal and gauge symmetries.}
The electromagnetic gauge $\mathrm{U}(1)_{\text{em}}$ is represented
in the feature space through a charge-bit embedding on $\Vflav$ (see Sec.~\ref{sec:flav};
row C.1), providing the network with explicit charge information as a
design feature rather than as an inferred property. The electroweak
gauge $\mathrm{SU}(2)_L\times\mathrm{U}(1)_Y$, the colour gauge
$\mathrm{SU}(3)_C$, and the flavour-mixing matrices CKM and PMNS are
not encoded in the feature space; they are handled at the amplitude level in the Monte-Carlo event
generation (rows C.2--C.5). Their role is
implicit through the structure of the generated events, and no hard
architectural constraint enforces them on $\evMV$. Lepton-flavour
universality, realised on $\Vflav$ as an $S_3$ permutation among
$e,\mu,\tau$ tokens of the same lepton-type label, can be implemented
as an optional weight-tying mechanism on the corresponding feature
channels; the breaking by the charged-lepton masses is documented
rather than constrained, and the full implementation is deferred to
the foundation-model program of Sec.~\ref{sec:outlook}. The
complete internal/permutational inventory is collected in
Tab.~\ref{tab:symm-internal}.

\paragraph{Approximate, soft, and CP-channel symmetries.}
Beyond strict equivariance, several softer symmetries shape the architecture (Tab.~\ref{tab:symm-approx-ml-cp}).
Dilatation $\mathrm{SO}(1,1)_D$, representing a global scaling of all
four-momenta, is not a symmetry of the SM Lagrangian but is
approximately preserved by detector-level kinematics; it is
addressed through layer normalisation, treated as a soft
architectural choice rather than as a hard symmetry constraint. The
pseudoscalar sign $\mathbb Z_2$ observable
$\mathrm{sgn}\,\grade{p_1 p_2 p_3 p_4}{4}$ of Sec.~\ref{sec:grades}
(App.~\ref{app:tables-c2-approx} row G.1) enters the demonstration
of Sec.~\ref{sec:tWb} as a single CP-odd one-bit input feature; on
tree-level Standard-Model configurations of the
present example it acts as a null test, a property exploited in the
demonstration but not enforced as a symmetry of the network. The
absence of a learnable positional encoding on the per-object tokens
(row F.2) is an absent-by-design
architectural choice that preserves $S_{n_a}$ invariance. Other
approximate symmetries occasionally invoked at colliders --- the
running-coupling $\mathrm{SO}(1,1)$ of the top-$p_T$ spectrum,
the approximate $\mathrm{SU}(N_f)_L\!\times\!\mathrm{SU}(N_f)_R$
chiral symmetry of QCD with massless quarks
(row E.1), the custodial $\mathrm{SU}(2)_V$ symmetry of the EW Higgs sector
(row E.3), heavy-quark
spin-flavour symmetry, and BSM CP and EDM channels --- are out of
scope for the present feature-space design and are deferred to the
foundation-model program of Sec.~\ref{sec:outlook}.

\paragraph{Symmetry inventory.}
The complete inventory of $30$ symmetries acting on $\evMV$ is
collected in App.~\ref{app:tables-c2}, organised into four
functional buckets (continuous spacetime, discrete spacetime,
internal and permutational, approximate / ML-architectural / CP)
and classified by enforcement strategy (\emph{strict},
\emph{soft}, \emph{embedding}, \emph{conditional},
\emph{automatic}, \emph{out-of-scope}).

\section{Object-type space $\Vflav$}
\label{sec:flav}

The discrete reconstructed information --- charged-lepton flavour, the
electric-charge bit, $b$- and $\tau$-tag bits, the missing transverse
momentum --- is a Lorentz-trivial complement to the kinematic Clifford
factor of the event multivector. We collect it in a finite-dimensional
real vector space $\Vflav$ that is tensored to $\Cl(1,3)$ and on which
$\Spin^+(1,3)$ acts as the identity. The construction follows the
auxiliary-scalar pattern of the equivariant
transformer~\cite{Brehmer:2024yqw} (Variant~A in the taxonomy below);
algebraically richer alternatives are recorded as discussion items
and deferred to subsequent work. Discrete symmetries acting on
$\Vflav$ --- the charge-conjugation involution $C$, the parity action
$P$ inherited from the kinematic factor, and the per-flavour
permutational $S_n$ on identical-token buckets --- are inventoried in
Sec.~\ref{sec:symm} and App.~\ref{app:tables-c2-int}.

The event multivector $\evMV$ itself was defined in
Sec.~\ref{sec:evmv-def}; this section constructs the object-type factor
$\Vflav$ it requires. The order of the slots in eq.~\eqref{eq:evMV} is immaterial: a network operating
on $\evMV$ realises the permutational symmetry of identical
detector objects through set-attention rather than through the
algebraic sum. The action of $\Spin^+(1,3)$ on $\evMV$ acts only on
the $\Cl(1,3)$ factor through the sandwich product
of eq.~\eqref{eq:spinact}, leaving the $\Vflav$ component
invariant. This is the statement that a global change of
inertial frame does not change particle identity (an electron
remains an electron in every Lorentz frame): the sandwich
boosts every $p_i$ and, by the outermorphism property of
Sec.~\ref{sec:ops}, every multi-particle blade
$p_i\!\wedge\!p_j\!\wedge\!\cdots$ generated from them by the
same global rotor $R$, but it does not touch the object-type
labels $|f_i\rangle$. The multi-particle mixing of momenta
across different flavours --- the leptonic-$W$ plane
$p_\ell\!\wedge\!p_\nu$ combining an $e/\mu$ with $\MET$, the
top-decay trivector $p_\ell\!\wedge\!p_\nu\!\wedge\!p_b$
combining three different object types, and so on --- is
produced by the algebra operations of Sec.~\ref{sec:ops},
\emph{not} by the Lorentz action. The discrete symmetries
$C$, $P$, $T$, which do permute object types and charges,
act non-trivially on $\Vflav$ and are inventoried separately
in Sec.~\ref{sec:symm}.

\paragraph{Object-type basis.}
For the LHC final states of interest in the present paper -- the
single-lepton-plus-jets signature of the $tWb$ demonstration and the
broader class of $t$-quark physics signatures of the HiGEN
(Hierarchical Geometric Event Network) program --
we use the basis
\begin{equation}
 \mathbb F \;=\;
 \bigl\{
 e^+,\, e^-,\, \mu^+,\, \mu^-,\,
 \tau_h^+,\, \tau_h^-,\,
 j,\, bj,\,
 \gamma,\,
 \MET
 \bigr\},
 \qquad
 \dim \Vflav \;=\; 10.
 \label{eq:Vflav}
\end{equation}
This particular basis is an \emph{instance}, not a property of the
construction. Nothing in eqs.~\eqref{eq:evMV}--\eqref{eq:spinact}
fixes $\mathbb F$ or its dimension: $\Vflav$ is a parameter of the
reconstruction vocabulary, chosen here to cover the standard
reconstructed object types of ATLAS and CMS analyses of top-quark
physics at the operating point of
refs.~\cite{Boos:2023kpp, Boos:2020rqy, Baskakov:2019rbq}, which
gives $\dim\Vflav = 10$. Additional tags ($c$-tag, boosted
$W/Z/H/t$ substructure tags, forward-tagged jets) enlarge
$\mathbb F$ without touching anything else in the formalism.

Two consequences should be stated plainly, since the reconstruction
level and the truth level do not coincide. The basis of
eq.~\eqref{eq:Vflav} is a \emph{reconstruction-level} vocabulary: it
carries missing transverse energy rather than a neutrino, and it does
not separate quark from antiquark, both light jets mapping to $j$ and
both $b$-jets to $bj$. The parton-level demonstration of
Sec.~\ref{sec:tWb} therefore does not use it. That demonstration
works at truth level with the six-element basis
$\{\mu^-,\bar\nu_\mu,u,\bar d,b,\bar b\}$, in which the neutrino is
an explicit four-momentum and quark charge is known, so the
missing-energy encoding of eq.~\eqref{eq:MET} is not exercised there
and the flavour-tagging caveats below do not apply to it. We flag the
difference here because the two bases have different dimensions and a
reader tracking $\dim\Vflav$ through the paper would otherwise find
them inconsistent.

\begin{figure}[ht]
\centering
\begin{tikzpicture}[
 every node/.style={font=\footnotesize},
 block/.style={draw, rounded corners=2pt, minimum width=3.4cm,
 minimum height=0.55cm, align=center, line width=0.4pt},
 g0/.style={block, fill=blue!8},
 g1/.style={block, fill=cyan!10},
 g2/.style={block, fill=green!10},
 g3/.style={block, fill=yellow!12},
 g4/.style={block, fill=red!10},
 vf/.style={block, fill=gray!12, minimum width=4.0cm},
 arrow/.style={-{Latex[length=1.6mm]}, line width=0.4pt}
]
\node[align=center, text width=8.4cm] (root) at (-0.1, 0.75)
 {\large $\evMV$ (eq.~\ref{eq:evMV}) and its algebraic closure
  (Sec.~\ref{sec:evmv-def})};
\node[vf] (vfb) at (-3.6, -1.70) {object-type factor $\Vflav$\\
 $|f\rangle$, $\dim = 10$};
\node[g0] (g0b) at (3.5, -0.65) {grade 0: Lorentz scalars\\
 $\{p_i\!\cdot\!p_j,\, m_i^2,\, \hat s\}$};
\node[g1] (g1b) at (3.5, -1.70) {grade 1: $N$ four-vectors\\
 $\{p_i\}$};
\node[g2] (g2b) at (3.5, -2.75) {grade 2: $\binom{N}{2}$ bivectors\\
 $\{p_i\wedge p_j\}$};
\node[g3] (g3b) at (3.5, -3.80) {grade 3: $\binom{N}{3}$ trivectors\\
 $\{p_i\wedge p_j\wedge p_k\}$};
\node[g4] (g4b) at (3.5, -4.85) {grade 4: $\binom{N}{4}$ pseudoscalars\\
 $\{p_i\wedge p_j\wedge p_k\wedge p_l\}$};
\draw[arrow] (root.south) -- ++(0,-0.2) -| (vfb.north);
\draw[arrow] (root.south) -- ++(0,-0.2) -| (g0b.west);
\draw[arrow] (root.south) -- ++(0,-0.2) -| (g1b.west);
\draw[arrow] (root.south) -- ++(0,-0.2) -| (g2b.west);
\draw[arrow] (root.south) -- ++(0,-0.2) -| (g3b.west);
\draw[arrow] (root.south) -- ++(0,-0.2) -| (g4b.west);
\node[draw=gray!50, dashed, fit=(g0b)(g4b), inner sep=5pt,
 label={[font=\footnotesize\itshape, align=center, yshift=-1mm]below:
 Lorentz-equivariant kinematic factor $\Cl(1,3)$}] {};
\end{tikzpicture}
\caption{The event multivector $\evMV$ as a hierarchical
decomposition into the kinematic Clifford factor $\Cl(1,3)$ and
the Lorentz-trivial object-type factor $\Vflav$. The per-grade
counts on the right are for an event with $N$ final-state
detector objects (the parton-level $tWb$ demonstration of
Sec.~\ref{sec:tWb} uses $N=6$; reconstructed-level
single-lepton-plus-jets samples typically extend to
$N\!\lesssim\!8$ once ISR jets are included). The object-type
factor on the left carries one entry per detector token from
the basis of eq.~\eqref{eq:Vflav}. The per-grade physical reading of
the kinematic factor is developed in Sec.~\ref{sec:grades}. Grades
${\geq}2$ arise from the closure of the grade-one input
(Sec.~\ref{sec:evmv-def}), not from eq.~\eqref{eq:evMV} itself; their
type labels live in tensor powers of $\Vflav$
(Tab.~\ref{tab:symm-internal}).}
\label{fig:event-as-multivector}
\end{figure}

\paragraph{Variant choice.}
We adopt Variant~A (the tensor product $\Cl(1,3)\otimes\Vflav$
of eq.~\eqref{eq:evMV}) throughout the paper: the Lorentz factor
commutes with the discrete object-type relabellings, the algebraic
content of $\evMV = p\otimes|f\rangle$ is the tensor product, and
the discrete labels $|f\rangle$ live in a fixed real vector space
rather than in an extended Clifford algebra, in line with the
auxiliary-scalar channel of ref.~\cite{Brehmer:2024yqw}.
Variant~B --- the extended Clifford algebra $\Cl(1,3+n_f)$, in which
$n_f$ extra basis vectors carry the object-type labels and the
algebraic charge-conjugation automorphism acts non-trivially on the
flavour generators --- is rejected here rather than deferred, and it
is worth saying why explicitly, since it is the only alternative with
a genuine algebraic advantage.

That advantage is real but narrow: $C$ becomes an inner automorphism
of the algebra instead of an involution on a separate factor. The
costs are three. First, dimension. The extended algebra has
$2^{4+n_f}$ basis elements per token, which for the reconstruction
vocabulary of eq.~\eqref{eq:Vflav} ($n_f=10$) is $16\,384$ against
$\dim\bigl(\Cl(1,3)\otimes\Vflav\bigr)=16\times 10=160$ --- a factor
$102$, and the comparison is against the tensor product actually
used, not against $\Cl(1,3)$ alone. Second, and more seriously, the
enlarged spin group $\Spin(1,3+n_f)$ contains transformations mixing
spacetime with object-type directions, for which there is no physical
symmetry to be equivariant to; equivariance under the full group
would be equivariance under a symmetry the data does not have.
Third, we are aware of no practical off-the-shelf equivariant layer
implementation at this dimensionality, so the construction could not be
trained today
without writing them. We therefore keep the tensor product, in which
the Lorentz action stays confined to the $\Cl(1,3)$ factor by
eq.~\eqref{eq:spinact}, and record Variant~B as closed rather than
open. Variants~C (the multi-particle Clifford
bundle $\Cl(1,3)^{\otimes N}$ of
refs.~\cite{Doran:2003,Hestenes:1984}) and~D (the $\mathbb
Z_2$-graded extension implicit in the algebraic action of $C$) are
not implemented in the present paper.
The $C$-equivariance of the $\Vflav$-tensored representation is
therefore not enforced as an algebraic symmetry; it is broken by
the choice of basis in eq.~\eqref{eq:Vflav}, in which the
particle/antiparticle labels are independent unit vectors rather
than the $\pm 1$ eigenstates of an algebraic charge-conjugation
involution. The cost of this choice is a one-bit asymmetry in the
network's response to charge-conjugated events; the benefit is a
direct compatibility with the
ATLAS/CMS object reconstruction conventions and with the scalar
channel of refs.~\cite{Spinner:2024hjm, Brehmer:2024yqw}.

\paragraph{The missing-energy pseudo-object.}
The reconstructed missing transverse momentum is the
two-dimensional vector
$\vec p_T^{\mathrm{miss}} \equiv -\sum_{i\in\text{vis}}\vec p_T^{\,i}$
of visible-transverse-momentum imbalance, with components
$(p_x^{\mathrm{miss}}, p_y^{\mathrm{miss}})$ and scalar magnitude
$\MET \equiv |\vec p_T^{\mathrm{miss}}|$ (the missing transverse
energy proper). The pair $(\vec p_T^{\mathrm{miss}}, \MET)$ is
treated as a pseudo-object with grade-one
component
\begin{equation}
 p_{\MET} \;=\; (0,\, p_x^{\mathrm{miss}},\, p_y^{\mathrm{miss}},\, 0),
 \qquad
 f_{\MET} \;=\; |\MET\rangle,
 \label{eq:MET}
\end{equation}
together with a binary measurement mask
$m^{\mathrm{meas}}_{\MET} = (0,\,1,\,1,\,0)$ in $\Vflav$ flagging
the energy and longitudinal slots as unmeasured. The mask is
propagated through downstream layers so that the apparent space-like
value $p_{\MET}^2 < 0$ is never used as a physical quantity. The
Lorentz transformation rule of $p_{\MET}$ is the restriction of
the rule for a generic four-momentum to the LHC subgroup $\Hlhc$
of eq.~\eqref{eq:Hlhc}, since the longitudinal boost component is
unmeasured. We adopt the partial four-momentum of
eq.~\eqref{eq:MET} for input-layer uniformity; three alternatives
are summarised in Tab.~\ref{tab:met-encodings}.

\begin{table}[ht]
\centering
\footnotesize
\caption{Candidate algebraic encodings of $\MET$. Adopted choice:
partial four-momentum (row~1).}
\label{tab:met-encodings}
\renewcommand{\arraystretch}{1.15}
\begin{tabularx}{\linewidth}{@{}>{\raggedright\arraybackslash}p{4.2cm}
 >{\raggedright\arraybackslash}p{2.0cm}
 >{\raggedright\arraybackslash}X
 >{\raggedright\arraybackslash}X@{}}
\hline\hline
\textbf{Encoding} & \textbf{Grade} & \textbf{Preserves} & \textbf{Cost} \\
\hline
Partial four-momentum (eq.~\eqref{eq:MET}) & 1 (masked $E,p_z$)
 & per-token uniformity & explicit binary mask \\
Transverse two-blade $p_T^{\mathrm{miss}}\wedge\hat n_z$ & 2
 & $\SOgroup(2)_\phi$ orbit
 & breaks per-token uniformity \\
$p_z(\nu)$ from $W$-mass~\cite{Boos:2023kpp,Boos:2020rqy} & 1 (full)
 & full Lorentz orbit
 & quadratic; single-$\nu$ \\
$2{+}2$ split $\Cl(1,1)\!\otimes\!\Cl(0,2)$ & basis change
 & LHC-natural longitudinal/transverse symmetry
 & non-trivial rewrite \\
\hline\hline
\end{tabularx}
\renewcommand{\arraystretch}{1.0}
\end{table}

\paragraph{Discrete symmetries on $\Vflav$.}
The discrete symmetries acting on the object-type factor are
catalogued in App.~\ref{app:tables-c2-int}: the charge-conjugation
involution $C$ acting on $\Vflav$ as a $\mathbb Z_2$ exchanging
particle/antiparticle labels (row C.1); the per-flavour permutation
$\prod_a S_{n_a}$ realised by set-attention (rows D.1--D.5); and the
lepton-flavour-universality $S_3$ permutation acting among $e,\mu,\tau$
tokens of the same lepton-type label as a soft weight-tying option.
The combined $CP$ acts on $\Cl(1,3)\otimes\Vflav$ as the parity
sandwich on the kinematic factor times the $C$-involution on the
object-type factor. We do not enlarge $\Vflav$ to a $\mathbb
Z_2$-graded extended Clifford algebra here; the
algebraically richer Variant~B alternative is analysed above and
rejected there, not deferred.

\paragraph{Impact-parameter caveat.}
The $b$-tag information is reduced to the binary slot $|bj\rangle\in
\mathbb F$; the underlying impact-parameter and secondary-vertex
kinematics that enter at reconstruction are not encoded in the input
layer, and a full vertex-level treatment is deferred to
Sec.~\ref{sec:outlook}.

\paragraph{Spin-state caveat.}
The event multivector $\evMV$ encodes 4-momenta and reconstructed
object-type flags only; fermion spin states are not represented as
Dirac spinors $\Psi\in\Cl(1,3)$, and spin information enters
implicitly via the second moment $\evMV\otimes\evMV$ (decay-plane
bivectors and Bernreuther correlations of Sec.~\ref{sec:grades}).

\section{Resonance-topology separation in $pp\to tWb$}
\label{sec:tWb}

In this section we apply the algebraic representation of the
preceding sections to a concrete physical example, the
parton-level separation of single- and double-resonant
contributions to the $pp\to tWb$ final state, and compare a
compact geometric-algebra network on multivector input
(\netGa) with the established high-level reference network of
ref.~\cite{Boos:2023kpp} (\netRef). Realistic detector effects,
calibration and a data fit are out of scope of this comparison
and are subjects of dedicated separate studies.

\paragraph{Physics target.}
The process $pp\to tWb$ is the lowest-multiplicity hadron-collider
final state in which a single Breit--Wigner top resonance and a pair
of them coexist as components of the same matrix element.
The associated single-top channel
$pp\to tW^-\bar b$ contributes one resonant top through
$t\to W^+ b$, while the doubly-resonant $t\bar t$ component of
the same six-particle final state contributes two through
$t\to W^+ b$ and $\bar t\to W^- \bar b$; the two topologies are
two diagram classes of the same matrix element.
Differential distributions are sensitive to spin correlations
between production and decay and between distinct particles
within the same event, a well-developed phenomenological
block~\cite{Mahlon:1995zn, Mahlon:1999gz, Bernreuther:2015yna};
the present demonstration is intended to expose how the algebraic
representation handles the resonance topologies grade by grade, not to
claim a new measurement of them. The same separation problem arises,
with more resonances and a larger combinatorial burden, in the modelling of triple-top production~\cite{Boos:2021yat} and of
four-top final states, where diagram classes carrying three and four
simultaneous top resonances share a final state with the
lower-multiplicity classes.
The relevance of this $tWb$ system has been sharpened by recent
observations of a cross-section enhancement near the $t\bar t$
production threshold by CMS~\cite{CMS:2025kzt} and
ATLAS~\cite{ATLAS:2026dbe, ATLAS:2026qre}, consistent
with the formation of a colour-singlet quasi-bound toponium
state with cross sections of $8.8^{+1.2}_{-1.4}$~pb and
$9.3^{+1.4}_{-1.3}$~pb respectively. The modelling of the same
gauge-invariant $tWb$ matrix element used here --- and in
particular the treatment of the $t\bar t/tW$ overlap that the
two truth classes of the next paragraph isolate --- enters those
measurements as one of the leading sources of systematic
uncertainty, so an input representation that exposes the
resonance topology directly addresses an open modelling issue
in top physics.
Representative diagrams of the two resonance topologies are
shown in Fig.~\ref{fig:tWb-feynman}(a),(b).

\begin{figure}[t]
\centering
\begin{tikzpicture}[
 every node/.style={font=\footnotesize},
 fermion/.style={postaction={decorate},
 decoration={markings, mark=at position 0.55 with
 {\arrow[line width=0.5pt]{Latex[length=1.5mm]}}},
 line width=0.5pt},
 fermionR/.style={postaction={decorate},
 decoration={markings, mark=at position 0.55 with
 {\arrowreversed[line width=0.5pt]{Latex[length=1.5mm]}}},
 line width=0.5pt},
 gluon/.style={decorate,
 decoration={coil, aspect=0.5, segment length=1.4mm,
 amplitude=0.7mm},
 line width=0.5pt},
 boson/.style={decorate,
 decoration={snake, segment length=1.6mm, amplitude=0.5mm},
 line width=0.5pt},
 res/.style={line width=1.0pt},
 vtx/.style={circle, fill=black, inner sep=1pt, minimum size=2pt},
 lbl/.style={font=\scriptsize, inner sep=1pt}
]
\begin{scope}[shift={(-4.2,0)}]
 \node[lbl, anchor=east] at (-2.6, 1.2) {$g$};
 \node[lbl, anchor=east] at (-2.6,-1.2) {$g$};
 \node[vtx] (Wgg) at (-1.4, 0.0) {};
 \node[vtx] (Wtt) at (-0.2, 0.6) {};
 \node[vtx] (Wbar) at ( 0.7,-0.4) {};
 \node[vtx] (WW) at ( 1.6,-1.2) {};
 \draw[gluon] (-2.6, 1.2) -- (Wgg);
 \draw[gluon] (-2.6,-1.2) -- (Wgg);
 \draw[gluon] (Wgg) -- (Wtt)
 node[lbl, midway, above=1pt, fill=white, inner sep=1pt] {$g^{\!*}$};
 \draw[fermion, res] (Wtt) -- (2.8, 1.4)
 node[lbl, anchor=west] {$t$};
 \draw[fermionR, res] (Wtt) -- (Wbar)
 node[lbl, midway, left=2pt, fill=white, inner sep=1pt] {$\bar t$};
 \draw[fermionR] (Wbar) -- (2.8, 0.0)
 node[lbl, anchor=west] {$\bar b$};
 \draw[boson] (Wbar) -- (WW)
 node[lbl, midway, right=2pt, fill=white, inner sep=1pt] {$W^-$};
 \draw[fermionR] (WW) -- (2.8,-0.6)
 node[lbl, anchor=west] {$\bar\nu_\mu$};
 \draw[fermion] (WW) -- (2.8,-1.7)
 node[lbl, anchor=west] {$\mu^-$};
 \draw[decorate, decoration={brace, amplitude=2pt},
 line width=0.3pt, black!55]
 (3.4, 0.0) -- (3.4,-1.7)
 node[lbl, midway, right=4pt, text=black!75]
 {$=m_t$};
 \node[font=\footnotesize] at (0.4,-2.3)
 {(a) double-resonant: $t\bar t$};
\end{scope}
\begin{scope}[shift={(3.0,0)}]
 \node[lbl, anchor=east] at (-2.6, 1.2) {$g$};
 \node[lbl, anchor=east] at (-2.6,-1.2) {$g$};
 \node[vtx] (Vb) at (-1.0,-1.2) {};
 \node[vtx] (Vw) at (-1.0, 0.0) {};
 \node[vtx] (Vt) at (-1.0, 1.2) {};
 \node[vtx] (Vlep) at ( 0.8, 0.0) {};
 \draw[gluon] (-2.6, 1.2) -- (Vt);
 \draw[gluon] (-2.6,-1.2) -- (Vb);
 \draw[fermionR] (Vb) -- (2.8,-1.4)
 node[lbl, anchor=west] {$\bar b$};
 \draw[fermion] (Vb) -- (Vw)
 node[lbl, midway, left=2pt, fill=white, inner sep=1pt] {$b^{\!*}$};
 \draw[fermion] (Vw) -- (Vt)
 node[lbl, midway, left=2pt, fill=white, inner sep=1pt] {$t^{\!*}$};
 \draw[fermion, res] (Vt) -- (2.8, 1.4)
 node[lbl, anchor=west] {$t$};
 \draw[boson] (Vw) -- (Vlep);
 \node[lbl, fill=white, inner sep=1pt] at (-0.1, 0.0) {$W^-$};
 \draw[fermion] (Vlep) -- (2.8, 0.4)
 node[lbl, anchor=west] {$\mu^-$};
 \draw[fermionR] (Vlep) -- (2.8,-0.4)
 node[lbl, anchor=west] {$\bar\nu_\mu$};
 \draw[decorate, decoration={brace, amplitude=2pt},
 line width=0.3pt, black!55]
 (3.4, 0.4) -- (3.4,-1.4)
 node[lbl, midway, right=4pt, text=black!75]
 {$\ne m_t$};
 \node[font=\footnotesize] at (0.4,-2.3)
 {(b) single-resonant: $tW$};
\end{scope}
\end{tikzpicture}
\caption{Two representative tree-level diagrams contributing to the
$gg\to t\bar b\,\mu^-\bar\nu_\mu$ final state of the matrix-element
set of refs.~\cite{Baskakov:2019rbq, Boos:2020rqy}. The braces on
the right indicate the invariant mass of the
$\bar b\,\mu^-\bar\nu_\mu$ trio --- the observable that
distinguishes the two topologies.
(a)~\textbf{Double-resonant} ($t\bar t$): $\bar b$, $\mu^-$ and
$\bar\nu_\mu$ originate from the same $\bar t$-line, so
$m_{\bar b\mu^-\bar\nu_\mu}\!=\!m_t$ on the Breit--Wigner pole.
(b)~\textbf{Single-resonant} ($tW$): the $\bar b$ comes from a
$g\!\to\!b\bar b$ splitting rather than from a top decay, so the
trio does \emph{not} share a common top resonance and
$m_{\bar b\mu^-\bar\nu_\mu}\!\ne\!m_t$. The two components coexist
as gauge-invariant pieces of the same matrix
element~\cite{Boos:2020rqy, Baskakov:2019rbq}.}
\label{fig:tWb-feynman}
\end{figure}

\paragraph{Truth definition of the two classes.}
The two classes are defined at the matrix-element level through
a diagram-removal scheme~\cite{Boos:2020rqy, Boos:2023kpp}.
Class~A (single-resonant) is generated keeping only the
electroweak tree-level $tWb$ diagrams that produce a single
top resonance through $t\to W^+ b$. Class~B (double-resonant)
is generated keeping the $t\bar t$ diagrams that produce two top
resonances through $t\to W^+ b$ and $\bar t\to W^- \bar b$. A
third sample is generated with the full gauge-invariant
matrix element, including both classes and their interference;
it is not used for training and serves only as a cross-check on
the inclusive event population (third curve of the discriminant
plot below). Truth labels are taken from the parton-level record
of the matrix-element generator.

\paragraph{Generator and event selection.}
\label{para:gen}
Events are generated at $\sqrt s = 14$~TeV with the
matrix-element generator \textsc{CompHEP~4.5}~\cite{CompHEP:2004qpa}
for the six-particle final state
$gg \to \mu^-\,\bar\nu_\mu\,u\,\bar d\,b\,\bar b$, in which the
leptonic decay $\bar t\to\bar b\,W^-\to\bar b\,\mu^-\bar\nu_\mu$
and the hadronic decay $t\to b\,W^+\to b\,u\bar d$ enter
together with the associated single-top production
$gg\to tW^-\bar b$ ($t$ hadronic, $W^-\to\mu^-\bar\nu_\mu$) as
diagram classes of the same matrix element for this fixed final
state. The
single- and double-resonant contributions and their interference
are produced as components of the same matrix element. The
demonstration is performed at the parton level: neither parton
shower nor detector simulation is applied, so that the test
isolates the effect of the algebraic input representation from
shower and detector physics. The matrix-element scheme, parton
distribution, factorisation and renormalisation scales, and the
event selection $p_T^{j_4}>10$~GeV reproduce the setup of the
reference network of ref.~\cite{Boos:2023kpp}.

\paragraph{Reference network.}
The reference network \netRef\ of ref.~\cite{Boos:2023kpp} is a
fully-connected feed-forward neural network on
${\sim}75$ high-level Lorentz-invariant features, combining the
recipes of refs.~\cite{Boos:2008sdz, Dudko:2020qas}: logarithms
of pairwise dot products of the reconstructed four-momenta
$\log\,p_i\!\cdot\!p_j$ (which are positive on
the kinematics of interest and equivalent to logarithms of
two-body invariant masses for light final-state partons),
logarithms of transverse momenta $\log p_T$, logarithms of the
reconstructed top and $W$ masses and of $\hat s$, $\MET$, $H_T$,
and a complete set of pseudorapidities and helicity-frame
cosines that enter the network without a logarithm. Three
hidden layers of five hundred units each, with
$L_2$-regularisation and dropout, are trained on $10^5$ $tW$
and $10^5$ $t\bar t$ parton-level events with the cut
$p_T^{j_4}>10$~GeV; binary cross-entropy with inverse-class-frequency
weighting and full-batch Adam are used. The published checkpoint
contains $5.4\!\times\!10^5$ trainable parameters. Training and
a complete list of input variables are documented in
ref.~\cite{Boos:2023kpp}; we reuse the published checkpoint
without retraining.

\paragraph{Geometric-algebra network.}
A compact L-GATr variant (informally, GATr-lite) is used: it is a compact multivector-equivariant transformer in the family of ref.~\cite{Brehmer:2024yqw} that matches \netRef\ at the parton
level on this benchmark, so the comparison probes the inductive
bias of the algebraic input layer rather than the network capacity.
The geometric-algebra network \netGa\ ingests the same events as
multivectors of $\Cl(1,3)\otimes\Vflav$. Each of the six
final-state partons of the generator setup described above is
represented as one token of the form
$T_i = \grade{T_i}{0}\oplus\grade{T_i}{1}$,
where $\grade{T_i}{1}=p_i^\mu\gamma_\mu$ is the four-momentum and
$\grade{T_i}{0}$ carries a one-hot encoding of the object type
($\mu^-,\bar\nu_\mu,u,\bar d,b,\bar b$), together with the
electric charge and $b$-tag flags as scalar channels. The
network has three equivariant blocks, each composed of (i)~a
per-grade \emph{equivariant linear} layer mixing channels within
each grade~\cite{Brehmer:2024yqw}, (ii)~a
\emph{geometric-product} layer that contracts pairs of
multivector channels and projects the result back through
grade-aware linear mixing, so that bivectors $p_i\wedge p_j$
and trivectors $p_i\wedge p_j\wedge p_k$ enter the network as
covariant objects of the corresponding grades, and (iii)~a
\emph{Lorentz-invariant attention} head whose attention scores
are scalar inner products $\langle T_i, T_j\rangle_0$ of
multivector keys and queries. Equivariance under $\Spin^+(1,3)$
holds by construction: every operation either projects to
grade-zero scalars (preserving invariance) or acts on
multivector channels by per-token rotor sandwich (preserving
covariance). The readout extracts the grade-zero coefficients
of the final tokens, applies a permutation-invariant mean pool
over the six tokens and feeds a small dense head to the binary
discriminant. With the candidate-pairing tokens the network has
$151{,}489$ trainable parameters against $539{,}501$ for \netRef,
a factor $3.6$ below it, so that the comparison probes inductive
bias rather than capacity. Training uses AdamW with
a cosine learning-rate schedule (initial $\lambda=10^{-3}$,
weight decay $10^{-4}$), mini-batch size $512$, gradient-norm
clipping at $1.0$, and binary cross-entropy loss; \netGa\ is
trained on the full parton-level Monte-Carlo statistics passing
the same selection $p_T^{j_4}>10$~GeV as \netRef\
($\sim$270k $tW$ and $\sim$10M $t\bar t$ events before the held-back split;
9.3M enter training) with a balanced random sampler that draws ${\sim}500$k events per epoch.
Training is run for $50$ epochs, a few hours per configuration on a
single consumer-class GPU. Every AUC is taken at the last epoch, so
that no model is selected by the quantity being reported.

\paragraph{Event-level pairing tokens.}
Both $tW$ and $t\bar t$ share the same six-particle final state
($\mu^-$, $\bar\nu_\mu$, two light jets $j_1,j_2$, and two
$b$-tagged jets $b_1,b_2$). The discrimination between single- and double-resonant topologies is,
at detector level, inseparable from the combinatorial assignment of the
two $b$-tagged jets to the candidate top-quark resonances --- the
classical reconstruction problem of associated $tWb$ production. At the
parton level used here the object-type labels distinguish $b$ from
$\bar b$ and thus fix the assignment; the pairing tokens are
nevertheless constructed symmetrically over both hypotheses, to mimic
the detector-level situation in which no such label exists. The per-particle tokens populate grades zero and one
(App.~\ref{app:tokens}); we complement them with two
\emph{event-level pairing tokens}, one for each candidate
$b$-assignment $a\in\{1,2\}$, that inject grade-two and
grade-three content of the multi-resonance topology directly
at the input layer. The two tokens supplement the six
per-particle tokens (eight tokens in total) and concretely realise
the event-as-multivector view of Sec.~\ref{sec:intro} at the
input rather than only through internal self-attention: their covariant channels carry the candidate hadronic-top trivectors
$T_{\rm had}^{(a)}=p_{j_1}\wedge p_{j_2}\wedge p_{b_a}$ (grade three),
their Hodge duals $\star T_{\rm had}^{(a)}$ (grade one), and the meet
$T_{\rm had}^{(a)}\cap T_{\rm lep}^{(a)}$ of the two candidate top
trivectors, with
$T_{\rm lep}^{(a)}=p_\mu\wedge p_{\bar\nu}\wedge p_{b_{\bar a}}$ built
from the complementary $b$-jet, which is grade two because the dual of a grade-three blade is grade one and the wedge of two grade-one elements is grade two, so that the
grade-two and grade-three covariants of both $b$-assignments
enter the input simultaneously, the intention being that the
equivariant attention head weights the two candidate assignments
without an explicit combinatorial step. We report no
assignment-accuracy measurement for that head, and
Sec.~\ref{sec:ablation} finds that these tokens do not improve the
classifier. Their scalar channels (grade zero) carry the Breit--Wigner pulls
(eq.~\eqref{eq:pairing-pull}) of the two candidate top-mass
reconstructions,
\begin{equation}
 s^{(a)}_{\rm had} \;\equiv\;
 \frac{(p_{j_1}+p_{j_2}+p_{b_a})^2 - m_t^2}{m_t\,\Gamma_t},
 \qquad
 s^{(a)}_{\rm lep} \;\equiv\;
 \frac{(p_\mu+p_{\bar\nu}+p_{b_{\bar a}})^2 - m_t^2}{m_t\,\Gamma_t},
 \qquad \bar a \equiv 3-a,
 \label{eq:pairing-pull}
\end{equation}
the symmetric and antisymmetric $Wb$ combinations (writing
$W_{\rm had}\equiv p_{j_1}+p_{j_2}$ for the hadronic $W$ candidate)
\begin{equation*}
 \sigma^{(\pm)}_{Wb} \;=\; \tfrac12\bigl[(W_{\rm had}+p_{b_1})^2
 \pm (W_{\rm had}+p_{b_2})^2\bigr],
\end{equation*}
and the analogous combinations for the leptonic $W$, the two candidate
$W$-boson masses, the coplanarity scalar
$\grade{M\reverse{M}}{0}$ of the meet, and the pseudoscalar sign ---
ten channels in all (App.~\ref{app:tokens}). By
Theorem~\ref{lemma:cm-collapse} every scalar channel of the pairing
tokens reduces to a polynomial in $\{p_i\!\cdot\!p_j, m_i^2\}$
and so does not constitute a new Lorentz invariant; the
structural content of the pairing tokens is therefore the
explicit appearance of the grade-two and grade-three covariant
objects of the multi-resonance topology at the input layer,
rather than their build-up inside the network.

What that explicit appearance buys is measured in
Sec.~\ref{sec:ablation}: on this benchmark, whose entire class
difference is one resonant three-particle combination, nothing --- the
collapse of Theorem~\ref{lemma:cm-collapse} bounds what it could have
been to zero new invariants.

\paragraph{Relation to prior work.}
At the per-particle level the token shape
$T_i = \grade{T_i}{0}\oplus\grade{T_i}{1}$ of \netGa\ coincides
with the multivector input encoding of L-GATr~\cite{Brehmer:2024yqw}:
grade-zero channels carry the object-type one-hot encoding and
discrete tags, grade-one channels carry the four-momentum, and
higher-grade covariants are built up inside the network by the
geometric-product layers. Three properties distinguish the present
construction from the geometric-algebra transformer line of
work~\cite{Brehmer:2024yqw, Spinner:2024hjm, brehmer2023geometric}.
\emph{(i)~Event-level pairing tokens, and where they fail.} The two
pairing tokens described above place grade-two and grade-three
covariants of the multi-resonance topology at the input layer. The
L-GATr line populates input bivector channels only with
global-geometry reference tokens (beam axis, time direction) and
reconstructs multi-particle topology content internally through the
geometric-product layers; the present construction supplies that
topology content directly at input instead. We had expected this to
be an advantage. It is not: the ablation of
Sec.~\ref{sec:ablation} shows that the topology tokens do not
improve the classifier, while a fixed grade-two \emph{reference}
bivector --- the beam plane, i.e.\ exactly the kind of object the
L-GATr line already uses --- improves it by $0.00751\pm0.00055$,
some $3.6$ times what removing the topology tokens is worth, at
identical parameter count.
The distinction that survives is therefore not input-versus-internal
but derived-versus-reference. One half of it is a theorem: a
grade-$k$ object reconstructible from the final-state momenta adds
no invariant, by Theorem~\ref{lemma:cm-collapse}. The other half is a hypothesis, and we state it as one: content encoding
structure absent from the measured momenta need not be redundant, and may
be worth presenting in the grade that carries it. The evidence here is
one instance of it --- the beam bivector, in this final state with this
architecture --- which is what a first realisation on a minimal benchmark
can supply; establishing the scope of the statement across processes and
architectures is work for the programme, not for this example. Sec.~\ref{sec:ablation} quantifies both
halves of that statement.
\emph{(ii)~No new invariants above grade one.} Theorem~\ref{lemma:cm-collapse} bounds the scalar channel and
identifies the grade-four pseudoscalar sign as the only genuinely
new Lorentz-invariant content, separating representational from
invariant additions across grades.
\emph{(iii)~Grade-resolved fine-tuning interface.} The same
per-token grade decomposition supplies a uniform read-out across
classification, regression and CP-asymmetry tasks
(Sec.~\ref{sec:outlook}). Fine-tuning in the L-GATr line proceeds
by re-initialising the final output layer~\cite{Brehmer:2024yqw}
without exposing per-grade outputs; the present interface instead
routes each task to the grade that carries the relevant covariant
content.

\paragraph{Reporting.}
The training is a binary classification of $tW$ against
$t\bar t$; we report a single ROC curve per network,
with the area under the curve quoted next to each label. Every
number is computed on events held back from training, in their
natural rather than equalised class proportions; class imbalance
is handled by a balanced sampler during training and enters the
AUC only through its variance, not through the ranking statistic
itself. We
additionally report two discriminant-output histograms, one per
network, each showing three curves obtained by passing through
the trained network three independent parton-level samples: the
single-resonant $tW$ sample (Class~A truth in the diagram-removal
notation introduced above), the double-resonant $t\bar t$ sample
(Class~B), and the full gauge-invariant $tWb$ matrix-element
sample of ref.~\cite{Boos:2023kpp}. The full-schema sample is
not used in training and is not treated as a separate class; it
represents the realistic event population that combines both
contributions and their interference, and the third curve
illustrates how the network, trained on the cleanly-separated
$tW$ and $t\bar t$ distributions, behaves on this inclusive
mixture as a function of the network output. The histograms are
normalised by the matrix-element cross-sections of the three
samples, so that the bin contents are
$\mathrm{d}\sigma/\mathrm{d}D$ and the relative populations of
the three curves reflect their physical weights rather than the
Monte-Carlo statistics.

\paragraph{Prior expectations.}
By the Cayley--Menger collapse (Theorem~\ref{lemma:cm-collapse}), the
scalar inner-product features of \netRef\ already span the full
Lorentz-invariant scalar space accessible to any parton-level
classifier on the inclusive sample; the asymptotic AUC of the
two networks is therefore expected to be comparable. A gain of
\netGa\ over \netRef, if any, is sought elsewhere: (i)~the
multi-resonance reconstruction problem of pairing the two $b$-jets
to the candidate top resonances is exposed algebraically through
the internal bivector and trivector channels and the event-level
pairing tokens, so that the assignment is learned from a tighter
representation rather than from scalar reconstruction shortcuts;
and (ii)~the algebraic representation supplies covariant access
to all $p_i\wedge p_j$ and $p_i\wedge p_j\wedge p_k$ structures
in a single forward pass, an inductive bias that is opaque in
the scalar inner-product representation of \netRef.

\paragraph{AUC and ROC.}
The numerical outcome is summarised in
Table~\ref{tab:tWb-auc} and Fig.~\ref{fig:tWb-roc}.
Both networks are trained and evaluated at parton level on the same
generated samples ($\sim$270k $tW$ and $\sim$10M $t\bar t$ events
passing $p_T^{j_4}\!>\!10$~GeV). For \netGa\ we report the mean and the
spread over five independent
training runs differing only in their random seed, which sets the weight
initialisation, the train/validation split and the batch order together;
configurations are compared seed by seed, so each paired difference is
taken at a fixed split. Fig.~\ref{fig:tWb-roc}
shows one of those runs, whose areas sit within that spread of the
means in Tab.~\ref{tab:tWb-auc}.

\paragraph{Discriminator distributions.}
At the discriminant level
(Fig.~\ref{fig:tWb-discr}) both input configurations resolve
$tW$ and $t\bar t$ cleanly, and the inclusive $tWb$-schema sample is
bimodal under both, with mass between the two single-class peaks ---
the empirical signature of an event population that combines a
single-resonant and a double-resonant component with their
interference. The dominant peak at $D\!\to\!1$ on the $tW$ curve is a kinematic
effect rather than a training artefact: it is populated
almost entirely by events with $\sqrt{\hat s}\!<\!2m_t$, where a
$t\bar t$ pair is energetically inaccessible and the logit therefore
saturates. It is present for both inputs and sharper with the reference bivector, which is the same effect the AUC of Tab.~\ref{tab:tWb-auc} records.
The percent-level wrong-side tails are likewise a property of the truth definition rather than a training artefact: $tW$ events in which both top-line invariants
fall accidentally onto the Breit--Wigner pole
($m_t\!=\!172.5\pm 2$~GeV in both legs) are kinematically
indistinguishable from $t\bar t$, and conversely $t\bar t$ events
with one top-line strongly off-shell
($m_t\!\gtrsim\!300$~GeV at the $95\%$ percentile of the
misclassified subsample) populate the same phase space as $tW$.
The two populations interfere as gauge-invariant pieces of the
same matrix element and the diagram-removal scheme leaves a small
irreducible residue at the matrix-element level, consistent with
the stability of the wrong-side fractions across independent
training runs noted in the figure caption.

\begin{table}[t]
\centering
\small
\begin{tabular}{@{}l@{\ }r@{\ \ }c@{\ \ }c@{\ \ }c@{}}
\toprule
Input representation & Params & AUC & s.e. & $\Delta$AUC vs.\ four-momenta \\
\midrule
grade $0$ only & $151\,489$ & $0.9627\pm0.0003$ & $0.0005$ & $-0.00422\pm0.00025$$^{\ast}$ \\
$+$ pairing invariants & $151\,489$ & $0.9650\pm0.0003$ & $0.0005$ & $-0.00195\pm0.00025$$^{\ast}$ \\
$+$ pairing $1,2,3$ & $151\,489$ & $0.9649\pm0.0008$ & $0.0005$ & $-0.00206\pm0.00031$$^{\ast}$ \\
$+$ grade-$4$ join & $196\,609$ & $0.9652\pm0.0009$ & $0.0005$ & $-0.00176\pm0.00060$$^{\ast}$ \\
$+$ pairing $1,3$ (no meet) & $151\,489$ & $0.9657\pm0.0009$ & $0.0005$ & $-0.00122\pm0.00039$$^{\ast}$ \\
grade $0\oplus1$ four-momenta & $149\,409$ & $0.9669\pm0.0007$ & $0.0005$ & --- \\
$+$ ref.\ tokens $\gamma_0,\gamma_3$ & $149\,409$ & $0.9670\pm0.0004$ & $0.0005$ & $+0.00009\pm0.00021$ \\
\textbf{$+$ ref.\ bivector $\gamma_{03}$} & $149\,409$ & $0.9744\pm0.0007$ & $0.0004$ & $+0.00751\pm0.00055$$^{\ast}$ \\
\midrule
\netRef\ \cite{Boos:2023kpp}, 75 feat. & $539\,501$ & $0.9586$ & $0.0002$ & $-0.0083$ \\
\bottomrule
\end{tabular}
\caption{Parton-level classification of single-resonant $tW$ against
double-resonant $t\bar t$ topologies in $pp\to tWb$ at $\sqrt s=14$~TeV,
$p_T^{j_4}>10$~GeV. Rows differ only in what the input tensor carries.
Each configuration is trained with five random seeds and scored on
held-back events with final-epoch weights, so no model is selected by
the quantity reported; column~5 differences each seed against its own four-momentum run, and an asterisk marks $|\Delta|>2\sigma$.
Parameter counts identify the comparable sets: a fixed reference token
carries no weights, so it leaves the count unchanged, while the join
channel stands alone. The \netRef\ row is a single training, so it
carries no seed spread.}
\label{tab:tWb-auc}
\end{table}

\begin{figure}[t]
\centering
\includegraphics[width=\linewidth]{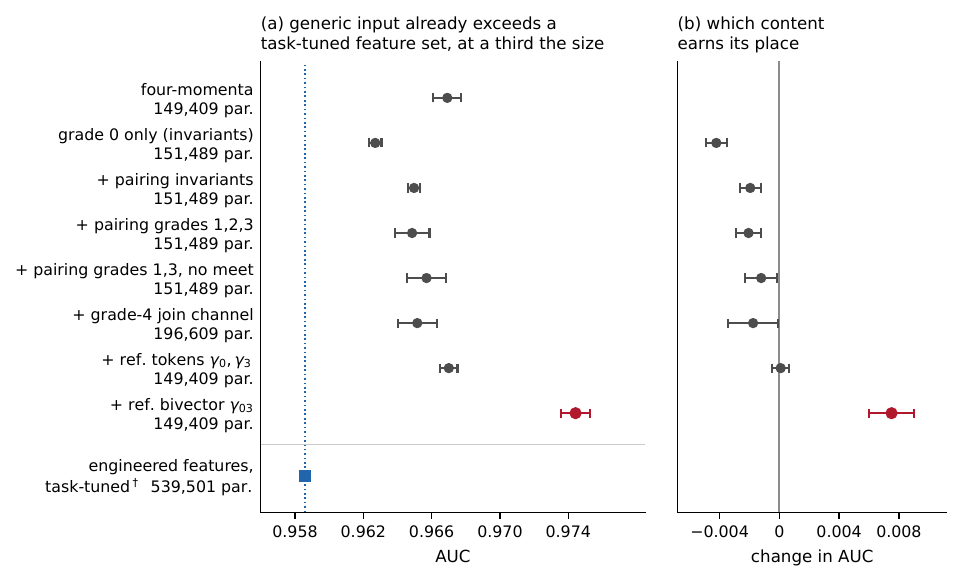}
\caption{\emph{(a)} Absolute performance of each input representation,
with the engineered-feature baseline \netRef~\cite{Boos:2008sdz,
Dudko:2020qas, Boos:2023kpp} for comparison; the generic algebraic input
already exceeds it at a third the parameters (Sec.~\ref{sec:ablation},
Tab.~\ref{tab:tWb-auc}). \emph{(b)} The ablation proper: each
representation against the four-momentum baseline, differences formed
seed by seed and averaged, so that architecture, data and split are held
fixed and only the input content varies. Trainable parameters are given
beside each row; rows at equal count are internally comparable, and
App.~\ref{app:tokens} gives the two input layouts cell by cell.
Intervals are $95\%$ Student intervals on four degrees of freedom.}
\label{fig:ablation}
\end{figure}

\subsection{Which grades earn their place: an input ablation}
\label{sec:ablation}

Two questions are worth separating here, and the second is the narrower
one. The first is whether a representation designed to be uniform across
processes is competitive at all against features chosen for one process:
the generic input of four-momenta and object-type labels, with no
feature engineering for $pp\to tWb$, reaches AUC $0.9669\pm0.0007$
against $0.9586$ for \netRef, whose input recipe was optimised for
exactly this separation~\cite{Boos:2008sdz, Dudko:2020qas,
Boos:2023kpp}, at the parameter ratio given above; with the fixed
reference bivector the same network reaches $0.9744\pm0.0007$. That is the comparison the universality claim has to survive, and it
does.

The second question is which algebraic content earns its place once the
representation is fixed. The construction of the preceding subsections
puts grade-two and grade-three content at the input layer, and
Sec.~\ref{sec:intro} motivates that choice representationally; whether
it is worth anything empirically on \emph{this} final state is answered
by training the same network on a ladder of input representations
differing in nothing else.
The benchmark is the first and deliberately smallest realisation of the
representation, so what follows measures the input choices here and does
not settle them for other processes. Eight configurations are trained with five random seeds each;
Tab.~\ref{tab:tWb-auc} collects the results and
Fig.~\ref{fig:ablation} displays them. The ladder has two branches, distinguished by their input layout rather
than by nesting: the upper rows share the wide $22$-channel projection
of the pairing layout --- the grade-$0$-only rung is that layout with
every covariant cell zeroed, the `$+$' rungs restore the four-momenta
and vary the pairing content --- while the lower rows use the
$9$-channel projection of the four-momentum baseline. The
$2\,080$-parameter difference between the groups ($151\,489$ against
$149\,409$) sits entirely in the width of that projection, while an
added token costs no parameters at all; App.~\ref{app:tokens} lays out
both formats cell by cell.

The effects under test are of the same order as the scatter between
seeds, so comparing averages across seeds would be too noisy to read.
Each pair of configurations is therefore differenced at a common seed,
where both see identical events, and those differences are averaged.
Every number quoted below is such a paired average.

\paragraph{Where the higher grades are formed.}
The contrast that isolates the question holds everything but the grade
content fixed. The two pairing tokens stay, with their ten grade-zero
invariants and the same $151\,489$ parameters; only their covariant
channels --- grades one, two and three --- are zeroed. The difference is
$+0.000115\pm0.000251$ in AUC, $t(4)=0.46$, $p=0.67$, with a $95\%$
interval of $[-0.00058,+0.00081]$: nothing is resolvable at the precision
five paired runs provide.

The reading this supports is about the \emph{location} of the higher
grades, not about their value. The network is Lorentz-equivariant on
multivector channels, so when the grade-two and grade-three objects are
not supplied it forms them from the momenta it does receive --- attention
mixes momenta across tokens, the geometric product wedges them within a
token. The null result therefore says that on this final state the
internal construction is as good as the explicit one, which is a
statement that the network recovers the higher-grade content it needs,
not that the content is dispensable. The direct evidence that
higher-grade content is decisive appears in the same table: one fixed
grade-two bivector, which no product of the measured momenta generates,
is worth $+0.00751\pm0.00055$. A network indifferent to grade two could
not respond to it. What the ablation locates is the boundary between the
grade content a network can rebuild for itself and the grade content it
cannot.

Two coarser removals do move the number. Dropping the pairing tokens
outright gives $+0.00206\pm0.00031$ ($t(4)=6.70$, $p=0.003$); dropping
only the grade-two meet while keeping the grade-three trivectors gives
$+0.00085\pm0.00032$ ($t(4)=2.62$, $p=0.06$, interval
$[-0.00005,+0.00174]$), a suggestive point estimate rather than an
established effect. Neither isolates the grade content: removing the
tokens also removes two sequence positions, narrows the input projection
from $22$ channels to $9$, and changes both the attention graph and the
mean pooling of the read-out. Whatever those two shifts measure, it is
not the explicit higher-grade content on its own.

Theorem~\ref{lemma:cm-collapse} accounts for the null result of the
isolated comparison at the level of information: the covariant pairing
channels carry no Lorentz-invariant quantity that is not already a
function of the momenta the network receives. The theorem does not,
however, state that removing redundant inputs is free for a network of
finite capacity trained for a finite time --- redundant channels can
still help or hurt an optimisation --- so it neither predicts the small
shifts above nor is contradicted by them.

\paragraph{Why these grades, in this process.}
Which part of the multivector carries the discrimination is a property
of the final state, not of the representation --- and this benchmark
was chosen so that the property is extreme: the class difference is
carried mainly by one resonant three-particle combination. Here both
$W$ bosons are on shell in \emph{both} classes, so the invariant mass
of their grade-two decay planes is distributed alike: as single
observables, $m(\mu^-\bar\nu_\mu)$ separates the classes with AUC
$0.5077$, the grade-two norm of the same plane gives the identical
value (for massless partons it equals $(m^2/2)^2$), and the distance to
the $W$ pole gives $0.5004$. These are one-dimensional tests of three
particular functions of the plane; they leave open whether its
orientation, or its correlation with the rest of the event, carries
class information that a network could use. The leading class difference is a three-particle invariant, the
leptonic top candidate, resonant only in the $t\bar t$ class, so grade
three is the sector that matters here.
Its bare mass, however, separates the classes only at $0.5189$: a
resonance sits inside the continuum it competes with, and the mass is
not monotonic in ``resonant or not''. As the pole distance $-|m-m_t|$
the same invariant gives $0.7442$, and the two top candidates together
give $0.9570$. The grade-$k$ object supplies the argument; the
non-monotonic function of it is the network's job. Grade four is populated, and its sign distribution in this sample is
symmetric to within the sample precision (pseudoscalar AUC $0.5093$);
Sec.~\ref{sec:practical} explains why that is a check on the
contraction order rather than a test of CP.

A different process moves the discriminating content to a different
grade. A two-body resonance search --- $H\to b\bar b$, $Z'\to\ell\ell$ ---
puts it in the grade-two plane of the pair, exactly the sector that is
inert here. A CP-violating coupling, in $t\bar t$ production or in
$H\to\tau\tau$, activates the grade-four sign, which no combination of
inner products reproduces and which is the one channel
Theorem~\ref{lemma:cm-collapse} leaves open. Angular-correlation
measurements of top-quark spin live in the grade-two bivector of the
production plane. The ablation below should therefore be read as a
measurement of which grades are informative \emph{for this final state};
what generalises is the criterion, that a grade-$k$ object reconstructible
from the visible momenta adds no invariant, not the verdict on any
particular grade.

\paragraph{A fixed reference bivector helps, and by much more.}
Adding a single fixed token carrying the beam bivector
$\gamma_{03}$ to the grade-$0\oplus1$ configuration improves the
AUC by $+0.00751\pm0.00055$ at identical parameter count. It is not the mere presence of a reference token that does this: the two
\emph{vector} reference tokens $\gamma_0$ and $\gamma_3$ give
$+0.00009\pm0.00021$ over the same baseline, and the
bivector-minus-vectors difference is $+0.00742\pm0.00047$, at equal
parameter count. Equal capacity does not make the two configurations
equal in every other respect --- the bivector is one token where the
vectors are two, so they differ in sequence length, in the attention
graph and in the dilution of the mean-pooled read-out --- so what the
measurement establishes is that the $H_{\mathrm{LHC}}$-invariant
one-token bivector encoding outperforms the tested two-vector encoding
in this architecture, rather than isolating the grade as the cause.
Sec.~\ref{sec:symm} gives the reason we expected that ordering, and it
is independent of the token count: $\gamma_{03}$ is invariant under
both generators of $H_{\mathrm{LHC}}$ to floating-point precision,
whereas $\gamma_0$ and $\gamma_3$ separately are not, so the bivector
realises the reduction by construction while the vector pair merely
supplies its ingredients.

The controlling distinction is thus not the grade as such but its
provenance. A grade-$k$ object built from the measured momenta adds no
Lorentz-invariant information, however explicitly it is presented ---
and on this benchmark its explicit presentation changed nothing
measurable; a
grade-$k$ object encoding structure those momenta do not contain ---
here the laboratory frame singled out by the beam --- does. 

\paragraph{Reduced training statistics.} An inductive bias is expected
to pay off most where data are scarce, so the comparisons were repeated
with the training set thinned to one third while the evaluation events
were left identical. The epoch and warmup budgets were scaled with the
thinning so that every fraction sees the same number of optimiser steps,
which keeps the comparison about the data rather than the length of
training. At 3.1M training events instead of 9.3M, and over three seeds, removing the candidate-pairing content gains
$+0.00215\pm0.00069$ against $+0.00206\pm0.00031$ at full statistics, and the
reference bivector gains $+0.00741\pm0.00058$ against
$+0.00751\pm0.00055$. Each point estimate agrees with its full-statistics value to within a
fifth of a standard error ($0.12\sigma$ and $0.13\sigma$). The
reference-bivector gain is resolved at reduced statistics as well
($t(2)=12.9$, $p=0.006$); the pairing removal is not
($t(2)=3.1$, $p=0.09$, interval $[-0.00084,+0.00513]$), three runs
being too few to separate it from zero. So there is no indication that
either effect grows as data become scarce, but the reduced-statistics
intervals are wide enough that a moderate growth of the pairing effect
would not have been detected.

One further rung isolates the invariants. In the grade-zero-only limit the network sees Lorentz invariants alone;
it loses $0.00422\pm0.00025$ against grade-$0\oplus1$ four-momenta, so
the covariant grade-one input is doing real work. The grade-four join row carries $32\%$ more parameters ($196\,609$
against $149\,409$) and still loses $0.00176\pm0.00060$ against the same
baseline, so on this final state the join channel does not pay for its
extra capacity; it is also the one row not capacity-matched to any
other.

\begin{figure}[t]
\centering
\includegraphics[width=0.72\linewidth]{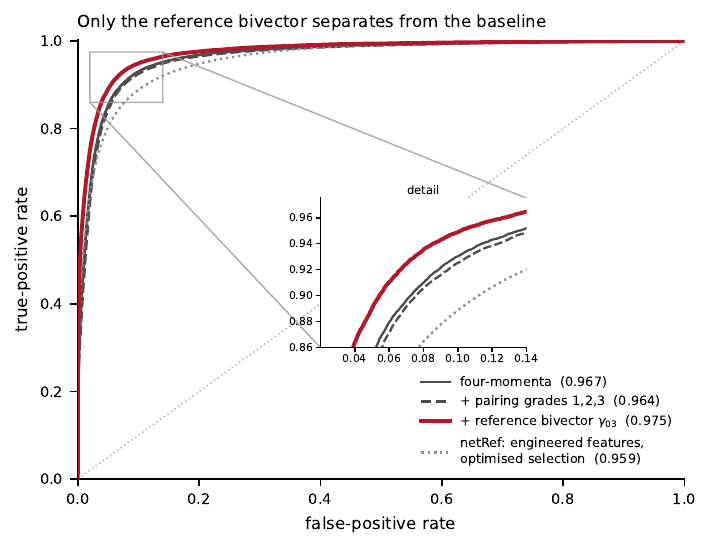}
\caption{Receiver-operating curves on the parton-level discrimination of
single-resonant $tW$ against double-resonant $t\bar t$ topologies of
$pp\to tWb$ with $p_T^{j_4}>10$~GeV, for three input representations of
the same network. The inset magnifies the region where they separate:
adding the candidate-pairing content built from the final-state momenta
leaves the curve on top of the four-momentum baseline, while one fixed
reference bivector $\gamma_{03}$ lifts it. Curves are one seed of five. The lower grey curve is \netRef~\cite{Boos:2023kpp}, the
engineered-feature baseline network on the hand-optimised input recipes
of refs.~\cite{Boos:2008sdz, Dudko:2020qas}. The dotted diagonal
is the random classifier.}
\label{fig:tWb-roc}
\end{figure}

\begin{figure}[t]
\centering
\includegraphics[width=\linewidth]{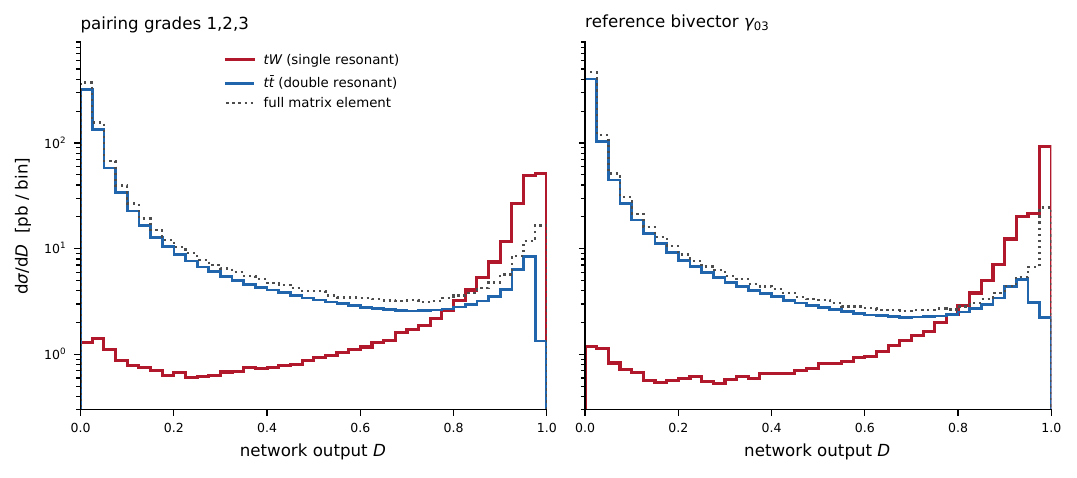}
\caption{Discriminator distributions, normalised to the parton-level
cross sections, for two input representations of the same network: the
candidate-pairing content of grades one to three (left) and the fixed
reference bivector $\gamma_{03}$ (right). The single-resonant $tW$ and
double-resonant $t\bar t$ samples are separated cleanly by both, and the
full matrix element tracks the double-resonant shape, as the interference
region should. The bivector sharpens the $tW$ peak: $58\%$ of $tW$ events
have $D>0.95$ against $52\%$ on the left, and $68\%$ of $t\bar t$ events
have $D<0.05$ against $62\%$.}
\label{fig:tWb-discr}
\end{figure}

\paragraph{Code and reproducibility.}
The training code for \netGa\ used in this section, together with
the final version of this paper and instructions for obtaining the
Monte-Carlo samples, is available in the public companion
repository~\cite{higen-repo}. The repository collects the reference
implementations of the algebraic input layer and of the networks
built on it across the foundation-model programme of which the
present demonstration is the first instalment.

\section{Foundation models for collider events}
\label{sec:outlook}

Foundation models in collider physics are now starting to be
explored~\cite{Mikuni:2024qsr}; the present algebraic
representation is intended as their input layer. We close the
paper by sketching several directions in which the multivector
$\evMV$ provides a uniform input layer for the foundation-model
program of which the present specification is the first piece.

\paragraph{What the algebraic representation buys.}
Three properties of $\evMV$ make it a natural foundation-model
input layer.
\emph{(i)~Task-independence.} The multivector $\evMV$ of
eq.~\eqref{eq:evMV} is generic over the choice of physics process: the same per-event object encodes $W^+W^-$,
$t\bar t$, $H\to b\bar b$, EFT signals and BSM search regions,
with only the basis decomposition on $\Vflav$ varying. New states enter through $\Vflav$, which acquires basis elements for
them, while modified interactions of Standard-Model particles --- the
SMEFT case, in which high-scale physics appears as shifted couplings and
new Lorentz structures rather than as new particles --- enter through the
kinematics, i.e.\ through the grade content of $\evMV$ at fixed
$\Vflav$. Both routes stay inside the same four-dimensional spacetime
algebra, which is why one representation serves both. The grade decomposition supplies a tokenisation that survives across tasks, so a pre-trained encoder is fine-tuned by attaching a task head.
\emph{(ii)~Lorentz covariance throughout.} Pre-training on $\evMV$
preserves grade structure end-to-end: a downstream task requiring
strict invariance reads grade-zero coefficients of the last
hidden state, while a task requiring full covariance reads the
full multivector. This contrasts with pre-training on lab-frame
recipes~\cite{Boos:2008sdz, Dudko:2020qas}, where the frame
choice propagates to every downstream task, and with pre-training
on raw four-momenta without an algebraic structure, where every
task has to relearn the representation of $\Spin^+(1,3)$.
\emph{(iii)~$\Hlhc$-equivariant sub-decomposition.} The symmetry actually available at a hadron collider is $\Hlhc$
(eq.~\eqref{eq:Hlhc}), because the beam axis singles out a direction;
missing transverse momentum adds a second restriction, the unmeasured
longitudinal component of the neutrino system. Both are encoded as a
sub-decomposition of $\evMV$ in which $\SOgroup^+(1,1)_z$ acts on the
grade-one $(y,M)$ sector and $\SOgroup(2)_\phi$ on the grade-one
$(p_T,\phi)$ sector, the missing momentum entering as a pseudo-particle
of partially-populated grade one with its $p_z$ flagged in $\Vflav$. Of the three mechanisms that realise the reduction --- the fixed
$\gamma_{03}$ reference token (whose invariance is
eq.~\eqref{eq:g03inv}), a layer structure adapted to the
$2{+}2$ split of eq.~\eqref{eq:22split}, and the masked grade-one
pseudo-particle --- only the first is implemented and measured in this
paper (Sec.~\ref{sec:ablation}); the other two are proposals. The
second is the least developed of the three: block-diagonal weights
respecting the $2{+}2$ split are a necessary condition but not a
sufficient one, since a layer equivariant under $\Hlhc$ requires each
grade to be decomposed into $\Hlhc$-irreducible components and the
weights to act as intertwiners between them. Working that decomposition
out for all five grades is a construction we have not carried out.

\paragraph{Relation to existing work.} Existing collider
foundation-model proposals --- masked-particle
pre-training~\cite{Golling:2024abg}, contrastive jet
self-supervision~\cite{Dillon:2021gag} and multi-task event
models~\cite{Mikuni:2024qsr} --- operate on raw four-momenta plus flat
auxiliary scalar channels for particle type, charge, $b$-tag and process
label, which are exactly the content of $\Vflav$ here; embedding them in
$\Cl(1,3)\otimes\Vflav$ makes them one factor of a single structure
instead of case-specific additions, with the same token interface in
pre-training and in every downstream task. Those proposals also impose no
architectural equivariance, whereas the backbone assumed here is the
multivector-equivariant transformer of
refs.~\cite{Brehmer:2024yqw, Spinner:2024hjm} on the same algebra, which
adopts multivector inputs but trains single-task, naming cross-task
pre-training as future work; the invariant-aggregator PELICAN
line~\cite{Bogatskiy:2022czk, Bogatskiy:2023nnw} spans the grade-zero
sector $\grade{\evMV}{0}$ alone, and Sec.~\ref{sec:ablation} measures
what the covariant grades add over it. The explicit grade factorisation
is what makes the masking targets and per-grade heads above available at
all.

\paragraph{Grade-resolved pre-training and read-out.} The interface the
grade decomposition supplies is a masking target and a read-out head per
grade. Pre-training predicts masked components of $\evMV$ from the
unmasked ones, and each grade exposes different physics: grade zero the
Lorentz invariants, grade one the mass-shell constraints and
matrix-element correlations, grade two the decay-plane geometry, grade
three the $T$-odd oriented volume~\cite{Atwood:2000tu,
Bernreuther:2015yna}, grade four the CP structure --- so the per-target
loss curves report which content the encoder has internalised. Masked
pairs involving a neutrino use the truth $p_\nu$ at parton level and the
$\MET$ pseudo-particle of Sec.~\ref{sec:flav} at reconstruction level.
Fine-tuning then attaches a single head to the unchanged encoder:
classification and scalar regression read the grade-zero coefficients of
the last hidden state, four-momentum regression the grade-one component,
and a CP-asymmetry test the grade-four coefficient, whose sign
expectation is compared against a CP-even null.

\subsection{Practical obstacles}
\label{sec:practical}

The preceding sketch is an argument that the representation is a
suitable input layer, not that building on it is straightforward. Four
obstacles are specific to a graded representation, and we state them
with numbers from our own runs rather than as generic caveats, because
in each case the measurement changes what the obstacle is.

\paragraph{Per-grade dynamic range.} An object built by wedging $k$
momenta carries mass dimension $k$, so the numerical scale of the input
channels is a property of the algebra rather than a nuisance to be
normalised away. In the pairing tokens of Sec.~\ref{sec:tWb} the raw
entries run from $\mathcal O(10)$~GeV$^2$ at the low tail of a grade-two
invariant to ${\sim}7\times10^{25}$~GeV$^{12}$ at the high tail of the
coplanarity scalar --- the squared norm $\grade{M\reverse{M}}{0}$
(eq.~\eqref{eq:bivinner}) of the meet $M = T_{\rm had}\cap T_{\rm lep}$,
six powers of momentum in each factor --- i.e.\ $25$ orders of magnitude
inside one input tensor. That the spread is dimensional and not an
artefact is checkable on typical rather than extreme values: each
object's median taken to the power $1/k$ returns $43$--$190$~GeV
throughout, a parton-scale energy at every mass dimension from one to
twelve. Per-grade normalisation removes the spread between objects, but
within one object up to $6$ orders of magnitude remain, so a network must
either learn across that range or be given a grade-aware scaling --- and
a foundation model shared across processes cannot fix these scales at
build time, since they follow the mass and $\hat s$ scales of whatever
final state it meets.

\paragraph{The pseudoscalar sign must survive normalisation.}
The one Lorentz-invariant channel not reducible to inner products is
the \emph{sign} of the grade-four pseudoscalar
(Theorem~\ref{lemma:cm-collapse}). Any normalisation acting on
magnitudes --- per-grade rescaling, layer normalisation, a $\log$
compression of the range just described --- risks discarding it, and
a one-bit observable is not recoverable once lost. This constrains
the normalisation scheme rather than being a matter of taste: the
transformation applied to the grade-four channel must be
sign-preserving.

\paragraph{Flavour labelling can manufacture a false CP asymmetry.}
The pseudoscalar sign is advertised as a CP-odd observable, which makes
it a trap. The symmetry test available to us is a test of the
contraction order, not of CP: our sample is the fixed $\mu^-$ channel,
and CP maps that channel onto its charge-conjugate $\mu^+$ counterpart,
which we do not generate --- so a symmetric sign distribution here is
not a consequence of CP conservation, and an asymmetric one would not
be evidence against it. What it does diagnose is whether the ordering
convention imprints itself on the sign. With the two same-parent light
partons ordered by energy the positive fraction is $0.4989$ on the
$268$k-event $tW$ sample, $1.2\sigma$ from one half. Contracting instead the PDG-labelled slots
$\epsilon(p_\mu, p_{\bar\nu}, p_u, p_{\bar d})$ gives $0.5115$,
$11.9\sigma$ from one half --- an apparent asymmetry of high significance produced by the labelling
alone, of exactly the size and shape a CP-violation search would
report as a signal. The mechanism is that
exchanging the two partons of the same parent flips the sign of the
contraction exactly, so a labelling that is correlated with the
kinematics imprints itself on the sign. Any analysis using this channel must fix the contraction order by a
physical rule and verify the symmetry on a charge-conjugate pair of
samples, which is the comparison that does test CP; a truth-flavour
ordering will not do, and a single-charge sample cannot settle the
question either way.

\paragraph{Cost, and what it buys.} Attention over multivector channels
costs what the extra channels cost, and the count grows combinatorially
with the final-state multiplicity: the candidate objects number
$\binom{N}{2}$ bivectors, $\binom{N}{3}$ trivectors and $\binom{N}{4}$
pseudoscalars, i.e.\ $50$ for the $N=6$ partons of the present $tWb$
final state, $154$ for a $2\to8$ process, and $781$ for the $N=12$
partons of four-top production or $t\bar ttWb$ --- fifteen times the
present case for a process now under active study. In our runs the two
pairing tokens alone raise the epoch time from $259$ to $361$~s against
the four-momentum baseline of Sec.~\ref{sec:ablation}, that is by
$39\%$, and that section shows they also lower the AUC. The reference
bivector that does help adds no parameters and no measurable time, since
it is one fixed token independent of $N$. The practical lesson is not that graded inputs are expensive but that
their cost grows with the combinatorics while their benefit is
task-dependent: on a single task an implementation should spend its
budget on the sectors that task reads --- here, reference content the
momenta do not contain --- whereas a foundation model spanning
processes cannot prune a priori and must budget for the full
decomposition, deferring the pruning to fine-tuning.

\section{Limitations and future directions}
\label{sec:limits}

\paragraph{What the construction does and does not do.}
The construction is bound by the Cayley--Menger collapse
(Theorem~\ref{lemma:cm-collapse}; grade-four paragraph of
Sec.~\ref{sec:grades});
pre-Lorentz-frame observables such as sphericity, thrust and
aplanarity sit in an orthogonal complement to the bivector channel
(event-shape caveat in Sec.~\ref{sec:grades}); $b$-tag information is reduced to the
binary slot $|bj\rangle\in\mathbb F$ (Sec.~\ref{sec:flav}); fermion
spin states enter implicitly through the bivector grade and the
second-moment structure of production and decay~\cite{Mahlon:1999gz,
Bernreuther:2013aga, Bernreuther:2015yna}, with no explicit
spinor-valued fields. These are scope choices, not deficiencies of
the algebra.

\paragraph{Conformal $\Cl(2,4)$ for multi-resonance event topologies.}
A physical motivation for an algebra beyond $\Cl(1,3)$ appears as
soon as one considers events with more than one heavy intermediate
resonance: a $t\bar t$ event, a four-top event, or a ladder of
$W$-mediated cascades. In each such case the kinematic support of
the event is naturally described by the
\emph{intersections} of mass shells $(p_a+p_b)^2 = M^2$ in
four-momentum space, but in $\Cl(1,3)$ those intersections are
forced into scalar Cayley--Menger combinations of inner products
(Theorem~\ref{lemma:cm-collapse}); the geometric content of
``one resonance is hit, the other is missed'' is dissolved into
scalars even though physically the two configurations populate
distinct sub-manifolds of phase space.
The conformal embedding into $\Cl(2,4)$~\cite{Doran:2003} keeps this content algebraic. Two auxiliary null vectors $n_\infty, n_o$ extend the generating space to six dimensions, and each four-momentum is mapped to the null conformal vector
\begin{equation}
 P \;=\; n_o + p + \tfrac{1}{2}\,p^2\,n_\infty,
 \qquad
 P_i\!\cdot\!P_j \;=\; -\tfrac{1}{2}\,(p_i-p_j)^2,
 \label{eq:conformal-embed}
\end{equation}
so that the Lorentz-invariant squared four-momentum difference
$(p_i-p_j)^2$ between any two particles is the inner product of
their conformal images~\cite[\S 10.2]{Doran:2003}.
A mass shell $(p_W+p_b)^2=m_t^2$ then becomes one blade of fixed grade~\cite{Doran:2003, Hestenes:1984} --- a geometric primitive rather than a side equation. The meet of two such
shells, $\Sigma_{t_1}\!\cap\Sigma_{t_2} = (\Sigma_{t_1}^\star
\wedge\Sigma_{t_2}^\star)^{-\star}$, is a lower-grade blade in
the conformal algebra carrying the kinematic intersection of
the two top hyperboloids; its squared norm has a definite sign
that distinguishes events in which two real on-shell top
assignments are kinematically possible from configurations at
threshold and from configurations with no real intersection.
The same construction generalises to any chain of nested heavy
resonances and is the natural multi-resonance generalisation of
the single-bivector decay plane of Sec.~\ref{sec:grades}.
In the same conformal language, the combinatorial
$b$-assignment of Sec.~\ref{sec:tWb} reduces to a single
incidence test $(W+p_{b_a})\!\cdot\!\Sigma_t \approx 0$ between
the conformal image of the $(W+b_a)$ candidate sum and the top
mass-shell blade $\Sigma_t$; the test takes the same form for
arbitrary $N$-resonance topologies. This is a representational
restatement of the multi-resonance reconstruction problem that
motivates the event-level pairing tokens of Sec.~\ref{sec:tWb}.
On the $tWb$ task it is not an empirically useful one
(Sec.~\ref{sec:ablation}), the intersection observables being
functions of the Gram matrix. It may still pay off where the
incidence test is \emph{not} reconstructible from the visible
momenta --- topologies with more than one unmeasured momentum ---
and that is the direction we would look next. By contrast, the
interface content the same measurement does support is the fixed
reference object: a foundation model
built on this algebra should carry the beam plane $\gamma_{03}$ as a fixed reference blade:
it is the $(t,z)$ plane, rather than the time and beam directions
separately, that is $\Hlhc$-invariant (eq.~\eqref{eq:g03inv}) and that
the measurement of Sec.~\ref{sec:ablation} favours.
What the conformal embedding does \emph{not} do is enlarge the
ring of Lorentz-invariant scalars: by the First Fundamental
Theorem of invariant theory for the orthogonal
group~\cite{Weyl:1939}, every Lorentz-invariant scalar built
from the conformal vectors $\{P_i\}$ together with the fixed
null pair $(n_o, n_\infty)$ reduces to a polynomial in the
inter-particle products
$\{P_i\!\cdot\!P_j\} = \{-\tfrac{1}{2}(p_i-p_j)^2\}$ and the
mass projections
$\{P_i\!\cdot\!n_o\} = \{-\tfrac{1}{2}m_i^2\}$ (with
$P_i\!\cdot\!n_\infty = -1$ trivially), which together are
polynomial in $\{p_i\!\cdot\!p_j, m_i^2\}$, and to the
orientation sign of the grade-four pseudoscalar of
Theorem~\ref{lemma:cm-collapse}. The conformal extension is therefore not a source of new Lorentz
invariants beyond those already covered by Theorem~\ref{lemma:cm-collapse}; it is a \emph{representational}
extension, in which the multi-resonance geometry of the event
is exposed as native algebraic objects whose use as input
features to an equivariant network is the natural generalisation
of the bivector attention layer of Sec.~\ref{sec:tWb} to nested
resonance topologies.

\paragraph{Other algebraic extensions deferred.}
The multi-particle Clifford bundle $\Cl(1,3)^{\otimes N}$ (Variant~C of
Sec.~\ref{sec:flav}) makes the per-token factorisation explicit at the
price of an $N$-dependent algebra (Variant~B was analysed and rejected
in Sec.~\ref{sec:flav} and is not deferred). The self-duality decomposition of
grade-two multivectors and the Weyl-spinor parallel via
the isomorphism $\Spin^+(1,3)\simeq\SLgroup(2,\mathbb C)$, the
hyperbolic geometry of the rapidity-azimuth plane
$\mathbb H^1\times\mathbb S^1$, and the projective extension
$\Cl(1,3,1)$ for an explicit vertex-level encoding are open
directions to be evaluated in dedicated downstream studies.

\paragraph{Demonstration caveats.}
The $pp\to tWb$ illustration of Sec.~\ref{sec:tWb} is a parton-level
leading-order matrix element without parton shower or detector
simulation; it is intended as a methodological demonstration of
the algebraic input representation, not as a phenomenological
estimate. The CP-odd channel identified in Sec.~\ref{sec:grades}
is a one-bit observable whose expectation vanishes in
CP-symmetric tree-level configurations, and a quantitative
measurement at the network level requires CP-asymmetric
matrix-element samples not used in the present demonstration.
Cross-process generalisation, larger and more diverse training
samples, fast- or full-simulation extensions, and a systematic
comparison against representative Lorentz-equivariant
baselines~\cite{Brehmer:2024yqw, Spinner:2024hjm,
brehmer2023geometric} are independent directions whose
evaluation requires dedicated downstream tasks.

\paragraph{Towards the HiGEN foundation-model program.}
The construction provides the input-layer specification for the
HiGEN program of Sec.~\ref{sec:outlook}, in which an event
multivector $\evMV$ is the per-token input for a
Lorentz-equivariant collider foundation model pre-trained across
processes. The architectural details of the encoder, the per-grade
pre-training strategy, the catalogue of downstream tasks
(classification, regression, generative reconstruction) and the
cross-process transfer-learning protocol are the subject of
separate forthcoming work; the present paper fixes the algebraic
framework on which that program is built.

\section{Summary}
\label{sec:summary}

A unified algebraic representation $\evMV$ (eq.~\eqref{eq:evMV}) has
been introduced as a common
framework for three previously disjoint feature-engineering
approaches~\cite{Boos:2008sdz, Dudko:2020qas, Brehmer:2024yqw}. The
physical meaning of every grade and operation has been catalogued,
an explicit per-grade dictionary of $32$ classical observables is
supplied (App.~\ref{app:tables-c1}), and the spacetime, discrete
and approximate symmetries acting on $\evMV$ are listed
(App.~\ref{app:tables-c2}). Theorem~\ref{lemma:cm-collapse} settles
the question of new Lorentz-invariant scalars: none are unlocked
beyond $\{p_i\!\cdot\!p_j,\,m_i^2\}$, and the genuine non-trivial
channel is the CP-odd sign of the
pseudoscalar~\cite{Atwood:2000tu}. Several directions in which $\evMV$ could serve as the input-layer
specification for a foundation-model programme have been sketched
(Sec.~\ref{sec:outlook}); their concrete elaboration is left to
dedicated follow-up work. Four obstacles specific to a graded input are
quantified on our own runs (Sec.~\ref{sec:practical}): the $25$ orders
of magnitude of per-grade dynamic range, the sign of the grade-four
channel that any magnitude normalisation can destroy, the false CP
asymmetry that a label-ordered contraction manufactures, and the
combinatorial cost of materialising the higher grades. The methodology
has been illustrated on the resonance-topology separation of
$pp\to tWb$ (Sec.~\ref{sec:tWb}; see also Sec.~\ref{sec:limits}
for limitations).

\section*{Acknowledgements}

\paragraph{Funding information}
This study was conducted within the scientific program of the
National Center for Physics and Mathematics, section \#5
``Particle Physics and Cosmology''. Stage 2026--2027.

\begin{appendix}
\numberwithin{equation}{section}

\theoremstyle{plain}

\section{Notation card}
\label{app:notation}

For convenience we collect here the conventions used throughout
the paper.
\begin{itemize}
 \item \textbf{Metric signature.} $\eta = \mathrm{diag}(+,-,-,-)$.
 \item \textbf{Basis of $\Cl(1,3)$.} Orthonormal grade-one basis
 $\{\gamma_0, \gamma_1, \gamma_2, \gamma_3\}$ with
 $\gamma_\mu \gamma_\nu + \gamma_\nu \gamma_\mu = 2\eta_{\mu\nu}$.
 \item \textbf{Pseudoscalar.} $I \equiv \gamma_0\gamma_1\gamma_2\gamma_3$,
 $I^2 = -1$.
 \item \textbf{Reverse.} $\reverse{X}$ reverses the order of all
 geometric products inside $X$; on an $r$-blade it acts as
 multiplication by the sign $(-1)^{r(r-1)/2}$.
 \item \textbf{Dual.} $X^\star \equiv X\, I^{-1}$. The
 inverse map satisfies $X^{-\star} \equiv X\, I$, so that
 $(X^\star)^{-\star} = X$; both are used in
 eq.~\eqref{eq:meetjoin}. Our convention
 $X^\star \equiv X\, I^{-1}$ differs from
 $X^\star = X\, I$ used in ref.~\cite{Doran:2003} by an
 overall sign, since $I^{-1} = -I$ in our signature.
 \item \textbf{Grade projection.} $\grade{X}{k}$ is the
 grade-$k$ component of $X$.
 \item \textbf{Geometric product.} $XY$. Inner product
 $X \cdot Y = (XY + YX)/2$ for grade-one elements; outer
 product $X \wedge Y = (XY - YX)/2$ for grade-one elements.
 The grade-one decomposition reads
 \begin{equation}
 p\,q \;=\; (p\!\cdot\!q) \,+\, (p\wedge q),
 \qquad
 p\!\cdot\!q \,=\, \tfrac12(pq + qp),
 \qquad
 p\wedge q \,=\, \tfrac12(pq - qp).
 \label{eq:gp-split}
 \end{equation}
 In component form, for $p = p^\mu\gamma_\mu$ and
 $q = q^\nu\gamma_\nu$, symmetrising and antisymmetrising
 the index pair $(\mu,\nu)$ in
 $p\,q = p^\mu q^\nu\,\gamma_\mu\gamma_\nu$ gives
 \begin{equation}
 p\,q
 \;=\; \eta_{\mu\nu}\, p^\mu q^\nu
 \;+\; \sum_{\mu<\nu}\bigl(p^\mu q^\nu - p^\nu q^\mu\bigr)
 \,\gamma_\mu\gamma_\nu,
 \label{eq:gp-components}
 \end{equation}
 where the scalar (grade-zero) part is the Minkowski inner
 product
 $p\!\cdot\!q = p^0 q^0 - \vec p\!\cdot\!\vec q$ and the
 bivector (grade-two) part splits into three boost components
 $(p^0 q^i - p^i q^0)\,\gamma_0\gamma_i$ ($i=1,2,3$) and three
 rotation components $(p^i q^j - p^j q^i)\,\gamma_i\gamma_j$
 ($1\le i<j\le 3$), exhausting the six oriented planes of
 $\grade{\Cl(1,3)}{2}$.
 \item \textbf{Blades.} An $r$-blade is the outer product of
 $r$ linearly independent grade-one elements; in $\Cl(1,3)$
 non-simple multivectors of grade $r$ exist only for $r=2$
 (grade-three and grade-four elements are always blades by
 Hodge duality with grades one and zero, respectively).
 \item \textbf{Pl\"ucker line coordinates.} The raw covariant
 coefficients of a bivector $p_i\wedge p_j$ coincide, up to
 basis choice, with the Pl\"ucker line coordinates of the
 corresponding decay plane in projective $\mathbb R\mathrm
 P^3$~\cite{Doran:2003}.
 \item \textbf{Lorentz group and rotor sandwich.} Proper orthochronous
 Lorentz group $\SOgroup^+(1,3)$, with double cover
 $\Spin^+(1,3) \simeq \SLgroup(2,\mathbb C)$. A Lorentz transformation
 acts on \emph{any} multivector, of any grade, by the same two-sided
 product
 \[
   X \;\longmapsto\; R\,X\,\reverse{R},
   \qquad R=\exp(B/2),\quad B\in\grade{\Cl(1,3)}{2},
   \qquad R\reverse R = 1 ,
 \]
 where $\reverse{\;\cdot\;}$ is the reverse of
 App.~\ref{app:derivations}. One expression thus replaces the
 separate $4\times4$, rank-two and rank-three transformation laws of
 tensor notation: the boost or rotation is specified once, by the
 bivector $B$ in its plane, and applied unchanged to momenta
 (grade one), decay planes (grade two) and oriented volumes (grade
 three). Two uses recur in the text. First, changing frame: with $R$
 the rotor of the boost into a top rest frame, $\cos\theta^*$ is the
 grade-zero part of a product of rotated momenta
 (Sec.~\ref{sec:grades}), and no separate frame-specific formula is
 needed per observable. Second, testing invariance: a candidate
 input is invariant under a subgroup precisely when it commutes with
 the corresponding rotors, which is how the reference bivector
 $\gamma_{03}$ is shown to be $\Hlhc$-invariant
 (eq.~\eqref{eq:g03inv}) while $\gamma_0$ and $\gamma_3$ separately
 are not. Composition is rotor multiplication, $R_2R_1$, and the
 inverse is $\reverse R$; the grade of $X$ is preserved because the
 sandwich is an algebra automorphism (App.~\ref{app:derivations}).
 \item \textbf{Bivector sign rules.} For $B \in \grade{\Cl(1,3)}{2}$
 the sign $\mathrm{sgn}(B^2)$ separates rotation generators
 ($B^2 < 0$ in our signature, e.g.\ $(\gamma_1\gamma_2)^2 = -1$)
 from boost generators ($B^2 > 0$, e.g.\
 $(\gamma_0\gamma_3)^2 = +1$); a generic bivector is the sum of
 one rotation and one boost in two mutually orthogonal commuting
 planes~\cite{Doran:2003}.
 \item \textbf{LHC residual subgroup.}
 $\Hlhc \equiv \SOgroup^+(1,1)_z \times \SOgroup(2)_\phi \subset
 \SOgroup^+(1,3)$.
 \item \textbf{Event multivector.} $\evMV \in
\bigl[\Cl(1,3)\otimes\Vflav\bigr]^{\oplus N}$ defined in
eq.~\eqref{eq:evMV}: one slot per reconstructed object.
 \item \textbf{Object-type space.} $\Vflav =
 \mathrm{span}_{\mathbb R}\{\,|f\rangle: f \in \mathbb F\,\}$
 with $\mathbb F$ given by eq.~\eqref{eq:Vflav}.
 \item \textbf{Missing transverse momentum.} $\MET$;
 pseudo-object encoded by eq.~\eqref{eq:MET}.
\end{itemize}

\section{Algebraic derivations}
\label{app:derivations}

\paragraph{Sign of $(p\wedge q)\,\reverse{(p\wedge q)}$ in
$(+,-,-,-)$.}
For two grade-one elements $p, q \in \Cl(1,3)$ the bivector
$B \equiv p\wedge q$ satisfies $\reverse{B} = -B$ since reverse
acts as $(-1)^{r(r-1)/2}$ on a grade-$r$ element, and a direct
expansion in $(+,-,-,-)$ gives
\begin{equation}
 (p\wedge q)^2
 \;=\; (p\!\cdot\!q)^2 \,-\, p^2\, q^2,
 \qquad
 (p\wedge q)\,\reverse{(p\wedge q)}
 \;=\; -\,(p\wedge q)^2
 \;=\; p^2\, q^2 \,-\, (p\!\cdot\!q)^2,
 \label{eq:gram2}
\end{equation}
the second equality following from $\reverse{(p\wedge q)} = -(p\wedge q)$
for any grade-two element~\cite{Doran:2003}. For two future-directed
time-like $p,q$ the reverse Cauchy--Schwarz inequality
$(p\!\cdot\!q)^2 \ge p^2\, q^2$ holds, so $B\reverse{B} \le 0$ in
our signature, and the strictly non-negative Lorentz scalar
associated with the two-blade is
\begin{equation}
 |p\wedge q|^2
 \;\equiv\; -\,(p\wedge q)\,\reverse{(p\wedge q)}
 \;=\; (p\!\cdot\!q)^2 \,-\, p^2\, q^2 \;\ge\; 0.
 \label{eq:bivnorm}
\end{equation}
A consistency check of the identity~\eqref{eq:gram2} on the
orthonormal basis pair $p = \gamma_0$ (time-like) and
$q = \gamma_1$ (space-like) gives $B = \gamma_0\gamma_1$,
$B^2 = +1$, $B\reverse{B} = -1$, and
$p^2\,q^2 - (p\!\cdot\!q)^2 = (1)(-1) - 0 = -1$; the two sides
agree, but this mixed-signature configuration lies outside the
time-like time-like case in which non-negativity of $|B|^2$ was
established. The time-like time-like case itself follows by
continuity in the parameterisation
$q = \cosh\alpha\, \gamma_0 + \sinh\alpha\, \gamma_1$, for which
\begin{equation*}
 p^2 q^2 - (p\!\cdot\!q)^2
 \;=\; 1 - \cosh^2\alpha
 \;=\; -\sinh^2\alpha,
 \qquad
 |B|^2 \;=\; \sinh^2\alpha \;\ge\; 0.
\end{equation*}

\paragraph{Gram-determinant identity for an $r$-blade.}
For a generic $r$-blade
$P_{i_1\cdots i_r} \equiv p_{i_1} \wedge \cdots \wedge p_{i_r}$
constructed from $r$ four-momenta the squared magnitude collapses
onto an $r\times r$ Gram determinant of the participating inner
products,
\begin{equation}
 P_{i_1\cdots i_r}\,\reverse{P}_{i_1\cdots i_r}
 \;=\; \det\!\bigl[\, p_{i_a}\!\cdot\!p_{i_b} \,\bigr]_{a,b=1}^{r},
 \label{eq:gramR}
\end{equation}
with the right-hand side carrying its own sign through the Minkowski
inner products and reproducing eq.~\eqref{eq:gram2} for $r = 2$. To
derive eq.~\eqref{eq:gramR} choose an orthonormal frame
$\{e_a\}_{a=1}^{r}$ of the subspace spanned by $\{p_{i_a}\}$, with
$e_a \cdot e_b = \varsigma_a\, \delta_{ab}$ and $\varsigma_a = \pm 1$
encoding the signature of the subspace inside Minkowski space. Write
$p_{i_a} = M^a{}_b\, e_b$ for a non-singular real matrix $M$. Then
$P_{i_1\cdots i_r} = (\det M)\, e_1 \wedge \cdots \wedge e_r$, and
\begin{equation*}
 P_{i_1\cdots i_r}\,\reverse{P}_{i_1\cdots i_r}
 \;=\; (\det M)^2\,
 (e_1\wedge\cdots\wedge e_r)(e_r\wedge\cdots\wedge e_1)
 \;=\; (\det M)^2\,\prod_{a=1}^{r}\varsigma_a.
\end{equation*}
The Gram matrix of $\{p_{i_a}\}$ reads
$[p_{i_a}\!\cdot\!p_{i_b}] = M\,\mathrm{diag}(\varsigma)\,M^\top$,
with determinant $(\det M)^2 \prod_a \varsigma_a$, which reproduces
eq.~\eqref{eq:gramR} without any extra signature factors: the sign
of the result is the sign of the Gram determinant in our
signature.

The collapse result is stated as Theorem~\ref{lemma:cm-collapse}
in Sec.~\ref{sec:grades}; its proof follows.

\begin{proof}
Single-grade contractions $\grade{X_r \reverse{X_r}}{0}$ reduce
directly to inner products. For $r=2$, eq.~\eqref{eq:gram2} gives
\begin{equation*}
 (p\wedge q)\,\reverse{(p\wedge q)}
 \;=\; p^2\, q^2 - (p\!\cdot\!q)^2,
\end{equation*}
and for $r=3,4$ the Gram-determinant identity
(eq.~\eqref{eq:gramR}) yields, for an $r$-blade
$P=p_1\wedge\dots\wedge p_r$,
\begin{equation*}
 P\,\reverse P
 \;=\; \det\!\bigl[p_a\!\cdot\!p_b\bigr]_{a,b=1,\dots,r},
\end{equation*}
which is a polynomial in the inner products. A general grade-$r$
multivector is a sum of blades, and any Lorentz scalar built from
such a multivector is a polynomial in the inner products by
linearity. The pseudoscalar $\grade{P_4}{4}= c_4 I$ contributes
only via $c_4^2$ to single-grade scalar combinations, and
$c_4^2 = -\det[p_a\!\cdot\!p_b]_{a,b=1,\dots,4}$ by the Gram
identity, which reduces to inner products as well.

For mixed-grade contractions $\grade{X_{r_1}\cdots X_{r_n}}{0}$ the
geometric product of two homogeneous-grade multivectors expands
grade-by-grade as~\cite[eq.~(4.42)]{Doran:2003}
\begin{equation*}
 X_r\, X_s \;=\;
 \sum_{\substack{k=|r-s|\\ k\equiv r+s\,(\mathrm{mod}\,2)}}^{\,r+s}
 \grade{X_r X_s}{k}.
\end{equation*}
For a chain of length two the grade-zero projection
$\grade{X_{r_1} X_{r_2}}{0}$ vanishes whenever $r_1\ne r_2$, because
the geometric product of two homogeneous-grade multivectors of
distinct grades has no grade-zero component (the grades on the
right run in steps of two from $|r_1-r_2|$ to $r_1+r_2$, and zero
is hit only when $r_1=r_2$). The surviving case $r_1=r_2=r$ defines
the bilinear pairing
$(X_r,Y_r)\mapsto \grade{X_r\reverse{Y_r}}{0}$ on grade-$r$
multivectors~\cite[\S4.1.3]{Doran:2003}, related to
$\grade{X_r Y_r}{0}$ by the reversion sign
$\grade{X_r Y_r}{0}=(-1)^{r(r-1)/2}\grade{X_r\reverse{Y_r}}{0}$.
The polarisation identity
\begin{equation*}
 2\grade{A\reverse B}{0}
 \;=\;
 \grade{(A+B)\reverse{(A+B)}}{0}
 - \grade{A\reverse A}{0} - \grade{B\reverse B}{0},
\end{equation*}
applied with $A,B$ both of grade $r$, then expresses
$\grade{X_{r_1}\reverse{X_{r_2}}}{0}$ as a difference of three
single-grade self-contractions $\grade{Z\reverse Z}{0}$ of grade
$r$.
For longer chains, induction on $n$ closes the reduction, over the
hypothesis that every \emph{grade-$r$ projection}
$\grade{X_{r_2}\cdots X_{r_n}}{r}$ of a shorter chain is a combination
of $r$-blades of the $p_i$ \emph{and their $I$-duals} with coefficients
polynomial in the inner products and at most linear in the $\lambda_T$
(the base case $n=1$ is a blade with unit coefficient). The duals must
be admitted because a product can leave the blade class through the
pseudoscalar: for instance
$\grade{(p_1{\wedge}p_2{\wedge}p_3{\wedge}p_4)(p_5{\wedge}p_6{\wedge}p_7)}{1}
=\lambda_{1234}\,\bigl(I\,(p_5{\wedge}p_6{\wedge}p_7)\bigr)$, a dual
vector that is not an outer product of the momenta. The class is closed:
$I$ commutes through even elements and anticommutes with odd ones, so a
product of two class members normal-orders to at most one explicit $I$
factor, $I^2=-1$ removes pairs, and a leftover single $I$ landing in a
grade-zero projection contributes $\grade{I\,Y_4}{0}=-\lambda_Y$, which
keeps the $\lambda$-linear form. Writing
$\grade{X_{r_1}\cdots X_{r_n}}{0}=
\grade{X_{r_1}(X_{r_2}\cdots X_{r_n})}{0}$ and expanding the right
factor grade-by-grade, the chain-of-two argument selects the single
grade $t=r_1$, so that $\grade{X_{r_1}\cdots X_{r_n}}{0}=
\grade{X_{r_1}\,\grade{X_{r_2}\cdots X_{r_n}}{r_1}}{0}$ reduces to
the bilinear pairing of two grade-$r_1$ objects, and the inductive
hypothesis applied to $\grade{X_{r_2}\cdots X_{r_n}}{r_1}$
(a grade-$r_1$ multivector built from a product of length $n-1$)
expresses every contribution as a polynomial in single-grade
contractions $\grade{Z\reverse Z}{0}$.
The grade-zero projection of any product of homogeneous-grade
multivectors is therefore a polynomial in single-grade scalar
contractions $\grade{Y_t \reverse{Y_t}}{0}$, which reduce to inner
products by the previous paragraph, and in pseudoscalar coefficients
$\lambda_T$ arising wherever a grade-four projection appears in the
reduction --- whether from a grade-four factor $X_a$ or from the
grade-four part of a product of lower-grade factors, as in
$\grade{(p_1\wedge p_2)(p_3\wedge p_4)}{4}$. The relevant parity is
thus the number of $\lambda$ factors after the reduction, not the
number of grade-four $X_a$. Every even power and every pairwise product collapses by the
\emph{mixed} Gram identity
\begin{equation}
 \grade{(p_1\wedge\dots\wedge p_r)\,
        \reverse{(q_1\wedge\dots\wedge q_r)}}{0}
 \;=\; \det\bigl(p_a\!\cdot\!q_b\bigr)_{a,b=1,\dots,r},
 \label{eq:gramMixed}
\end{equation}
of which eq.~\eqref{eq:gramR} is the diagonal case $q_a=p_a$ (both sides are multilinear and antisymmetric in each argument set, so
they agree once they agree on basis blades, where both reduce by the
anticommutator~\eqref{eq:cliff-anticomm} to the same determinant of
metric factors). Applied to two pseudoscalars it gives
$\lambda_T\lambda_{T'} = -\det(p_a\!\cdot\!p_b)_{a\in T, b\in T'}$,
so the result is at most linear in the $\lambda$'s jointly, of the form
$P+\sum_T \lambda_T Q^T$ claimed in the theorem. Since all
$\lambda_T$ flip sign together under spatial inversion while every
inner product is unchanged, the single common orientation bit is the
unique non-polynomial output (see Sec.~\ref{sec:grades} main text).
\end{proof}

\paragraph{Action of $\Spin^+(1,3)$ on $\Cl(1,3)$.}
A rotor $R=\exp(B/2)$ with bivector generator
$B\in\grade{\Cl(1,3)}{2}$ lies in the even subalgebra
$\Cl^+(1,3)$. Since $\reverse{B}=-B$ on any bivector, reversal
of the exponential gives
\begin{equation*}
 \reverse{R}
 \;=\; \exp(\reverse{B}/2)
 \;=\; \exp(-B/2)
 \;=\; R^{-1},
\end{equation*}
and hence $R\,\reverse{R}=\reverse{R}\,R=1$. The rotor sandwich
\begin{equation}
 \mathrm{Ad}_R:\ X\mapsto R\,X\,\reverse{R}
 \label{eq:adR}
\end{equation}
is therefore an \emph{inner algebra automorphism} of $\Cl(1,3)$.
It is $\mathbb R$-linear in $X$, and inserting the identity
$\reverse{R}R=1$ between adjacent factors yields multiplicativity
under the geometric product:
\begin{equation}
\begin{aligned}
 \mathrm{Ad}_R(X\,Y)
 \;&=\; R\,X\,Y\,\reverse{R}
 \;=\; R\,X\,(\reverse{R}\,R)\,Y\,\reverse{R}
 \\
 \;&=\; (R\,X\,\reverse{R})\,(R\,Y\,\reverse{R})
 \;=\; \mathrm{Ad}_R(X)\,\mathrm{Ad}_R(Y).
\end{aligned}
\label{eq:adR-mult}
\end{equation}
On the grade-one generators of $\Cl(1,3)$ one has the standard
covering identity
\begin{equation}
 R\,\gamma_\mu\,\reverse{R}
 \;=\; \Lambda(R)^\nu{}_\mu\,\gamma_\nu
 \;\in\;\grade{\Cl(1,3)}{1},
 \label{eq:adR-grade1}
\end{equation}
where $\Lambda(R)\in\SOgroup^+(1,3)$ is the proper orthochronous
Lorentz transformation associated with $R$ through the two-to-one
covering homomorphism
$\Spin^+(1,3)\to\SOgroup^+(1,3)$~\cite{Doran:2003}. Combining
eqs.~\eqref{eq:adR-mult} and~\eqref{eq:adR-grade1}, on a generic
$r$-blade
$P_r=p_1\wedge\cdots\wedge p_r$ built from grade-one elements
$p_a=p_a^\mu\gamma_\mu$ we obtain
\begin{equation}
 \mathrm{Ad}_R\bigl(p_1\wedge\cdots\wedge p_r\bigr)
 \;=\; (\Lambda p_1)\wedge\cdots\wedge(\Lambda p_r)
 \;\in\;\grade{\Cl(1,3)}{r}:
 \label{eq:adR-blade}
\end{equation}
$\mathrm{Ad}_R$ acts on every grade-one factor as the Lorentz
transformation $\Lambda$, and because it is an algebra
automorphism it commutes with the antisymmetrisation built
into the wedge product
$p_1\wedge\cdots\wedge p_r=
\tfrac{1}{r!}\sum_\sigma\mathrm{sgn}(\sigma)\,
p_{\sigma(1)}\cdots p_{\sigma(r)}$,
so the right-hand side of eq.~\eqref{eq:adR-blade} is again a
grade-$r$ blade. By $\mathbb R$-linearity, $\mathrm{Ad}_R$
preserves every grade subspace $\grade{\Cl(1,3)}{k}$ of
$\Cl(1,3)$ (since each is the linear span of its $r$-blades).

On grade zero the action is trivial; on grade one it is
the fundamental representation~\eqref{eq:adR-grade1}; on grade
two it is the adjoint representation acting on the
six-dimensional space of infinitesimal Lorentz transformations;
on grades three and four it is determined via the Hodge
isomorphisms
$\grade{\Cl(1,3)}{3}\xrightarrow{\cdot I^{-1}}\grade{\Cl(1,3)}{1}$
and $\grade{\Cl(1,3)}{4}\xrightarrow{\cdot I^{-1}}\grade{\Cl(1,3)}{0}$,
which intertwine the $\Spin^+(1,3)$ action because the
pseudoscalar $I$ commutes with every even-grade element of
$\Cl(1,3)$ — in particular with any rotor $R$ — so that
$\mathrm{Ad}_R(X\,I^{-1}) = \mathrm{Ad}_R(X)\,I^{-1}$.
The action on the event multivector $\evMV$ touches only the
$\Cl(1,3)$ factor of eq.~\eqref{eq:evMV}, leaving the object-type
factor $\Vflav$ invariant.

\section{Tables}
\label{app:tables}

This appendix collects the two reference tables that back the
content claims of the main text: the full per-grade dictionary of
$32$ classical observables (App.~\ref{app:tables-c1}) and the
symmetry inventory split into four sub-tables
(App.~\ref{app:tables-c2}).

\subsection{Full per-grade dictionary of classical observables}
\label{app:tables-c1}

Tab.~\ref{tab:full-dict} below catalogues $32$ classical observables
of hadron-collider events organised by the grade of the underlying
multivector channel and by the algebraic operation that extracts
them. The ``in $\grade{\evMV}{0}$?'' column tracks which observables
collapse, via Theorem~\ref{lemma:cm-collapse}, onto the inner-product
(grade-zero) sector~\cite{Boos:2008sdz, Dudko:2020qas}.

Before turning to the dictionary itself,
Tab.~\ref{tab:grade-counts} records how the number of independent
kinematic slots per grade of $\evMV$ scales with the final-state
multiplicity $N$, taken as the number of final fermions after
the decays $t\!\to\!bW$, $W\!\to\!f\bar f'$ (with neutrinos
replaced by a single $\MET$ pseudo-particle, so $N$ counts the
fermion-level objects fed to the network rather than the
reco-level jets). The counts are purely combinatorial: starting from the $N$ slots of
eq.~\eqref{eq:evMV}, grade $1$ is the set of $N$ four-momenta, grade $k\!\in\!\{2,3,4\}$
collects the $\binom{N}{k}$ wedge products
$p_{i_1}\!\wedge\!\cdots\!\wedge\! p_{i_k}$, and grade $0$
contains the $N$ on-shell masses $m_i^2$ together with the
$\binom{N}{2}$ pairwise inner products $p_i\!\cdot\! p_j$.

\begin{table}[h!]
\centering
\footnotesize
\renewcommand{\arraystretch}{1.15}
\begin{tabular}{@{}l r r r r r r@{}}
\hline\hline
\textbf{$N$} & \textbf{grade 0} & \textbf{grade 1} &
\textbf{grade 2} & \textbf{grade 3} & \textbf{grade 4} &
\textbf{Total} \\
 & ($\binom{N}{2}\!+\!N$) & ($N$) &
($\binom{N}{2}$) & ($\binom{N}{3}$) & ($\binom{N}{4}$) & slots \\
\textit{components / slot $=\!\binom{4}{k}$:}
 & $1$ & $4$ & $6$ & $4$ & $1$ & \\
\hline
$4$ (minimal example) & $10$ & $4$ & $6$ & $4$ & $1$ & $25$ \\
$6$ (e.g.\ $t\bar t$ semileptonic) & $21$ & $6$ & $15$ & $20$ & $15$ & $77$ \\
$12$ (e.g.\ $t\bar t t\bar t$ all-hadronic) & $78$ & $12$ & $66$ & $220$ & $495$ & $871$ \\
\hline\hline
\end{tabular}
\renewcommand{\arraystretch}{1.0}
\caption{Number of independent kinematic \emph{slots} per grade
of $\evMV$ as a function of the final-state multiplicity $N$
(fermion-level, $\nu\!\to\!\MET$). One \emph{slot} stands for
\emph{one $\Cl(1,3)$ multivector} of the indicated grade, attached
to a specific subset of the $N$ tokens (its $\Vflav$ label,
$|f_{i_1}\rangle\!\otimes\!\cdots\!\otimes\!|f_{i_k}\rangle$, is
inherited from those tokens and is not counted separately).
Slot counts are purely combinatorial:
$\#(\text{grade }0)\!=\!N\!+\!\binom{N}{2}$ (on-shell masses plus
pairwise inner products), $\#(\text{grade }1)\!=\!N$
(four-momenta), $\#(\text{grade }k)\!=\!\binom{N}{k}$ for
$k\!\in\!\{2,3,4\}$ (wedge products). Each slot carries
$\binom{4}{k}$ real components, the dimension of the grade-$k$
sector of $\Cl(1,3)$ (the second header row); the raw component
count on grade $k$ is therefore (slots)\,$\times\binom{4}{k}$,
e.g.\ for $N\!=\!6$: grade $1\!=\!6\!\times\!4\!=\!24$ numbers,
grade $2\!=\!15\!\times\!6\!=\!90$, grade $3\!=\!20\!\times\!4\!=\!80$,
total $230$.}
\label{tab:grade-counts}
\end{table}

Two observations qualify these counts. \emph{First}, $\Cl(1,3)$
has only finite per-grade dimension, $\dim\grade{\Cl(1,3)}{k}=
\binom{4}{k}\!\in\!\{1,4,6,4,1\}$ for $k\!=\!0,\ldots,4$. For
$N\!>\!4$ the formal $\binom{N}{k}$ wedge products are therefore
linearly dependent: they all live inside a
$\binom{4}{k}$-dimensional subspace, so the genuine geometric
content (e.g.\ a single oriented 4-volume at grade 4) saturates
already at $N\!=\!4$. The extra entries label \emph{which}
$k$-tuple was chosen, not new geometry, and are exactly the
$\lambda_{ijkl}$ pseudoscalars whose redundancy underwrites
Theorem~\ref{lemma:cm-collapse}. \emph{Second}, a practical network
does not consume the full combinatorial set: the
attention layers select a physics-motivated subset (e.g.\
$b$-jets paired with their candidate $W$-decay products) or a
top-$K$ truncation by norm, since the unweighted full set scales
as $O(N^4)$ while the resonance topology of $tWb$ already fits in
$O(N)$ tokens (Sec.~\ref{sec:tWb}).

\renewcommand{\arraystretch}{1.12}
\footnotesize
\setlength{\tabcolsep}{3pt}
\newcolumntype{R}[1]{>{\raggedright\arraybackslash}p{#1}}
\begin{longtable}{@{}r R{4.3cm} R{3.1cm} R{2.3cm} R{3.8cm}@{}}
\caption{Full per-grade dictionary of $32$ classical observables of
hadron-collider events, organised by the grade of the underlying
multivector channel of $\evMV$ and by the algebraic operation that
extracts them. Citations mark entries with a specific originating work;
the remainder are standard. Column ``in $\grade{\evMV}{0}$?'' marks which
observables collapse, via Theorem~\ref{lemma:cm-collapse}, onto the
grade-zero scalar sector: ``Y''
= invariant reduction; ``$\Hlhc$'' = reduction after the beam-axis
tensor contraction; ``N'' = genuinely new (not in the grade-zero
sector); ``n.a.'' = the object is not a blade in $\Cl(1,3)$ and
lives in the orthogonal complement of
$\bigwedge^\bullet(\mathbb R^{1,3})$.}
\label{tab:full-dict}\\
\hline\hline
\textbf{\#} & \textbf{Observable} &
\textbf{Grade $\times$ operation} & \textbf{$\grade{\evMV}{0}$?} &
\textbf{Physical meaning} \\
\hline
\endfirsthead
\multicolumn{5}{c}{\textit{Tab.~\ref{tab:full-dict} continued from previous page.}}\\
\hline\hline
\textbf{\#} & \textbf{Observable} &
\textbf{Grade $\times$ operation} & \textbf{$\grade{\evMV}{0}$?} &
\textbf{Physical meaning} \\
\hline
\endhead
\hline
\multicolumn{5}{r}{\textit{Continued on next page.}}\\
\endfoot
\hline\hline
\endlastfoot
1 & $\hat s = (\textstyle\sum_i p_i)^2$
 & $0 \times \grade{(\sum p)^2}{0}$
 & Y
 & Partonic CM energy squared. \\
2 & $\hat t = (p_a-p_c)^2$
 & $0 \times \grade{\cdot}{0}$
 & Y
 & Mandelstam $t$-channel exchange. \\
3 & $\hat u = (p_a-p_d)^2$
 & $0 \times \grade{\cdot}{0}$
 & Y
 & Mandelstam $u$-channel exchange. \\
4 & $m_S^2 = (\textstyle\sum_{i\in S} p_i)^2$
 & $0 \times \grade{\cdot}{0}$
 & Y
 & Invariant mass of any subset $S$; $W,Z,H$, top and $t\bar t$ peaks. \\
5 & $m_i^2 = p_i \cdot p_i$
 & $0 \times \grade{p_i p_i}{0}$
 & Y
 & On-shell mass of object $i$. \\
6 & $\cos\theta^*$ spin projections (helicity and optimal bases)~\cite{Mahlon:1995zn, Mahlon:1996pn, Mahlon:1999gz}
 & $0$ via rotor sandwich + dot
 & Y
 & Decay-angle spin polarimeters in the top rest frame. \\
7 & $\cos\theta_{CS}$ (Collins--Soper)
 & $0$ via rotor sandwich + dot
 & Y
 & Lepton-pair angle; $A_{FB}$ sensitivity. \\
8 & Dalitz-plot variables $(m_{ij}^2,\, m_{jk}^2)$~\cite{Dalitz:1953cp}
 & $0 \times \grade{(p_i+p_j)^2}{0}$ on $\det G_3 = 0$
 & Y
 & 3-body decay phase space; Cayley--Menger boundary. \\
9 & $p_T^f,\, \eta^f,\, \phi^f,\, y^f,\, M_T^f$\textsuperscript{a}
 & $1$ tensored with $\hat n_z$
 & $\Hlhc$
 & Lab-frame transverse momentum, pseudorapidity, azimuth,
 rapidity and transverse mass of object $f$;
 $\SOgroup(2)_\phi\!\times\!\SOgroup^+(1,1)_z$ basis. \\
10 & $\Delta R_{ij} = \sqrt{\Delta\eta^2 + \Delta\phi^2}$
 & $1$ via beam-tensored differences
 & $\Hlhc$
 & Catchment-cone distance. \\
11 & $\Delta\phi_{ij}$
 & $1$ via $B^{12}$ projection
 & $\Hlhc$
 & Azimuthal opening angle. \\
12 & $\Delta y_{ij}$
 & $1$ via $B^{0z}$ projection
 & $\SOgroup^+(1,1)_z$-only
 & Rapidity difference; longitudinal-boost invariant. \\
13 & $H_T = \sum_{\text{jets}} p_T$
 & $1$, sum of transverse norms
 & $\Hlhc$
 & Hadronic activity scalar sum. \\
14 & $S_T = H_T + \sum p_T^\ell + \MET$
 & $1$, sum of transverse norms
 & $\Hlhc$
 & Total transverse energy. \\
15 & $\MET$ pseudo-particle
 & $1$ with masked $p_z$ component
 & $\Hlhc$-only
 & Sum of invisible transverse momenta. \\
16 & $B_W = p_\ell \wedge p_\nu$~\cite{Brehmer:2024yqw, Spinner:2024hjm}
 & $2$ via outer product
 & raw: N; $|B_W|^2$: Y
 & Oriented decay-plane bivector of $W$. \\
17 & $B_t = (p_\ell+p_\nu) \wedge p_b$~\cite{Brehmer:2024yqw}
 & $2$ via outer product
 & raw: N; $|B_t|^2$: Y
 & Oriented decay-plane bivector of $t$. \\
18 & $C_{ij}$ (Bernreuther basis $k,r,n$)~\cite{Bernreuther:2013aga, Bernreuther:2015yna}
 & $2$ raw / $0$ via $\grade{B_a \reverse{B_b}}{0}$
 & inv.\ proj.: Y; raw: N
 & $t\bar t$ spin correlation. \\
19 & $S^{ij}$ (sphericity tensor)
 & symmetric rank-2; not in $\bigwedge^2$
 & n.a.
 & Eigenspectrum of momentum tensor; null-row. \\
20 & Spinor-helicity $\langle ij\rangle,\, [ij]$~\cite{Dixon:1996wi, Mangano:1990by, Maitre:2007jq}
 & $\Spin^+(1,3)\!\simeq\!\SLgroup(2,\mathbb C)$ inside the even
 subalgebra $\Cl^+(1,3)\!\simeq\!M_2(\mathbb C)$;
 not a blade in $\bigwedge^\bullet(\mathbb R^{1,3})$
 & n.a.
 & Massless QCD amplitude pair products; spinor lift of
 $\Spin^+(1,3)$. \\
21 & $S,\, A,\, P$ (sphericity, aplanarity, planarity)
 & $0$ via eigenvalues of $S^{ij}$
 & $\SOgroup(3)_{\rm lab}$ only (not boost-inv.)
 & Event-shape scalars. \\
22 & Thrust $T$
 & $0$ via $\max_{\hat n}\sum |p_i\!\cdot\!\hat n|$
 & $\SOgroup(3)_{\rm lab}$ only (not boost-inv.)
 & Maximal alignment scalar. \\
23 & Fox--Wolfram moments $H_\ell$
 & $0$ via $\sum |p_i||p_j| P_\ell(\cos\theta_{ij})$
 & $\SOgroup(3)_{\rm lab}$ only (not boost-inv.)
 & Spherical-harmonic moments. \\
24 & $n$-subjettiness $\tau_N$~\cite{Thaler:2010tr}
 & $0$ via jet-axis projections
 & $\SOgroup(3)_{\rm lab}$ only (not boost-inv.)
 & Jet-substructure axis-counting. \\
25 & Soft-drop mass $m_{\text{SD}}$~\cite{Larkoski:2014wba}
 & $0$ via $(y,\phi)$ jet-grooming
 & $\Hlhc$-only
 & Groomed jet mass. \\
26 & $\vec p_a \cdot (\vec p_b \times \vec p_c)$~\cite{Atwood:2000tu}
 & $3$ via $T^{123}$ component of $p_a\wedge p_b\wedge p_c$
 & sign: N; $|T|^2$: Y
 & P-odd triple product (motion-reversal-T-odd in the sense of
 ref.~\cite{Atwood:2000tu}; covariant-T-even,
 cf.\ Tab.~\ref{tab:symm-spacetime-disc}); CP-violation probe. \\
27 & $\det G_3 = \det[p_a\!\cdot\!p_b]_{a,b\in\{i,j,k\}}$
 & $0$ via $\grade{T \reverse T}{0}$
 & Y
 & Squared 3-volume in Minkowski. \\
28 & $\det G_4 = \det[p_a\!\cdot\!p_b]_{a,b\in\{i,j,k,l\}}$
 & $0$ via $\grade{P \reverse P}{0}$
 & Y
 & Squared 4-volume magnitude. \\
29 & $\mathrm{sgn}\grade{p_1 p_2 p_3 p_4}{4}$ (this work)~\cite{Atwood:2000tu, Bernreuther:2015yna}
 & $4$ via $\mathrm{sgn}(\lambda)$, $\lambda I = p_1\wedge\ldots\wedge p_4$
 & N (genuinely new)
 & Sign of oriented 4-volume; CP-odd. \\
30 & Bernreuther $T_5,\, B_3,\, \mathcal A_{CP}$~\cite{Bernreuther:2015yna}
 & $4$ via $\mathrm{sgn}$ grade-4 / $3$ raw
 & mostly via $\mathrm{sgn}\grade{\cdot}{4}$: N
 & CP-odd / T-odd correlations in $t\bar t$ decays. \\
31 & Forward--backward asymmetry $A_{FB}$
 & $1$ via $\mathrm{sgn}(p_z\!\cdot\!\hat n_{\text{dir}})$
 & $\Hlhc$
 & $\hat z$-projection sign asymmetry. \\
32 & Charge-asymmetry $Q_\ell\,\eta_{\text{fj}}$~\cite{AguilarSaavedra:2008zc}
 & $\Vflav \otimes \grade{\Cl(1,3)}{1}$:
 $Q_\ell$ from $\Vflav$, $\eta_{\text{fj}}$ from grade-1
 & $\Hlhc$
 & Lepton-charge $\times$ forward-jet pseudorapidity;
 single-top tag. \\ \\
\end{longtable}
\normalsize
\renewcommand{\arraystretch}{1.0}

\noindent\textsuperscript{a} For neural-network inputs we recommend
the embedding $(\cos\phi^f,\sin\phi^f)$ in place of the raw angle.

\paragraph{Per-grade allocation summary.}
The $32$ rows of Tab.~\ref{tab:full-dict} populate the grade
decomposition of $\evMV$ as follows: rows 1--8, 21--25, 27--28 sit in the grade-zero
(scalar) sector; rows 9--15 and 31 sit in the grade-one (four-vector)
sector; row 32 sits in the $\Vflav$-tensored grade-one
(charge-weighted) sector; rows 16--18 sit in the grade-two (bivector)
sector; row 26 in the grade-three sector; rows 29--30 in the grade-four
(pseudoscalar) sector. Rows 19 and 20 are the sphericity tensor $S^{ij}$ and the
spinor-helicity pair $\langle ij\rangle,\,[ij]$; both objects are
\emph{not} blades in $\Cl(1,3)$ and sit in the orthogonal complement of
$\bigwedge^\bullet(\mathbb R^{1,3})$. They are listed as null rows so that the two objects a reader may expect
to find are accounted for: the symmetric rank-two momentum tensor,
discussed in the event-shape caveat of Sec.~\ref{sec:grades}, and the
spinor-helicity brackets, which are bilinears in the even subalgebra
$\Cl^+(1,3)\simeq M_2(\mathbb C)$ rather than blades in
$\bigwedge^\bullet(\mathbb R^{1,3})$.

\subsection{Symmetry inventory}
\label{app:tables-c2}

This appendix splits the $30$ symmetries of $\evMV$ inventoried in
Sec.~\ref{sec:symm} into four functional groups: continuous spacetime
symmetries (Tab.~\ref{tab:symm-spacetime-cont}), discrete spacetime
symmetries (Tab.~\ref{tab:symm-spacetime-disc}), internal and
permutational symmetries (Tab.~\ref{tab:symm-internal}), and
approximate, ML-architectural and CP-channel symmetries
(Tab.~\ref{tab:symm-approx-ml-cp}).

\subsubsection{Continuous spacetime symmetries}
\label{app:tables-c2-cont}

\begin{table}[h!]
\centering
\footnotesize
\caption{Continuous spacetime symmetries.}
\label{tab:symm-spacetime-cont}
\renewcommand{\arraystretch}{1.15}
\begin{tabularx}{\linewidth}{@{}l p{2.6cm} p{2.4cm} X p{1.7cm} p{2.4cm}@{}}
\hline\hline
\textbf{\#} & \textbf{Name} & \textbf{Group} &
\textbf{Action on $\Cl(1,3)$} &
\textbf{On $\Vflav$} & \textbf{Enforcement} \\
\hline
A.1 & Lorentz proper orthochr. & $\Spin^+(1,3)\!\simeq\!\SLgroup(2,\mathbb C)$
 & $X \mapsto R\, X\, \reverse R$ (rotor sandwich)
 & trivial
 & strict on parton; approx.\ on reco \\
A.2 & Translation $T(4)$ & $\mathbb R^4$ abelian
 & trivial on 4-momenta (Noether-conserved);
 $x_i\!\mapsto\!x_i+a$ on positions in PGA $\Cl(1,3,1)$
 (out of scope of $\evMV$)
 & trivial
 & out (mention only) \\
A.3 & Conformal & $\mathrm{SO}(2,4)$, 15-D
 & $\Cl(2,4)$ extension
 & trivial
 & out (mention only) \\
A.4 & Dilatation & $\mathrm{SO}(1,1)_D$
 & grade-preserving scale
 & trivial
 & soft (layer-norm) \\
A.5 & Beam-axis residual $\Hlhc$ & $\SOgroup^+(1,1)_z\!\times\!\SOgroup(2)_\phi$
 & two rotor sandwiches
 & trivial
 & strict (incl.\ $\MET$) \\
A.6 & $2{+}2$ algebraic split & $\Cl(1,1)_z\,\hat\otimes\,\Cl(0,2)_\perp$
 & basis tensor decomposition
 & trivial
 & strict (basis choice) \\
\hline\hline
\end{tabularx}
\renewcommand{\arraystretch}{1.0}
\end{table}

\subsubsection{Discrete spacetime symmetries}
\label{app:tables-c2-disc}

Discrete spacetime symmetries ($P$, $T$, $C$, $CP$, $CPT$) act on
$\evMV$ as outer automorphisms of $\Cl(1,3)\otimes\Vflav$;
Tab.~\ref{tab:symm-spacetime-disc} collects their algebraic
realisation.

\paragraph{Two conventions for $T$.}
Two definitions of the Lorentz-improper time-reversal element are
in common use, and the present paper consciously employs both
labels in different contexts. The \emph{Wigner motion-reversal}
convention $\Lambda_T^{\mathrm{W}}=\mathrm{diag}(+1,-1,-1,-1)$ —
adopted by Atwood, Bernreuther et al.\ to define ``T-odd triple
products''~\cite{Atwood:2000tu, Bernreuther:2013aga,
Bernreuther:2015yna} — acts on 4-momenta as
$(E,\vec p)\!\to\!(E,-\vec p)$, and as a matrix on 4-vectors
coincides with parity $P$; the two are distinguished only by the
antiunitarity of $T$. The \emph{covariant T} convention
$\Lambda_T^{\mathrm{cov}}=\mathrm{diag}(-1,+1,+1,+1)$, adopted in
Sec.~\ref{sec:symm} of this paper and explicitly invoked in the
Grade~4 paragraph of Sec.~\ref{sec:grades} (around
eq.~\eqref{eq:pseudoscalar}), instead sends
$(E,\vec p)\!\to\!(-E,+\vec p)$: it flips the time-component and
leaves the 3-momentum unchanged. As $4\!\times\!4$ matrices on
4-vectors the two are related by an overall sign,
$\Lambda_T^{\mathrm{cov}}=-\Lambda_T^{\mathrm{W}}$, and both have
$\det=-1$, so they agree on every Lorentz-improper one-bit
channel. In particular, the grade-four pseudoscalar
$\lambda=\epsilon_{\mu\nu\rho\sigma}\,p_i^\mu p_j^\nu p_k^\rho p_l^\sigma$
(eq.~\eqref{eq:pseudoscalar}) is T-odd in either convention, so
the CP-odd channel $\mathrm{sgn}\,\lambda$ implemented by the
architecture (row G.1 of
Tab.~\ref{tab:symm-approx-ml-cp}) is convention-independent.
The two conventions \emph{disagree}, however, on the T-parity of
grade-three spatial-only observables — the triple product
$\vec p_a\!\cdot\!(\vec p_b\!\times\!\vec p_c)$ is
motion-reversal-T-odd (its three spatial momenta each flip sign
under $\Lambda_T^{\mathrm{W}}$) but covariant-T-even (its spatial
components are preserved under $\Lambda_T^{\mathrm{cov}}$); it is
P-odd in both conventions. Tab.~\ref{tab:symm-spacetime-disc}
below realises $T$ in the covariant convention, while
Tab.~\ref{tab:full-dict} retains the standard motion-reversal
``T-odd'' label of refs.~\cite{Atwood:2000tu, Bernreuther:2015yna}
on the triple-product row of Tab.~\ref{tab:full-dict} (with an inline
disambiguation). Other approaches in
the literature, including
$\gamma_T=\gamma_0$ (Wigner-T as antiunitary parity, equivalent on
4-momentum to row~B.1) and
$\gamma_T=i\gamma_1\gamma_3$ (Dirac-spinor Wigner-T, which leaves
$\Cl(1,3)$ real only after $i$ is absorbed into the antiunitary
factor), are equivalent up to relabelings and antiunitary signs;
the covariant choice is taken here because it makes
$\Lambda_T^{\mathrm{cov}}$ act non-trivially on the time-axis
generator $\gamma_0$ and trivially on $\gamma_i$, so that the
T-channel of the architecture decouples from the parity channel
B.1 at the level of the Clifford action.

\begin{table}[h!]
\centering
\footnotesize
\caption{Discrete spacetime symmetries (T realised in the
covariant convention, see the discussion of two conventions for
$T$ above). The time-reversal element is
$\gamma_T \equiv \gamma_1\gamma_2\gamma_3$, the spatial
pseudoscalar of $\Cl(1,3)$ (real-valued, with $\gamma_T^2=+1$ in
the $(+,-,-,-)$ signature, so $\gamma_T^{-1}=\gamma_T$); the
antiunitary action of row B.2 includes complex conjugation
(``c.c.''); $C_V$ denotes the action of charge conjugation on the
object-type factor $\Vflav$, exchanging particle/antiparticle
labels.}
\label{tab:symm-spacetime-disc}
\renewcommand{\arraystretch}{1.15}
\begin{tabularx}{\linewidth}{@{}l p{2.4cm} p{2.0cm} X p{2.0cm} p{2.5cm}@{}}
\hline\hline
\textbf{\#} & \textbf{Name} & \textbf{Group} &
\textbf{Action on $\Cl(1,3)$} &
\textbf{On $\Vflav$} & \textbf{Enforcement} \\
\hline
B.1 & Parity P & $\mathbb Z_2$
 & $X\mapsto\gamma_0 X\gamma_0$
 & trivial
 & not enforced (preserves CP-odd channel) \\
B.2 & Time reversal T & $\mathbb Z_2$ antiunitary
 & $X\mapsto\gamma_T X\gamma_T^{-1}$ + c.c.\ ($\gamma_T^2=+1$,
 so $\gamma_T^{-1}=\gamma_T$; covariant-T convention,
 $\Lambda_T^{\mathrm{cov}}\!=\!\mathrm{diag}(-1,+1,+1,+1)$)
 & trivial
 & not enforced (broken via CPT$\oplus$CP) \\
B.3 & Charge conjugation C & $\mathbb Z_2$
 & trivial
 & $f\to C_V f$, $f^+\!\leftrightarrow\!f^-$
 & not enforced (charge-bit embedding) \\
B.4 & CP & $\mathbb Z_2$
 & $X\mapsto\gamma_0 X\gamma_0$
 & $f\to C_V f$
 & not enforced (CP-asymmetry head reads the channel,
 Sec.~\ref{sec:outlook}) \\
B.5 & CPT & identity (group)
 & $X\!\mapsto\!\hat X$ + c.c.\ (grade-involution
 $(-1)^r$ on grade-$r$ from $P\!\cdot\!T$; trivial on
 physical bilinears by the CPT theorem)
 & $f\!\to\!C_V f$ (inherited from C)
 & automatic (CPT theorem; not implemented) \\
\hline\hline
\end{tabularx}
\renewcommand{\arraystretch}{1.0}
\end{table}

\subsubsection{Internal and permutational symmetries}
\label{app:tables-c2-int}

The internal gauge symmetries acting on $\Vflav$ and the
combinatorial $S_n$ permutations of the per-object tokens are
collected in Tab.~\ref{tab:symm-internal}.

\begin{table}[h!]
\centering
\footnotesize
\caption{Internal and permutational symmetries.}
\label{tab:symm-internal}
\renewcommand{\arraystretch}{1.15}
\begin{tabularx}{\linewidth}{@{}l p{2.4cm} p{2.4cm} l X p{2.0cm}@{}}
\hline\hline
\textbf{\#} & \textbf{Name} & \textbf{Group} &
\textbf{$\Cl(1,3)$} & \textbf{Action on $\Vflav$} &
\textbf{Enforcement} \\
\hline
C.1 & $\mathrm{U}(1)_{\text{em}}$ & $\mathrm{U}(1)$
 & trivial
 & charge-bit on $\Vflav$
 & embedding (active) \\
C.2 & EW $\mathrm{SU}(2)_L\!\times\!\mathrm{U}(1)_Y$ & 4-D
 & trivial
 & not in feature space
 & out (handled by MC) \\
C.3 & Colour $\mathrm{SU}(3)_C$ & 8-D
 & trivial
 & not in feature space
 & out (confined; MC) \\
C.4 & CKM unitarity & $\mathrm{U}(3)/\text{phases}$ (4 params)
 & trivial
 & not in feature space
 & out (handled by MC) \\
C.5 & PMNS unitarity & $\mathrm{U}(3)/\text{phases}$ (4 or 6 params)
 & trivial
 & not in feature space
 & out (handled by MC) \\
D.1 & $S_n$ within flavour type & $\prod_a S_{n_a}$
 & trivial
 & tensor-symmetric on $\Vflav$
 & strict (set-attention) \\
D.2 & Bose/Fermi statistics & $\mathbb Z_2$ amplitude
 & trivial
 & implicit in MC
 & covered by D.1 \\
D.3 & LFU $S_3$ & $S_3$ on $e/\mu/\tau$
 & trivial
 & weight-tying option
 & soft (document break by masses) \\
D.4 & Generation $S_3$ (quarks) & $S_3$
 & trivial
 & strongly broken
 & not enforced (Yukawa) \\
D.5 & Isospin $\mathrm{SU}(2)_I$ & 3-D
 & trivial
 & confined hadron multiplets
 & out (broken by EW + masses) \\
\hline\hline
\end{tabularx}
\renewcommand{\arraystretch}{1.0}
\end{table}

\subsubsection{Approximate, ML-architectural and CP-channel symmetries}
\label{app:tables-c2-approx}

Tab.~\ref{tab:symm-approx-ml-cp} lists the softer symmetries that
shape the architecture without entering as hard equivariance
constraints (dilatation, chiral and custodial symmetries, the
absence of positional encoding, and the CP-odd pseudoscalar-sign
channel).

\begin{table}[h!]
\centering
\footnotesize
\caption{Approximate, ML-architectural and CP-channel symmetries.}
\label{tab:symm-approx-ml-cp}
\renewcommand{\arraystretch}{1.15}
\begin{tabularx}{\linewidth}{@{}l p{2.6cm} p{2.4cm} l X p{2.0cm}@{}}
\hline\hline
\textbf{\#} & \textbf{Name} & \textbf{Group} &
\textbf{$\Cl(1,3)$} & \textbf{Action on $\Vflav$} &
\textbf{Enforcement} \\
\hline
E.1 & Chiral $\mathrm{SU}(N)_L\!\times\!\mathrm{SU}(N)_R$ & $2(N^2{-}1)$
 & trivial
 & trivial
 & out (confined; S$\chi$SB) \\
E.2 & Heavy-quark symmetry & $\mathrm{SU}(2)_s\!\times\!\mathrm{U}(N_h)_f$
 ($N_h$ heavy flavours, e.g.\ $b,c$)
 & trivial
 & trivial
 & out (top decays first) \\
E.3 & Custodial $\mathrm{SU}(2)_V$ & 3-D
 & trivial
 & trivial
 & out (Higgs-sector specific) \\
F.1 & Architectural set-perm. & $S_n$
 & trivial
 & tensor-symmetric (set-attention)
 & strict (by construction) \\
F.2 & Token positional shift & $\mathbb Z_n$
 & trivial
 & none (no positional encoding)
 & conditional (absent-by-design) \\
F.3 & Feature-scale & $\mathbb R_+$
 & trivial
 & layer-norm
 & soft (architectural) \\
G.1 & Pseudoscalar sign $\mathbb Z_2$ & $\mathbb Z_2$
 & $\mathrm{sgn}\grade{p_1 p_2 p_3 p_4}{4}\to-\mathrm{sgn}$
 & trivial
 & conditional (CP-asymmetry head input,
 Sec.~\ref{sec:outlook}) \\
G.2 & $\mathcal A_{CP}$ in decay rates & $\mathbb Z_2$
 & trivial
 & charge-bit anti-symmetrisation
 & embedding (implicit) \\
G.3 & EDMs & $\mathbb Z_2$
 & trivial
 & trivial
 & out (low-energy, not LHC) \\
\hline\hline
\end{tabularx}
\renewcommand{\arraystretch}{1.0}
\end{table}


\section{Token layouts of the two input branches}
\label{app:tokens}

A token (defined at first use in Sec.~\ref{sec:intro}) is a stack of
\emph{channels}, each channel a full $16$-component multivector of
$\Cl(1,3)$. The $1{+}4{+}6{+}4{+}1$ decomposition of those $16$
components is the basis of the algebra --- one scalar cell, the four
$\gamma_\mu$, the six coordinate planes $\gamma_{\mu\nu}$, the four
$\gamma_{\mu\nu\rho}$, and the pseudoscalar $\gamma_{0123}$ --- i.e.\
storage for the five grades, not content: a single four-momentum
populates only the four grade-one cells, and every nonzero higher-grade
object requires several particles, $p_i\wedge p_i=0$. The equivariant linear layers mix channels with one weight per grade
(five per channel pair), which is the entire $2\,080$-parameter
difference between the two branches of Tab.~\ref{tab:tWb-auc} ($13$
extra input channels $\times\,32$ outputs $\times\,5$).

One bookkeeping remark, since the layouts below look at first sight
unlike eq.~\eqref{eq:evMV}. A slot $p_i\otimes|f_i\rangle$ is a
\emph{simple} tensor, and the tables store its two factors separately
--- the momentum in the grade-one cells of one channel, the label as a
one-hot across grade-zero cells --- rather than as the $4\times\dim\Vflav$
dense array of coefficients. For a simple tensor with $|f_i\rangle$ a
basis vector the two are in bijection, so nothing is lost; the dense
form would only be needed for superpositions of type labels, which no
reconstruction produces.

\paragraph{Four-momentum branch (grade $0\oplus1$; $6$--$8$ tokens,
$9$ channels).} A particle token fills $12$ of its $9\times16$ cells:
\begin{center}
\begin{tabular}{llll}
\toprule
channel & grade cells & content & meaning \\
\midrule
$0$ & $1$ (four cells) & $(E,\,p_x,\,p_y,\,p_z)_i$ & measured momentum \\
$1$--$6$ & $0$ (one cell each) & type one-hot & object identity \\
$7$ & $0$ & electric charge & \\
$8$ & $0$ & $b$-tag & \\
\bottomrule
\end{tabular}
\end{center}
All grade-$2,3,4$ cells enter as zeros; the network populates them
itself --- attention mixes momenta across tokens with Lorentz-invariant
weights, and the geometric-product layer then forms $p_i\wedge p_j$
(grade $2$), triple products (grade $3$) and beyond within a token, so
that $L$ blocks generate products of depth $\sim2^L$: the closure of
Sec.~\ref{sec:evmv-def}, realised by the network dynamics. The reference rows of Tab.~\ref{tab:tWb-auc} append one or two further
tokens of the same shape, each carrying one fixed element of the
algebra in its covariant channel: two vector tokens for $\gamma_0$ and
$\gamma_3$, or a single token with the grade-two cell $\gamma_{03}$
set. Constants need no trainable embedding, which is why those rows
share one parameter count.

\paragraph{Pairing branch ($8$ tokens, $22$ channels).} The six
particle tokens are as above inside a wider layout (channel $1$ and the ten pairing-scalar channels zero for particles;
the type one-hot spans eight slots, the six particle species plus one
sentinel slot per pairing token). The two event-level pairing tokens, one per candidate
$b$-assignment $a\in\{1,2\}$, arrive with grades zero to three pre-filled by the topology candidates:
\begin{center}
\begin{tabular}{llll}
\toprule
channel & grade cells & content & meaning \\
\midrule
$0$ & $1$ (four) & $\star T_{\rm had}^{(a)}$ & dual of hadronic-top volume \\
$0$ & $2$ (six) & $T_{\rm had}^{(a)}\cap T_{\rm lep}^{(a)}$ & meet of the two decay volumes \\
$0$ & $3$ (four) & $T_{\rm had}^{(a)}=p_{j_1}\wedge p_{j_2}\wedge p_{b_a}$ & hadronic-top candidate volume \\
$1$ & $1\oplus3$ & $\star T_{\rm lep}^{(a)},\;T_{\rm lep}^{(a)}$ & leptonic side \\
$12$--$21$ & $0$ & pulls, $\sigma^{\pm}_{Wb}$, $m_W$, coplanarity, $\mathrm{sgn}\,\lambda$ & topology invariants \\
\bottomrule
\end{tabular}
\end{center}
In both branches the event is one structure --- slots $\times$ channels
$\times$ $16$ grade cells --- and the branches differ only in which
cells arrive populated. Materialising \emph{all} derived components
explicitly ($\binom{N}{2}$ bivectors, $\binom{N}{3}$ trivectors,
$\binom{N}{4}$ pseudoscalar coefficients, each as its own token) is the
upper rung of the ladder of Sec.~\ref{sec:evmv-def}, implemented
neither here nor elsewhere; the pairing tokens materialise only the
components tied to the resonance-topology hypotheses.

\end{appendix}

\bibliography{references}

\end{document}